\def
\newcommand{\loweredqedbox}{\raisebox{-0.5\baselineskip}{\llap{\openbox}}}
\newcommand{\theW}{\mathcal{W}}
\newcommand{\sebastian}[1]{[[[ \textcolor{green}{\bf Seb:} {\em #1} ]]]}
\newcommand{\yannic}[1]{[[[ \textcolor{red}{\bf Yannic:} {\em #1} ]]]}
\newcommand{\mtwo}{l}
\newcommand{\cnenner}{5c}
\newcommand{\NB}{\textit{NB}}
\newcommand{\RR}{\ensuremath{\mathbb{R}}}
\newcommand{\cl}{\mathbf{cl}}
\newcommand{\SSS}{\mathcal{S}}
\newcommand{\CCC}{\mathcal{C}}
\newcommand{\TTT}{\mathcal{T}}
\newcommand{\aEvent}{\mathcal{A}}
\newcommand{\cEvent}{\mathcal{C}}
\newcommand{\eEvent}{\mathcal{E}}
\newcommand{\xEvent}{\mathcal{X}}
\newcommand{\dbtfull}{leaf-connected tree\xspace}
\newcommand{\dbtshort}{\text{LCT}\xspace}
\newcommand{\dbts}{\text{LCTs}\xspace}
\DeclareDocumentCommand \dbt { g } {\IfNoValueTF{#1}{\ensuremath{\dbtshort}\xspace}{\ensuremath{#1\text{-}\dbtshort}\xspace}}
\newcommand{\dcup}{\ensuremath{\mathaccent\cdot\cup}}
\newcommand{\alg}{\ensuremath{\textsc{ALG}}\xspace}
\newcommand{\pull}{\ensuremath{\textsc{PULL}}\xspace}
\newcommand{\push}{\ensuremath{\textsc{PUSH}}\xspace}
\newcommand{\rpull}{\ensuremath{\textsc{RPULL}}\xspace}
\newcommand{\vpull}{\ensuremath{\textsc{VPULL}}\xspace}
\newcommand{\pushpull}{\push-\pull}
\newcommand{\pushrpull}{\push-\rpull}
\newcommand{\roundpull}{\ensuremath{\textsc{PULL}_1}}
\newcommand{\roundrpull}{\ensuremath{\textsc{RPULL}_{T}}}
\newcommand{\roundvpull}{\ensuremath{\textsc{VPULL}_{T+1}}}
\newcommand{\boldroundvpullT}{\ensuremath{\mathbf{VPULL}_{T}}}
\newcommand{\boldpull}{\ensuremath{\mathbf{PULL}}}
\newcommand{\boldrpull}{\ensuremath{\mathbf{RPULL}}}
\newcommand{\boldvpull}{\ensuremath{\mathbf{VPULL}}}
\newcommand{\Srpull}{\ensuremath{S_{T}^{\rpull}}}
\newcommand{\Svpull}{\ensuremath{S_{T+1}^{\vpull}}}
\newcommand{\Spull}{\ensuremath{S_{1}^{\pull}}}
\newcommand{\Svpullz}{\ensuremath{X^{\textsc{VP}}}}
\newcommand{\Spullz}{\ensuremath{X^\textsc{P}}}
\newcommand{\GoodExecution}{\ensuremath{\mathbb{G}}}
\newcommand{\ddelta}{\ensuremath{\frac{\Delta}{\delta}}}
\newcommand{\ddlogn}{\ensuremath{\frac{\Delta}{\delta}\log n}}
\newcommand\lab{\left[} 
\newcommand\rab{\right]} 
\DeclareMathOperator{\E}{\mathbf{E}}
\let\Pr\relax
\DeclareMathOperator{\Pr}{\mathbf{P}}
\newcommand{\EOf}[1]{\E\mathopen{}\lab #1 \rab\mathclose{}}
\newcommand{\bigO}{\ensuremath{O}}
\newtheorem{theorem}{Theorem}[section]
\newtheorem{lemma}[theorem]{Lemma}
\newtheorem{claim}[theorem]{Claim}
\newtheorem{corollary}[theorem]{Corollary}
\newtheorem{definition}[theorem]{Definition}
\newenvironment{proofsketch}{\begin{proof}[Proof Sketch]}{\end{proof}}
\newcommand{\false}{\ensuremath{\mathbf{false}}}
\newcommand{\true}{\ensuremath{\mathbf{true}}}
\newcommand{\Sstate}{\texttt{informed}\xspace}
\newcommand{\Ustate}{\texttt{uninformed}\xspace}
\newcommand{\state}{\ensuremath{\textit{state}_v}}
\newcommand{\waiting}{\ensuremath{\textit{tokenReceived}_v}}
\newcommand{\ack}{\textit{token}}
\newcommand{\badexec}{\textit{BE}}
\newcommand{\msg}{\textit{msg}}
\newcommand{\textpath}{\textit{path}}
\algnewcommand\algorithmicswitch{\textbf{switch}}
\algnewcommand\algorithmiccase{\textbf{case}}
\algnewcommand{\NoAlignComment}[1]{\hspace{1cm}\textit{// #1}}
\algnewcommand{\FullLineComment}[1]{\Statex\textit{// #1}}
\algnewcommand\algorithmicwithprob{\textbf{with probability}}
\algnewcommand\algorithmicotherwise{\textbf{otherwise}}
\algnewcommand\algorithmicdbupdate{\textbf{update}}
\algnewcommand{\DBUpdate}[1]{\State \algorithmicdbupdate\ #1\ \textbf{with: }}
\algnewcommand{\algorithmicgoto}{\textbf{go to}}%
\algnewcommand{\Goto}[1]{\State \algorithmicgoto~\ref{#1}}%
\newcommand{\eps}{\epsilon}
\newcommand{\hide}[1]{}
\newcommand{\dotcup}{\dcup}
\newcommand{\set}[1]{\ensuremath{\left\{#1\right\}}}
\DeclareDocumentCommand \whp { g } {\IfNoValueTF{#1}{\ensuremath{\text{w.h.p.}}\xspace}{\ensuremath{\text{w.h.p.}(#1)}\xspace}}
\DeclareDocumentCommand \Whp { g } {\IfNoValueTF{#1}{\ensuremath{\text{W.h.p.}}\xspace}{\ensuremath{\text{W.h.p.}(#1)}\xspace}}
\newcommand{\wcp}{\text{w.c.p.}\xspace}
\newcommand{\paraclose}[1]{\vspace{0.2em}\noindent\textbf{#1.}~}
\newcommand{\FullOrShort}{short}
	  \newcommand{\fullOnly}[1]{#1}
	  \newcommand{\shortOnly}[1]{}
	 \newcommand{\fullOnly}[1]{}
	 \newcommand{\shortOnly}[1]{#1}
\newcommand{\remark}[2][]
{{\color{green}\ifthenelse{\equal{#1}{}}{Remark:}{Remark (#1):} #2}}
\date{}
\title{Rumor Spreading with Bounded In-Degree}
\newcommand{\authornamestyle}[1]{#1}
\newcommand{\authordeptstyle}[1]{{\small #1}}
\newcommand{\authorunivstyle}[1]{{\small #1}}
\newcommand{\authorcitystyle}[1]{{\small #1}}
\newcommand{\authormailstyle}[1]{{\small #1}}
\author{
  \authornamestyle{
    Sebastian Daum
  }\\ 
  \authordeptstyle{
    Dept.\ of Comp.\ Science
  }\\
  \authorunivstyle{
  U.\ of Freiburg, Germany
  }\\
  \authorcitystyle{
  Freiburg, Germany
  }\\
  \authormailstyle{
    sdaum@cs.uni-freiburg.de
  }
  \and %
  \authornamestyle{
    Fabian Kuhn
  }\\ 
  \authordeptstyle{
    Dept.\ of Comp.\ Science
  }\\
  \authorunivstyle{
    U.\ of Freiburg, Germany
  }\\
  \authorcitystyle{
    Freiburg, Germany
  }\\
  \authormailstyle{
    kuhn@cs.uni-freiburg.de
  }
  \and %
  \authornamestyle{
    Yannic Maus
  }\\ 
  \authordeptstyle{
    Dept.\ of Comp.\ Science
  }\\
  \authorunivstyle{
    U.\ of Freiburg, Germany
  }\\
  \authorcitystyle{
    Freiburg, Germany
  }\\
  \authormailstyle{
    yannic.maus@cs.uni-freiburg.de
  }
}
\author{ 
\alignauthor
	Sebastian Daum\\
        \affaddr{University of Freiburg}\\
	\affaddr{Freiburg, Germany}\\
	\email{\mbox{sdaum@cs.uni-freiburg.de}}\\
\alignauthor
	Fabian Kuhn\\
        \affaddr{University of Freiburg}\\
	\affaddr{Freiburg, Germany}\\
	\email{kuhn@cs.uni-freiburg.de}\\
\alignauthor
	Christof Sch\"otz\\
	\affaddr{University of Freiburg}\\
	\affaddr{Freiburg, Germany}\\
	\email{schoetzc@informatik.uni-freiburg.de}
}
\newcommand{\loweredqedbox}{\raisebox{-0.5\baselineskip}{\llap{\openbox}}}
\newcommand{\theW}{\mathcal{W}}
\newcommand{\sebastian}[1]{[[[ \textcolor{green}{\bf Seb:} {\em #1} ]]]}
\newcommand{\yannic}[1]{[[[ \textcolor{red}{\bf Yannic:} {\em #1} ]]]}
\newcommand{\mtwo}{l}
\newcommand{\cnenner}{5c}
\newcommand{\NB}{\textit{NB}}
\newcommand{\RR}{\ensuremath{\mathbb{R}}}
\newcommand{\cl}{\mathbf{cl}}
\newcommand{\SSS}{\mathcal{S}}
\newcommand{\CCC}{\mathcal{C}}
\newcommand{\TTT}{\mathcal{T}}
\newcommand{\aEvent}{\mathcal{A}}
\newcommand{\cEvent}{\mathcal{C}}
\newcommand{\eEvent}{\mathcal{E}}
\newcommand{\xEvent}{\mathcal{X}}
\newcommand{\dbtfull}{leaf-connected tree\xspace}
\newcommand{\dbtshort}{\text{LCT}\xspace}
\newcommand{\dbts}{\text{LCTs}\xspace}
\DeclareDocumentCommand \dbt { g } {\IfNoValueTF{#1}{\ensuremath{\dbtshort}\xspace}{\ensuremath{#1\text{-}\dbtshort}\xspace}}
\newcommand{\dcup}{\ensuremath{\mathaccent\cdot\cup}}
\newcommand{\alg}{\ensuremath{\textsc{ALG}}\xspace}
\newcommand{\pull}{\ensuremath{\textsc{PULL}}\xspace}
\newcommand{\push}{\ensuremath{\textsc{PUSH}}\xspace}
\newcommand{\rpull}{\ensuremath{\textsc{RPULL}}\xspace}
\newcommand{\vpull}{\ensuremath{\textsc{VPULL}}\xspace}
\newcommand{\pushpull}{\push-\pull}
\newcommand{\pushrpull}{\push-\rpull}
\newcommand{\roundpull}{\ensuremath{\textsc{PULL}_1}}
\newcommand{\roundrpull}{\ensuremath{\textsc{RPULL}_{T}}}
\newcommand{\roundvpull}{\ensuremath{\textsc{VPULL}_{T+1}}}
\newcommand{\boldroundvpullT}{\ensuremath{\mathbf{VPULL}_{T}}}
\newcommand{\boldpull}{\ensuremath{\mathbf{PULL}}}
\newcommand{\boldrpull}{\ensuremath{\mathbf{RPULL}}}
\newcommand{\boldvpull}{\ensuremath{\mathbf{VPULL}}}
\newcommand{\Srpull}{\ensuremath{S_{T}^{\rpull}}}
\newcommand{\Svpull}{\ensuremath{S_{T+1}^{\vpull}}}
\newcommand{\Spull}{\ensuremath{S_{1}^{\pull}}}
\newcommand{\Svpullz}{\ensuremath{X^{\textsc{VP}}}}
\newcommand{\Spullz}{\ensuremath{X^\textsc{P}}}
\newcommand{\GoodExecution}{\ensuremath{\mathbb{G}}}
\newcommand{\ddelta}{\ensuremath{\frac{\Delta}{\delta}}}
\newcommand{\ddlogn}{\ensuremath{\frac{\Delta}{\delta}\log n}}
\newcommand\lab{\left[} 
\newcommand\rab{\right]} 
\DeclareMathOperator{\E}{\mathbf{E}}
\let\Pr\relax
\DeclareMathOperator{\Pr}{\mathbf{P}}
\newcommand{\EOf}[1]{\E\mathopen{}\lab #1 \rab\mathclose{}}
\newcommand{\bigO}{\ensuremath{O}}
\newtheorem{theorem}{Theorem}[section]
\newtheorem{lemma}[theorem]{Lemma}
\newtheorem{claim}[theorem]{Claim}
\newtheorem{corollary}[theorem]{Corollary}
\newtheorem{definition}[theorem]{Definition}
\newcommand{\false}{\ensuremath{\mathbf{false}}}
\newcommand{\true}{\ensuremath{\mathbf{true}}}
\newcommand{\Sstate}{\texttt{informed}\xspace}
\newcommand{\Ustate}{\texttt{uninformed}\xspace}
\newcommand{\state}{\ensuremath{\textit{state}_v}}
\newcommand{\waiting}{\ensuremath{\textit{tokenReceived}_v}}
\newcommand{\ack}{\textit{token}}
\newcommand{\badexec}{\textit{BE}}
\newcommand{\msg}{\textit{msg}}
\newcommand{\textpath}{\textit{path}}
\algnewcommand\algorithmicswitch{\textbf{switch}}
\algnewcommand\algorithmiccase{\textbf{case}}
\algnewcommand{\NoAlignComment}[1]{\hspace{1cm}\textit{// #1}}
\algnewcommand{\FullLineComment}[1]{\Statex\textit{// #1}}
\algnewcommand\algorithmicwithprob{\textbf{with probability}}
\algnewcommand\algorithmicotherwise{\textbf{otherwise}}
\algnewcommand\algorithmicdbupdate{\textbf{update}}
\algnewcommand{\DBUpdate}[1]{\State \algorithmicdbupdate\ #1\ \textbf{with: }}
\algnewcommand{\algorithmicgoto}{\textbf{go to}}%
\algnewcommand{\Goto}[1]{\State \algorithmicgoto~\ref{#1}}%
\newcommand{\eps}{\epsilon}
\newcommand{\hide}[1]{}
\newcommand{\dotcup}{\dcup}
\newcommand{\set}[1]{\ensuremath{\left\{#1\right\}}}
\DeclareDocumentCommand \whp { g } {\IfNoValueTF{#1}{\ensuremath{\text{w.h.p.}}\xspace}{\ensuremath{\text{w.h.p.}(#1)}\xspace}}
\DeclareDocumentCommand \Whp { g } {\IfNoValueTF{#1}{\ensuremath{\text{W.h.p.}}\xspace}{\ensuremath{\text{W.h.p.}(#1)}\xspace}}
\newcommand{\wcp}{\text{w.c.p.}\xspace}
\newcommand{\paraclose}[1]{\vspace{0.2em}\noindent\textbf{#1.}~}
\newcommand{\FullOrShort}{short}
	  \newcommand{\fullOnly}[1]{#1}
	  \newcommand{\shortOnly}[1]{}
	 \newcommand{\fullOnly}[1]{}
	 \newcommand{\shortOnly}[1]{#1}
\newcommand{\remark}[2][]
{{\color{green}\ifthenelse{\equal{#1}{}}{Remark:}{Remark (#1):} #2}}
\date{}
\title{Rumor Spreading with Bounded In-Degree}
\newcommand{\authornamestyle}[1]{#1}
\newcommand{\authordeptstyle}[1]{{\small #1}}
\newcommand{\authorunivstyle}[1]{{\small #1}}
\newcommand{\authorcitystyle}[1]{{\small #1}}
\newcommand{\authormailstyle}[1]{{\small #1}}
\author{
  \authornamestyle{
    Sebastian Daum
  }\\ 
  \authordeptstyle{
    Dept.\ of Comp.\ Science
  }\\
  \authorunivstyle{
  U.\ of Freiburg, Germany
  }\\
  \authorcitystyle{
  Freiburg, Germany
  }\\
  \authormailstyle{
    sdaum@cs.uni-freiburg.de
  }
  \and %
  \authornamestyle{
    Fabian Kuhn
  }\\ 
  \authordeptstyle{
    Dept.\ of Comp.\ Science
  }\\
  \authorunivstyle{
    U.\ of Freiburg, Germany
  }\\
  \authorcitystyle{
    Freiburg, Germany
  }\\
  \authormailstyle{
    kuhn@cs.uni-freiburg.de
  }
  \and %
  \authornamestyle{
    Yannic Maus
  }\\ 
  \authordeptstyle{
    Dept.\ of Comp.\ Science
  }\\
  \authorunivstyle{
    U.\ of Freiburg, Germany
  }\\
  \authorcitystyle{
    Freiburg, Germany
  }\\
  \authormailstyle{
    yannic.maus@cs.uni-freiburg.de
  }
}
\author{ 
\alignauthor
	Sebastian Daum\\
        \affaddr{University of Freiburg}\\
	\affaddr{Freiburg, Germany}\\
	\email{\mbox{sdaum@cs.uni-freiburg.de}}\\
\alignauthor
	Fabian Kuhn\\
        \affaddr{University of Freiburg}\\
	\affaddr{Freiburg, Germany}\\
	\email{kuhn@cs.uni-freiburg.de}\\
\alignauthor
	Christof Sch\"otz\\
	\affaddr{University of Freiburg}\\
	\affaddr{Freiburg, Germany}\\
	\email{schoetzc@informatik.uni-freiburg.de}
}
\begin{document}
\setcounter{page}{0}

%
\ifcsname includeHeader\endcsname%
    \def{}%
	\def{}%
\else%
    \def{

}%
	\def{
}%
\fi%
%


\maketitle

\setcounter{page}{0}
\thispagestyle{empty}

\begin{abstract}
  We consider a variant of the well-studied gossip-based model of
  communication for disseminating information in a network, usually represented by a graph.
  Classically, in each time unit, every node $u$ is allowed to contact
  a single random neighbor $v$. If $u$ knows the data (rumor) to be
  disseminated, node $v$ learns it (known as \emph{push}) and if node $v$
  knows the rumor, $u$ learns it (known as \emph{pull}). While in the
  classic gossip model, each node is only allowed to contact a single
  neighbor in each time unit, each node can possibly be contacted by many
  neighboring nodes. If, for example, several nodes pull from
  the same common neighbor $v$, $v$ manages to inform all these nodes
  in a single time unit.

  In the present paper, we consider a restricted model where at each
  node only one incoming request can be served in one time unit. As long as only a
  single piece of information needs to be disseminated, this does not
  make a difference for push requests. It however has a significant
  effect on pull requests. If several nodes try to pull the
  information from the same common neighbor, only one of the requests
  can be served. In the paper, we therefore concentrate on this weaker
  pull version, which we call \emph{restricted pull}.

  We distinguish two versions of the restricted pull protocol
  depending on whether the request to be served among a set of pull
  requests at a given node is chosen adversarially or uniformly at
  random. As a first result, we prove an exponential separation
  between the two variants. We show that there are instances where if
  an adversary picks the request to be served, the restricted pull
  protocol requires a polynomial number of rounds whereas if the
  winning request is chosen uniformly at random, the restricted pull
  protocol only requires a polylogarithmic number of rounds to inform
  the whole network. Further, as the main technical contribution, we
  show that if the request to be served is chosen randomly, the
  slowdown of using restricted pull versus using the classic pull
  protocol can w.h.p.\ be upper bounded by
  $O(\Delta / \delta \cdot\log n)$, 
  where $\Delta$ and $\delta$ are
  the largest and smallest degree of the network.
\end{abstract}

\section{Introduction}
\label{sec:introduction}
Gossip-based communication models have received a lot of attention as
a simple, fault-tolerant, and in particular also scalable way to
communicate and disseminate information in large networks. The classic
application of gossip-based network protocols is the spreading of
information in the network, specifically the problem of broadcasting a
single piece of information to all nodes of a network, in this context
also often known as \emph{rumor spreading}, e.g.,
\cite{demers87,feige90,frieze85,chierichetti:2010,GIAK11,karp00}. On top of this,
gossip-based protocols have for example also been proposed for
applications such as maintaining consistency in a distributed database
\cite{demers87}, for data aggregation problems
\cite{kempe03,mosk-aoyama08,pandurangan10}, or even to run arbitrary distributed
computations \cite{simulateLOCAL}.

The best studied gossip strategy is the \emph{random phone call
  model}, which was first considered in \cite{frieze85}. We are given
a network graph $G=(V,E)$ where initially a source node $s\in V$ knows
some piece of information (\emph{rumor}) and the objective is to
disseminate the rumor to all nodes of $G$. Typically, time is divided
into synchronized rounds, where in each round, every node can contact
a random neighbor and if $u$ contacts $v$, an interaction between $u$
and $v$ is initiated for the current round. For spreading a rumor, two
basic modes of operation are distinguished. Nodes that already know
the rumor can \push the information to the randomly chosen neighbor
\cite{frieze85} or nodes that do not yet know the rumor can \pull the
information from the randomly chosen neighbor \cite{demers87}. In much
of the classic work, the network $G$ is assumed to be a complete
graph. In that case, it is not hard to see that \push and \pull both
succeed in $\bigO(\log n)$ rounds and that the total number of
interactions of each node can also be bounded by $\bigO(\log n)$. In
\cite{karp00}, it is shown that when combining \push and \pull (in the
following referred to as \pushpull), the average number of
interactions per node is only $\Theta(\log\log n)$.

Mostly in recent years, \push, \pull, and \push-\pull have also been
studied for more general network topologies, e.g.,
\cite{weakconductance,feige90,fountoulakis12,chierichetti:2010,GIAK11,vertexexpansion,giak:2014:vertexexpansion},
with \cite{chierichetti:2010,GIAK11} and \cite{vertexexpansion,giak:2014:vertexexpansion} studying the time complexity as a function 
of the graph's conductance and vertex expansion, respectively.
%
E.g., in
\cite{GIAK11}, it is shown that with high probability (w.h.p.), the
running time of \pushpull can always be upper bounded by
$\bigO((\log n)/\phi(G))$, where $n$ is the number of nodes and $\phi(G)$
is the conductance of the network graph $G$.

While in gossip protocols, each node can initiate at most one
interaction with some neighbor, even if each node contacts a uniformly
random neighbor, the number of interactions a node needs to
participate in each round can be quite large. In complete graphs and more
generally in regular graphs, the total number of interactions per node
and round can easily be upper bounded by $\bigO(\log n)$. However in
general topologies a single node might be contacted by up to
$\Theta(n)$ neighboring nodes. As an extreme case, consider a star
network where a single center node is connected to $n-1$ leaf nodes.
Even if the rumor initially starts at a leaf node, \pushpull manages
to disseminate the rumor to all nodes in only $2$ rounds. Clearly, in
these $2$ rounds, the center node has to interact with all $n-1$ leaf
nodes. In fact, all recent papers which study the time complexity of the random \pushpull
protocol critically rely on the fact that a node can be contacted by many nodes
in a single round, e.g., \cite{GIAK11}.
In some cases, this behavior might limit the implementability and
thus the applicability of the proven results for this gossip protocol. 
In order to obtain scalable systems, ideally, we would like to not
only limit the number of interactions each node initiates, but also
the number of interactions each node participates in.

In the present paper, we therefore study a weaker variant of the
described random gossip algorithms. In each round, every node can
still initiate a connection to one uniformly random neighbor. However,
if a single node receives several connection requests, only one of
these connections is actually established. When disseminating a rumor
by using the \push protocol, this restriction does not limit the
progress of the algorithm. In a given round, a node $v$ learns the
rumor if and only if at least one \push request arrives at
$v$. However, when using the \pull protocol, the restriction can have
a drastic effect. If a node $v$ receives several \pull requests from
several nodes that still need to learn the rumor, only one of these
nodes can actually learn the rumor in the current round. In our paper,
we therefore concentrate on the \pull protocol and we define \rpull
(\emph{restricted} \pull) as the described weak variant of the \pull
algorithm: In each \rpull round, every node that still needs to learn
the rumor contacts a random neighbor. At every node that knows the
rumor, one of the incoming requests (if there are any) is selected and
the rumor is sent to the corresponding neighbor. By \pushrpull we
denote the combination of $\rpull$ with a simultaneous execution of
the classic $\push$ protocol.

\paragraph*{Contributions} 
We first consider two versions of the \rpull protocol which differ in
the way how one of the incoming requests is selected.  Assume that in a
given round some informed node $v$ receives \rpull requests from a set
of neighbors $R_v$. In the \emph{adversarial} \rpull protocol, an
(adaptive) adversary picks some node $u\in R_v$ which will then learn
the rumor. In the \emph{random} \rpull protocol, we assume that a
uniformly random node $u\in R_v$ learns the rumor (chosen
independently for different nodes and rounds). While the choice of
which neighbor a node (actively) contacts with a request is under the control of
the protocol, it is not necessarily clear how one of the incoming
requests in $R_v$ is chosen. If the node can only answer one request
per time unit and the requests do not arrive at exactly the same time,
the first request might be served and all others dropped. Or even if
requests arrive at the same time, it might be the underlying network
infrastructure or operating system which picks one request and drops
the others. If it is reasonable to assume that the incoming requests
are served probabilistically and independently, we believe that random
\rpull provides a good model. Otherwise, the adversarial assumption allows to study the worst-case behavior.

As a first result, we show that there
are instances for which there is an exponential gap between the
running times of the two \rpull variants. We give an instance where for every source node
the random \rpull protocol informs all nodes of the network in
polylogarithmic time, w.h.p., whereas, for every source, the adversarial \rpull algorithm
requires time $\tilde{\Omega}(\sqrt{n})$ to even succeed with a
constant probability.

In the second part of the paper, we have a closer look at the
performance of the random \rpull protocol. Consider a graph $G$ and
let $\delta$ and $\Delta$ denote the smallest and largest degree of
$G$. In each round, in expectation, each informed node receives at
most $\Delta/\delta$ requests. Hence, if an uninformed node $u$
sends an \rpull request to an informed node, $u$ should receive the
rumor with probability at least $\Omega(\delta/\Delta)$. Consequently,
intuitively, the slowdown of using random \rpull instead of the usual
\pull protocol should not be more than  $\tilde{\bigO}(\Delta/\delta)$. \footnote{Here $\tilde{\bigO}$ hides $\log(n)$ factors.} 
We prove that this intuition is correct. For every given instance, we
show that if the \pull algorithm informs all nodes in $\TTT$ rounds with
probability $p$, for the same instance, the random \rpull algorithm
manages to reach all nodes in time $\bigO\big(\TTT\cdot\frac{\Delta}{\delta}\cdot
\log n\big)$ with probability $(1-o(1))p$.%
\footnote{Actually, $\frac{\Delta}{\delta}$ can be replaced by $\max_{\set{u,v} \in E} \deg(u)/\deg(v)$ in all parts of the paper.}
While the statement might seem
very intuitive, its formal proof turns out quite involved. Formally,
we prove a stronger statement and show that a single round of the
\pull protocol is w.h.p.\ \emph{stochastically dominated} by
$\bigO\big(\frac{\Delta}{\delta}\cdot \log n\big)$ rounds of random \rpull in the
following sense. We give a coupling between the random processes
defined by \pull and random \rpull such that for every start
configuration, w.h.p., the set of nodes informed after
$\bigO\big(\frac{\Delta}{\delta}\cdot\log n\big)$ rounds of random \rpull is a
superset of the set of nodes informed in a single \pull round. The same holds for simulating one round of \pushpull with \pushrpull. 
A similar coupling between rumor spreading algorithms has been done in \cite{AcanMehrabianWorlmald14} where the authors couple $\log(n)$ rounds of asynchronous-  with one round of synchronous \pushpull. A coupling between \pull and \rpull in the classic sense, i.e., a coupling which does relinquish the \whp term does not exist. We also
show that for such a round-by-round analysis, our bound is tight. That
is, there are configurations where $\Omega\big(\frac{\Delta}{\delta}\log
n\big)$ random \rpull rounds are needed to dominate a single \pull
round with high probability. 

\paragraph*{Notation and Preliminaries}
Let $G=(V,E)$ be the $n$-node network graph. For a node $u\in V$, we
use $N(u)$ to denote the set of neighbors of $u$ and $d_u=d(u):=|N(u)|$ to
denote its degree. Given a set of nodes $S\subseteq V$, we
define $N_S(u):=N(u)\cap S$ to be the set of $u$'s neighbors in $S$
and $d_S(u):=|N_S(u)|$ for the number of neighbors of $u$ in $S$. The
smallest and largest degrees of $G$ are denoted by $\delta$ and
$\Delta$, respectively. For a set $V' \subseteq V$ we denote with
$G[V']$ the graph induced by $V'$. To indicate a disjoint union of two sets, i.e., $A\cup B$ with $A \cap B=\emptyset$, we write $A \dcup B$. For a set of natural numbers $\set{1, \dots, k}$ we only write $[k]$.

When analyzing the progress of an algorithm, we typically use $S$ to
be the set of initially informed nodes and $U$ to be the initially
uninformed nodes. Given some algorithm \alg, the set $S_t^\alg$
denotes the set of informed nodes after $t$ rounds of \alg when
starting with the set $S_0^\alg := S$ of informed nodes.

\section{Separation of Adversarial and Random \boldmath\rpull}
\label{sec:separation}

We want to show that the adversarial \rpull can be exponentially slower than the randomized \rpull on general graphs. To show this, we first establish results on the run time of both algorithms on trees. These results might also be of independent interest.

In a tree network let $p_{v,u}=(v=v_0,v_1,v_2,\dots,v_q=u)$ denote the unique path from $v$ to $u$, though we use that notation also for the set of nodes on that path, i.e., $p_{v,u}=\set{v,v_1,v_2,\dots,v_{q-1},u}$. Define $D_p := \sum_{w \in p} d_w$, i.e., the sum of all degrees on the path $p$.



The next lemma shows that on a tree any form of \rpull is asymptotically as fast as \pull plus an additive term in the order of the degree of the node that initially has the rumor.

\begin{lemma}\label{thm:treepath}
Let $G$ be a tree network with $S_0=\set{r}$ and let $u$ be a node in $U_0$.
Furthermore, let $\tau$ be the first round in which $u \in S_\tau$ holds, i.e., the number of rounds until $u$ gets informed.

\begin{myenumerate}[(1)]
  \item \label{thm:treepath:pull} $\EOf{\tau}=\Theta(D_{p_{r,u}}-d_r)$ for  \pull, 
  \item \label{thm:treepath:rpull} $\EOf{\tau}=\Omega(D_{p_{r,u}}-d_r)$ for every type of \rpull, 
  \item \label{thm:treepath:adversary} $\EOf{\tau}=\bigO(D_{p_{r,u}})$ for adversarial \rpull.
\end{myenumerate}   
\end{lemma}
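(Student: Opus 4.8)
The plan is to track the rumor as it travels along the unique path $p_{r,u}=(r=v_0,v_1,\dots,v_q=u)$. Since $G$ is a tree, the subtree hanging off $v_{i-1}$ towards $v_i$ is separated from $r$ by $v_{i-1}$, so $v_i$ can become informed only through its path-neighbor $v_{i-1}$, and only in a round in which $v_i$ actively contacts $v_{i-1}$. Writing $\tau_i$ for the round in which $v_i$ is informed, this gives $0=\tau_0<\tau_1<\dots<\tau_q=\tau$ and $\tau=\sum_{i=1}^{q}\Delta_i$ with $\Delta_i:=\tau_i-\tau_{i-1}$. I would bound each increment $\Delta_i$ separately and then sum, using $\sum_{i=1}^q d_{v_i}=D_{p_{r,u}}-d_r$.

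For \pull (part~(1)), once $v_{i-1}$ is informed, $v_i$ contacts $v_{i-1}$ in each round independently with probability $1/d_{v_i}$, and because \pull imposes no restriction this always succeeds; by memorylessness $\Delta_i$ is geometric with mean exactly $d_{v_i}$ and independent of $\tau_{i-1}$, so $\EOf{\tau}=\sum_{i=1}^q d_{v_i}=D_{p_{r,u}}-d_r=\Theta(D_{p_{r,u}}-d_r)$. For part~(2) and any \rpull variant, the event that $v_i$ gets informed in a round is contained in the event that $v_i$ contacts $v_{i-1}$, which has probability $1/d_{v_i}$; being served can only lower this, so the per-round success probability is at most $1/d_{v_i}$, each $\Delta_i$ stochastically dominates a geometric with mean $d_{v_i}$, and $\EOf{\tau}\ge\sum_{i=1}^q d_{v_i}=\Omega(D_{p_{r,u}}-d_r)$.

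The heart of the lemma is the upper bound for adversarial \rpull (part~(3)). I would fix the phase in which $v_{i-1}$ is the last informed node on the path and let $W$ be the set of currently uninformed off-path neighbors of $v_{i-1}$; initially $|W_0|\le d_{v_{i-1}}$, each such neighbor can be informed only by $v_{i-1}$, and once informed it stops contacting $v_{i-1}$. Classify each round of the phase as \emph{off-path active} if some node of $W$ contacts $v_{i-1}$ and \emph{off-path inactive} otherwise; whether a round is active depends only on the off-path nodes' random choices and is therefore independent of $v_i$'s choices. Two observations drive the bound. First, in every off-path active round that does not already end the phase the adversary must serve one of the requesting nodes, and since serving $v_i$ would end the phase it serves a node of $W$, informing a distinct off-path neighbor; hence there are at most $|W_0|+1$ off-path active rounds in the phase. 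Second, in an off-path inactive round $v_i$ is the only possible requester of $v_{i-1}$, so if $v_i$ happens to contact $v_{i-1}$ the adversary is \emph{forced} to serve it and the phase ends. Consequently every off-path inactive round strictly before the end is one in which $v_i$ does not contact $v_{i-1}$, so the number of off-path inactive rounds is at most the index, within the subsequence of off-path inactive rounds, of the first round in which $v_i$ contacts $v_{i-1}$. Since $v_i$'s contacts are i.i.d.\ Bernoulli$(1/d_{v_i})$ and independent of which rounds are active, this index is geometric with mean $d_{v_i}$. Combining, $\EOf{\Delta_i}\le(|W_0|+1)+d_{v_i}=\bigOOf{d_{v_{i-1}}+d_{v_i}}$, and summing over the path yields $\EOf{\tau}=\bigOOf{\sum_{i=1}^q (d_{v_{i-1}}+d_{v_i})}=\bigOOf{D_{p_{r,u}}}$. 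The argument applies to every adversary, since both observations use only that the adversary must serve some incoming request and can avoid serving $v_i$ exactly when another neighbor requests.

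The main obstacle is getting the right dependence in part~(3). The naive route---bounding the number of times $v_i$ is refused by the number $|W_0|$ of off-path neighbors and multiplying by the expected $d_{v_i}$ rounds between successive contacts (e.g.\ via Wald)---yields the far too weak $\bigOOf{d_{v_{i-1}}\cdot d_{v_i}}$ and would not sum to $\bigOOf{D_{p_{r,u}}}$. The point is that the at most $|W_0|$ blocking rounds and the geometric waiting for $v_i$ must be made to \emph{add} rather than multiply, and separating rounds by off-path activity achieves exactly this: off-path neighbors are depleted at a rate oblivious to $v_i$ (one per active round), so the adversary cannot stretch the blocking across $v_i$'s slow $1/d_{v_i}$ contact rate. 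The delicate step is the independence between $v_i$'s contact sequence and the off-path depletion process, which is what legitimizes treating the first off-path-inactive contact as a clean geometric variable, and which must be verified to survive the adversary's adaptivity.
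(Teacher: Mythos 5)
Your proof is correct and follows essentially the same route as the paper's: for part~(3) the paper makes exactly your decomposition (rounds where some sibling of $v_i$ requests from $v_{i-1}$, at most $d_{v_{i-1}}$ of them since served siblings stop requesting, versus sibling-free rounds, where $v_i$'s request must be served and succeeds in expected $d_{v_i}$ tries), summed via linearity of expectation. Your explicit verification that the off-path activity is independent of $v_i$'s contact choices is a point the paper leaves implicit, but it is the same argument.
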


\begin{proof}[Proof of Lemma \ref{thm:treepath}]
(\ref{thm:treepath:pull}) We root the tree at the only informed node, $r$. Note that nodes are not aware of their own parent/child relationships.
Consider some time $t$ at which a node $r'$ on the path $p_{r,u}$ is in $S_t\setminus S_{t-1}$, i.e., it just got informed. Thus its child $u' \in p_{r,u}$ on the path is not yet informed, i.e., $u' \in U_t$. In any round $t' \geq t$, in which $u'$ is not informed yet, it requests its parent with probability $1/d_{u'}$. Thus each uninformed node $u' \in p_{r,u}\setminus\set{r}$ on the path needs $\Theta(d_{u'})$ rounds in expectation before it can get informed. Linearity of expectation proves the claim for \pull. 

(\ref{thm:treepath:rpull}) follows from the fact that \rpull is at most as fast as \pull.

(\ref{thm:treepath:adversary}) For adversarial \rpull divide all rounds $t' \geq t$ in which $u'$ is not yet informed into two types: First rounds in which at least one \emph{sibling} of $u'$, i.e., the nodes in $N(r')\backslash \{u'\}$, requests from $r'$ and secondly rounds in which no sibling of $u'$ requests from $r'$. The first type of rounds is upper bounded by $d_{r'}$ because every neighbor of $r'$ stops requesting after receiving the rumor. In expectation $u'$ gets the rumor after $d_{u'}$ rounds of type two; thus in expectation $u'$ is informed within $\bigO(d_{r'}+d_{u'})$ rounds.  Applying this recursively to all uninformed nodes on the path $p_{r,u}$, we get the claimed result via linearity of expectation.
\end{proof}

\begin{lemma}\label{lemma:treefull}
  Let $G$ be a tree network with $S_0=\set{r}$. Then in both random and adversarial \rpull it takes $\bigO\big(\max_{\textpath \, p} D_p + \Delta \log n\big)$ rounds to fully inform all nodes in $V$, \whp.
\end{lemma}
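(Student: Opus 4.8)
The plan is to upgrade the single-node expectation bound of Lemma~\ref{thm:treepath}(\ref{thm:treepath:adversary}) to a high-probability statement, to make the argument hold \emph{simultaneously} for random and adversarial \rpull, and then to take a union bound over all target nodes. Fix a node $u$, root the tree at $r$, and let $p_{r,u}=(r=v_0,v_1,\dots,v_q=u)$ be the path from $r$ to $u$. Since $G$ is a tree, the rumor can reach $v_i$ only through its parent $v_{i-1}$ on this path (the children of $v_i$ lie in $v_i$'s subtree and are informed only after $v_i$). Hence the informing time of $u$ splits as $\tau_u=\sum_{i=1}^q T_i$, where $T_i$ is the number of rounds between the moment $v_{i-1}$ becomes informed and the moment $v_i$ becomes informed. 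I would bound each $T_i$ and then show the sum concentrates.

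For the per-edge bound I reuse the two-type partition from the proof of Lemma~\ref{thm:treepath}(\ref{thm:treepath:adversary}), but track the exact number of rounds rather than only the expectation, and check that the argument is oblivious to the tie-breaking rule. Once $v_{i-1}$ is informed, call a round of segment $i$ of \emph{type~1} if at least one child of $v_{i-1}$ other than $v_i$ (a \emph{sibling} of $v_i$) requests $v_{i-1}$, and of \emph{type~2} otherwise. In a type-2 round $v_i$ is the only possible requester of $v_{i-1}$ (its parent is already informed and does not request), so if $v_i$ requests $v_{i-1}$ it is served regardless of whether the served request is chosen adversarially or uniformly at random; thus the number of type-2 rounds before $v_i$ is informed is at most a geometric variable $B_i$ with success probability $1/d_{v_i}$. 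Each type-1 round occurring before $v_i$ is informed serves (and thereby informs) a distinct sibling of $v_i$, except possibly the last one which may serve $v_i$; since $v_{i-1}$ has fewer than $d_{v_{i-1}}$ children, there are at most $d_{v_{i-1}}$ type-1 rounds, for both variants. Consequently $T_i\le d_{v_{i-1}}+B_i$, and summing gives $\tau_u\le\sum_{i=1}^{q}d_{v_{i-1}}+\sum_{i=1}^q B_i\le D_{p_{r,u}}+\sum_{i=1}^q B_i$.

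It remains to show $\sum_{i=1}^q B_i=\bigOOf{\max_{p} D_p+\Delta\log n}$ \whp. Conditioned on the history up to the start of segment $i$, the variable $B_i$ is geometric with parameter $1/d_{v_i}$ by the freshness of $v_i$'s coins, so the $B_i$ are conditionally independent geometrics; their total mean is $M':=\sum_i d_{v_i}\le D_{p_{r,u}}\le\max_p D_p$, and each mean is at most $\Delta$. I would then invoke a Chernoff-type tail bound for sums of independent geometric random variables (via the factorizing moment generating function), of the form $\PrOf{\sum_i B_i\ge 2M'+c\,\Delta\log n}\le n^{-\Omega(c)}$. The additive $\Delta\log n$ term is what supplies the high-probability tail when $M'$ is small compared to $\Delta\log n$, whereas for $M'\gtrsim\Delta\log n$ the multiplicative deviation already suffices. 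This yields $\tau_u\le D_{p_{r,u}}+\sum_i B_i=\bigOOf{\max_p D_p+\Delta\log n}$ with probability at least $1-n^{-3}$, and a union bound over the at most $n$ choices of $u$ completes the proof.

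The main obstacle is the concentration step: obtaining the clean \emph{additive} form $\max_p D_p+\Delta\log n$ (rather than a multiplicative $\log n$ slowdown) requires a geometric-sum tail bound that exploits the total mean $M'$ and the maximal mean $\Delta$ together, with the deviation threshold chosen so that both the $M'$-large and $M'$-small regimes are covered. The only other point needing care is that the type-1/type-2 decomposition is genuinely independent of the tie-breaking rule — the type-2 bound holds because $v_i$ is then the unique requester, and the type-1 bound is a purely deterministic counting argument — which is precisely what lets one argument cover both random and adversarial \rpull.
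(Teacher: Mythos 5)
Your proposal is correct and takes essentially the same route as the paper: your type-1/type-2 partition is exactly the paper's congested/free round dichotomy, yielding the same per-edge bound $T_i \le d_{v_{i-1}} + \mathrm{Geom}(1/d_{v_i})$ (valid for both tie-breaking rules for the same reasons), and the geometric-sum tail bound you invoke is precisely the paper's Lemma~\ref{thm:geometricchernoff}, applied there with $p_1 = 1/\Delta_p$ and $t = c\,\Delta_p \log n$, followed by the same union bound. The only cosmetic differences are that you union-bound over target nodes rather than root-leaf paths, and that you spell out the conditional independence of the geometric variables that the paper merely asserts.
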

\begin{proof}[Proof of Lemma \ref{lemma:treefull}]
  
  The analysis we use does hold for adversarial \rpull. 
  
  First look at a path $p=p_{r,l}=(r=v_0,v_1,v_2,\dots,v_q=l)$ from the root to some leaf $l$ and define $\Delta_p = \max_{v \in p} d_v$.
  Let $T_i$ be the random variable that indicates the round in which node $v_i$ gets informed and with $T_0:=0$ we can define $X_i := T_i-T_{i-1}$ for $m=1,\dots,q$, the time node $v_{i-1}$ needs to pass the information forward to node $v_i$. For simplicity in the following we define $u:=v_i$ and $s:=v_{i-1}$.
	
	
  Once $s$ gets informed, a round is called free if no node in $N(s)\setminus\{v_{i-2},u\}$ requests to $s$, otherwise it is called congested. In a free round, $u$ gets the rumor with probability at least $1/d_u$, i.e., the number of free rounds is upper bounded by a geometric random variable $Y_{i,\text{free}} \sim Geom(1/d_u)$. On the other hand, in a congested round, at least one neighbor of $s$ does get informed, so there can not be more than $d_s$ such rounds. 


In total we get that $\Pr(X_i \geq \tau) \leq \Pr(Y_{i,\text{free}} + d_s \geq \tau)$ for node $u=v_i$ and with $T_q=\sum_{i=1}^q X_i$ we conclude that $\Pr(T_q \geq \tau) \leq \Pr(\sum_{i=1}^q Y_{i,\text{free}}+\sum_{i=1}^q d_{v_{i-1}} \geq \tau)$.

Since we are on a tree, those geometric random variables are all independent, and we can apply the Chernoff Lemma \ref{thm:geometricchernoff}. Let $\Delta_p$ be the largest degree of all nodes on the path excluding $r$, i.e., $p_1 = 1/\Delta_p$ in terms of the notation from Lemma \ref{thm:geometricchernoff}. With $Y=\sum_{i=1}^q Y_{i,\text{free}}$ we have $\mu = D_p - d_r \geq \Delta_p$ and we set $t = c \Delta_p \log n$ for some $c>0$. 
$$
\Pr(Y > 3(\mu+t)) \leq e^{-\frac{\mu}{2\Delta_p}} e^{-\frac{c \Delta_p \log n}{\Delta_p}} < n^{-c},
$$
i.e., \whp, $v_q$ is informed within $\bigO(D_p + \Delta_p \log n) = \bigO(\max_{\textpath \, p} D_p + \Delta \log n)$ rounds. In a tree there are at most $n$ root-leaf paths, therefore a union bound over all individual paths concludes the proof.
\end{proof}

Lemma \ref{lemma:treefull} shows that random \rpull and adversarial \rpull are essentially the same on trees. This does not hold for general graphs. 

\begin{lemma}
\label{lemma:separation}
  There is a graph $G=(V, E)$ of size $\Theta(n)$ with node $r_\alpha \in V$, $\deg(r_\alpha)\leq 3$, such 
  that:
  \begin{myitemize}
    \item 
    For $S_0=\set{r_\alpha}$, \wcp, the run-time of adversarial \rpull is in $\Omega(\sqrt{n})$.
    \item For any non-empty $S_0 \subset V$, \whp, the run-time of randomized \rpull is in $\bigO(\log^2 n)$.
  \end{myitemize}
\end{lemma}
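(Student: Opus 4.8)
The plan is to exhibit a single graph $G$ whose separation is localized at the source. I would build $G$ from two parts: a dense \emph{engine} $H$ on $\Theta(n)$ vertices (e.g.\ a clique or a balanced complete bipartite graph), on which \emph{any} form of \rpull floods from an arbitrary nonempty seed in $\bigO(\log n)$ rounds, because all its vertices are symmetric and a worst-case selector gains nothing there; and a sparse \emph{gate} $\Gamma$ hung off $r_\alpha$ and attached to $H$ through a controlled interface of \emph{portal} edges. The whole point is that Lemma~\ref{lemma:treefull} already shows random and adversarial \rpull coincide on trees, so $\Gamma$ must be genuinely non-tree-like: it should offer a worst-case selector a \emph{renewable supply of harmless requests} to serve, so that the selector can deny \emph{ignition} (the event that some vertex of $H$ becomes informed) for a long time, while random selection ignites $H$ quickly. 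Once $H$ is ignited it floods in $\bigO(\log n)$, after which every gate vertex has an informed neighbor in $H$ and is served within $\bigO(\log n)$ more rounds; hence both running times are dominated by the time to ignite $H$ starting from $r_\alpha$.

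For the randomized upper bound I would argue from any nonempty $S_0$ by cases. If $S_0$ already meets $H$, or once $H$ is ignited, the engine's flooding property together with the back-informing step above give $\bigO(\log n)$ rounds. If $S_0 \subseteq \Gamma$, I would show ignition within $\bigO(\log^2 n)$ rounds \whp\ by a drift argument: in each $\bigO(\log n)$-round phase, random selection either drains a constant fraction of the ballast guarding a portal or serves a portal request, so a portal fires with probability at least $1/\operatorname{polylog}(n)$ per phase, and $\bigO(\log n)$ phases suffice; a union bound over the at most $n$ relevant events and start vertices finishes the \whp\ statement. The factor $\log^2 n$ is the product of an $\bigO(\log n)$ ``ignition depth'' and the $\bigO(\log n)$ congestion slowdown inherent to \rpull.

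For the adversarial lower bound I would fix the selector that, at every informed vertex, serves a ballast/backward request whenever one is available and never serves a portal request. I would then maintain the invariant that, \whp\ in each round, every informed vertex adjacent to a portal still has at least one \emph{uninformed} ballast neighbor requesting it, so the selector is never \emph{forced} to ignite $H$. The quantitative claim is that the reservoir feeding the interface has size $\Theta(\sqrt{n})$ and is drained at rate $\bigO(1)$ per active vertex, while the probability that a portal becomes the \emph{unique} request at some interface vertex in a given round (the only way to force ignition) is $\bigO(1/\sqrt{n})$; a second-moment or Borel--Cantelli-type estimate then shows that no forced ignition occurs for $\OmegaOf{\sqrt{n}}$ rounds with constant probability, giving the \wcp\ bound $\OmegaOf{\sqrt{n}}$ for adversarial \rpull from $r_\alpha$.

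The hard part, and the place where the construction must do real work, is reconciling the two bounds: naively, any congestion that lets the selector stall \emph{also} slows random selection at the same vertex, so a careless gadget collapses the separation. The lever I would exploit is the dichotomy between an informed vertex \emph{having a choice} (two or more uninformed neighbors, letting the selector dodge the portal) and being \emph{forced} (a unique uninformed neighbor, which it must serve regardless of the selection rule): random selection rapidly diffuses the informed set into a configuration \emph{rich in forced servings}, which drains the ballast and exposes the portals, whereas the adversary keeps the informed set \emph{concentrated} so that safe choices always remain. Turning this ``diversity versus concentration'' picture into (i) an invariant the selector can provably maintain for $\OmegaOf{\sqrt{n}}$ rounds and (ii) a matching drift bound showing random breaks it in $\operatorname{polylog}(n)$ rounds is the main obstacle; by contrast the engine's flooding and the post-ignition back-informing are routine exponential-growth estimates.
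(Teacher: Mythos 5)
Your high-level architecture (dense ``engine'' plus a sparse ``gate'' whose interface the adversary can dodge) is indeed the shape of the paper's construction: there the interface is a single hub $r$ with one portal edge to $r_\zeta$, the root of an $n$-leaf \dbt $D_\zeta$ playing the role of your engine, and the ballast consists of $\log n$ leaves from each of $m=c\sqrt{n}$ medium \dbts $D_1,\dots,D_m$. But the quantitative reconciliation you yourself flag as ``the main obstacle'' is precisely the content of the proof, and your sketch of it contains two concrete errors. First, your claim that a portal becoming the \emph{unique} request at an interface vertex is ``the only way to force ignition'' is incompatible with the randomized upper bound you also need: a single interface vertex serves one request per round, so informing the $\Theta(\sqrt{n})$ gate components within $\bigO(\log^2 n)$ after ignition forces you to add many parallel engine--gate edges (the paper's edges $\set{r_i, l_{\zeta,i}}$), and each such edge is an ignition route in which an \emph{uninformed} engine vertex pulls from an informed gate vertex --- no selection rule at the gate side can veto a request the engine vertex never makes, and if it is the sole requester the informed endpoint must serve it. The paper neutralizes these routes not by adversarial dodging but by attaching them to degree-$(n+1)$ engine leaves, so each fires with probability at most $1/n$ per round, and by proving the count $X_t$ of informed gate endpoints stays below $4ct+2\log n$ for $t\leq \sqrt{n}/(5c)$; your lower bound has no analogue of this, and without it the $\OmegaOf{\sqrt{n}}$ claim fails.

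Second, your ``diversity versus concentration'' mechanism misdescribes how the separation works. The adversary cannot keep the informed set concentrated: serving a ballast request informs that node, and the dense pieces ($D_i$, $D_\alpha$) flood in $\bigO(\log n)$ rounds under \emph{any} selection rule (the paper's \Cref{claim:dbt}); in the actual proof the rumor spreads essentially freely everywhere except across the one portal. Likewise, random selection does not win by ``draining a constant fraction of the ballast per phase'' --- it serves at most one request per vertex per round, exactly as the adversary does, so the ballast persists for $\OmegaOf{\sqrt{n}}$ rounds under \emph{both} dynamics; if drainage were the operative mechanism, your own lower bound would collapse. The separation is purely local: ballast nodes have degree $\approx\sqrt{n}$, so although the reservoir has size $\Theta(\sqrt{n}\log n)$, only $\Theta(\log n)$ requests hit $r$ per round \whp; the constant-degree portal puller $r_\zeta$ then wins the \emph{uniform} choice with probability $\OmegaOf{1/\log n}$ per round (giving $\bigO(\log^2 n)$ \whp), while the adversary never lets it win as long as at least one ballast request is present, which holds \whp per round until the $\Theta(\sqrt{n})$ reservoirs, consumed at rate $\bigO(1)$ per round through both $r$ and the $D_\alpha$ backdoor, are exhausted. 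This high-degree-ballast trick --- poly-size reservoir, polylog congestion --- is the missing gadget in your proposal; as written, the proposal is an outline of the right strategy with the decisive construction and its two failure modes unresolved.
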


\iftrue
\begin{proof}[Proof of Lemma \ref{lemma:separation}]
  A picture depicting the graph can be found in Appendix \ref{app:pictures}.

  We use the notation \whp{k} to say that some event holds with high probability with respect to $k$, i.e., with probability at least $1-1/k$.
  
  We first introduce a graph type, with size depending on some parameter $k$, that we call a $k$-\dbtfull (\dbt{k}). In simple words, a $\dbt{k}$ is a binary tree with $k$ leaves, but with those $k$ leaves being fully interconnected, i.e., forming a clique. Mathematically more precise, a graph \mbox{$H=(B \dotcup L, E_B \dotcup E_L)$} with $|L|=k$ and $|B|=k-1$ is a $\dbt{k}$, iff $H_L = (L, E_L)$ is a complete graph over $k$ nodes and $H_B=(B\dotcup L, E_B)$ is a complete binary tree with its leaves being the nodes in $L$. While $H$ is not strictly speaking a tree, we call nodes in $L$ its leaves, $L(H)=L$ the \emph{leaf set}, $B(H)=B$ its \emph{branch set} and the root of $H_B$ we call the \emph{root} of $H$. Also, every node in $H$ except for its root has a clearly designated \emph{parent} (defined by $H_B$) and each node in $B$ has two clearly designated \emph{children} (with respect to the root in $H_B$).

  \begin{claim}
    \label{claim:dbt}
    Let $G=(V,E)$ be a graph and $H=(V_H, E_H)$ be a subgraph of $G$ that is a $\dbt{k}$. Furthermore, let any node $v \in V_H$ have at most one connection outside of $V_H$, i.e., $\deg_G(v) \leq \deg_{H}(v) + 1$. Then, \whp{k}, for any non-empty set of nodes in $H$ knowing the rumor, randomized \rpull informs all nodes in $H$ within $\bigO(\log k)$ rounds.
  \end{claim}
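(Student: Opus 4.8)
The plan is to exploit the two very different parts of a \dbt{k}: the branch set $B$, whose nodes all have degree $O(1)$ (two children, one parent, and at most one edge leaving $V_H$), and the leaf set $L$, which forms a clique in which every node has degree $\Theta(k)$. Since the informed set only grows, I would analyze the spreading in three consecutive phases, bounding the failure probability of each by $k^{-\Omega(1)}$ and taking a union bound at the end to obtain the statement \whp\ (with respect to $k$). Phase~1 (seed one leaf): if an informed node already lies in $L$ this is vacuous; otherwise the rumor sits inside the branch tree. Because every branch node has degree $O(1)$ and the tree of branch nodes has depth $O(\log k)$, applying the tree analysis of Lemma~\ref{lemma:treefull} (re-rooted at an initially informed branch node, where its bound becomes $O(\max_p D_p + \Delta\log k) = O(\log k)$) shows that within $O(\log k)$ rounds, \whp, every branch node that is a parent of leaves is informed; the only deviation from pure tree \rpull is the occasional leaf- or outside-request at a branch node, which merely inflates the relevant degrees by a constant. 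Once $\Omega(k)$ such bottom branch nodes are informed, each of their leaf children requests its informed parent with probability $\Theta(1/k)$, so the probability that at least one leaf is served in a round is $1-(1-\Theta(1/k))^{\Theta(k)}=\Omega(1)$ (a requesting leaf is served with constant probability, its parent facing $O(1)$ competing requests). Hence after $O(\log k)$ further rounds some leaf is informed, \whp.

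Phase~2 (flood the clique): let $j$ be the number of informed leaves. While $1\le j\le k/2$, an uninformed leaf requests any fixed informed leaf with probability $\Theta(1/k)$, so each informed leaf is requested by an uninformed leaf with probability $1-(1-1/k)^{k-j}\ge 1-e^{-1/2}$; since each informed node serves exactly one request, the number of newly informed leaves equals $\sum_{w}\mathbf{1}[w\text{ is requested}]$, a sum of negatively associated indicators with mean $\Omega(j)$. I would split this into three stages: a \emph{startup} stage where, regardless of the current state, $j$ increases by at least one with probability $\ge 1-e^{-1/2}$ per round, so a Chernoff bound on the dominating $\mathrm{Binomial}$ gives $j\ge C\log k$ after $O(\log k)$ rounds \whp; a \emph{multiplicative} stage where the negative-association Chernoff bound yields $j_{t+1}\ge 1.1\,j_t$ \whp\ until $j\ge k/2$, costing $O(\log k)$ rounds; and a \emph{saturation} stage where, once $j\ge k/2$, every uninformed leaf requests an informed leaf with probability $\ge 1/2$ and is served with constant probability (its target faces only $O(1)$ expected contention), so each of the at most $k/2$ remaining leaves is informed in $O(\log k)$ rounds \whp.

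Phase~3 (finish the branch): with all leaves informed, each uninformed bottom branch node requests an informed leaf child with probability $\ge 1/2$ and is served with constant probability (the child now has only its parent and possibly one outside node requesting), so all bottom branch nodes are informed in $O(\log k)$ rounds \whp; the rumor then propagates through the remaining $O(1)$-degree branch tree in $O(\log k)$ rounds \whp\ by Lemma~\ref{lemma:treefull} again. Summing the three phases and union-bounding their failure probabilities over the $\Theta(k)$ nodes gives the claimed $O(\log k)$ bound \whp.

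The delicate point is Phase~2. The clean multiplicative-growth and saturation arguments only hold once $j$ is large enough for concentration, so the real work is controlling the startup when $j=O(\log k)$ — where I rely on a state-independent per-round success probability to dominate the process by a binomial — and, throughout, arguing that the restriction of serving only one request per node never throttles progress: in the growth regime this is automatic (newly informed equals the number of requested informed nodes), while in the low-$j$ and saturation regimes it requires bounding the contention at each target by $O(1)$. A secondary subtlety is that the three phases actually run simultaneously on the same graph; I handle this by the monotonicity of the informed set, analyzing the phases as if sequential and only ever using lower bounds on progress.
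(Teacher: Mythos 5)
Your proof is correct, and it uses the same basic toolkit as the paper's --- Lemma~\ref{lemma:treefull} for the constant-degree branch tree, a parent-request argument to seed the first leaf, and exponential growth inside the clique --- but the pipeline is organized differently. The paper splits on the location of the source: if $s \in B$ it informs \emph{all} of $B$ first via Lemma~\ref{lemma:treefull} and only then floods $L$; if $s \in L$ it first passes the rumor to the parent of $s$ (the parent requests $s$ with constant probability while, with probability at least $(1-1/k)^{k-1} > 1/3$, no other leaf competes) and then reduces to the first case. You instead always flood the clique before the branch set, which forces your Phase~3 (informing the bottom branch nodes from their now-informed leaf children and re-running the tree analysis upwards) --- a step the paper never needs --- while you avoid the paper's ``$s \in L$: inform the parent first'' detour. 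Where your write-up genuinely improves on the paper is Phase~2: the paper only asserts that the expected number of newly informed leaves is $\Omega(x)$ and jumps to ``but also \whp{k}'', whereas you supply the missing concentration argument via the startup/multiplicative/saturation split and negative association of the occupancy indicators. One slip to fix: the identity ``newly informed equals $\sum_w \mathbf{1}[w \text{ is requested}]$'' is an overstatement, since a requested informed leaf may serve its uninformed branch parent or its single outside neighbor rather than a leaf; each informed leaf has at most two such non-leaf requesters, so conditioning on the served request being a leaf costs only a factor $\geq 1/3$, and both your startup probability and your multiplicative growth constant survive with smaller constants (the paper's own $(1-1/k)^{k-1} > 1/3$ bound is loose in exactly the same way). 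With that correction your argument is, if anything, more rigorous than the paper's proof sketch.
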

  \begin{quotation}
    Without loss of generality let there be one node $s$ having the rumor. If $s \in B:=B(H)$, i.e., $\deg_G(s) \leq \deg_H(s) +1 \leq 4$, then we can apply \Cref{lemma:treefull} to get that \emph{all} nodes in $B$ are informed within $\bigO(\log k)$ rounds. Let this be the case. All nodes in $L:=L(H)$ have degree at most $k+1$: $k-1$ neighbors in $L$, one ``parent-node'' in $B$ and at most one neighbor in $V \setminus V_H$. Each of them requests to its neighboring parent from $B$ with probability at least $1/(k+1)$, i.e., in each round, with probability at most $\big(1-\frac{1}{k+1}\big)^k < 1/2$, no node in $L$ learns the rumor. By Chernoff, \whp{k}, after $\bigO(\log k)$ rounds, at least one node in $L$ knows the rumor. If $x>0$ nodes in $L$ are informed, then each uninformed node $u$ in $L$ requests from one of those $x$ nodes (or a node in $B$) with probability at least $x/(k+1)$ and with probability at least $(1-1/k)^{k-1} > 1/3$ node $u$ is the only node requesting from its target. As long as $x < k/2$, with linearity of expectation, each round the expected number of newly informed nodes in $L$ is in $\Omega(x)$. Once $x \geq k/2$ we can use a similar argument to show that, \wcp, the number of uninformed nodes goes down by a constant factor each round. Hence, after $\bigO(\log k)$ rounds in expectation, but also \whp{k}, all nodes in $L$ are informed.%

    If initially $s \in L$, then with probability $p \in \set{1/4,1/3}$ its parent node in $B$ requests from $s$, while at the same time with probability at least $(1-1/k)^{k-1} > 1/3$ no other node in $L$ requests from $s$. Hence, after $\bigO(\log k)$ rounds, \whp, the parent node gets the rumor from $s$. The rest follows from reduction to the first case.
    \qed
  \end{quotation}
  
  We construct $G=(V_1 \dotcup V_\zeta, E)$ as follows. We let $D_\alpha$ and $D_\zeta$ be two $\dbt{n}$s, and we have $m$ $\dbt{\mtwo}$s that we denote with $D_1, D_2, \dots, D_{m}$, 
where $\mtwo:=\sqrt{n}$ and $m:=c\sqrt{n}$ for some natural number $c$. We use the notation $D_i$ for the corresponding \dbt{k} and its node set interchangeably. 
Their corresponding roots and leaf sets are denoted as $r_\alpha, r_\zeta, r_1, r_2, \dots r_{m}$ and $L_\alpha, L_\zeta, L_1, L_2, \dots L_{m}$ respectively, and with $l_{X,1}, l_{X,2}, \dots$ we enumerate the leaves of leaf set $L_X$. Let $C_\alpha=\set{c_1, \dots c_{m}}$ be an arbitrary $m$-sized subset of $D_\alpha$'s branch set $B_\alpha$ -- for simplicity and in accordance to \Cref{fig:separation} think of $C_\alpha$ as the layer of nodes in $B_\alpha$ that are at depth $\log m$. 

  We let $V_\zeta=D_{\zeta}$ and $V_1=\{r\}\cup D_{\alpha}\cup D_1\cup \ldots\cup D_m$ and we add the following edges.
	 \TabPositions{5.0cm}
  \begin{myitemize}
 \item Between $r$ and $D_{\zeta}$:\tab We add one edge from $r$ to $r_\zeta$.
	\item Between $r$ and $D_{\alpha}$:\tab For each $j \in [m\log n]$ we add an edge from $r$ to $l_{\alpha,j}$.
\item Between $r$ and $D_1,\ldots,D_m$:\tab For each $i \in [m]$ and $j \in [\log n]$ we add one edge from $r$ to $l_{i,j}$.
	\item Between $D_1,\ldots,D_m$ and $D_{\zeta}$:\tab For each $i \in [m]$ we add one edge from $r_i$ to $l_{\zeta,i}$.
   \item Between $D_1,\ldots,D_m$ and $C_{\alpha}$:\tab For each $i \in [m]$ we add one edge from $l_{i,\mtwo}$ to $c_i$.
  \end{myitemize}
  Note that the degree $\deg(r)$ is $2m\log n +1$ and that all the above defined edges add to any node in a \dbt at most one edge that connects it to a node outside its own \dbt.

   The idea of the proof is the following: The graph is built in a way that information propagation from $V_\zeta$ to $V_1$ is quick, but not the other way round.
  In the random \rpull model, wherever the rumor starts, it reaches $r$ quickly and from there $r_\zeta$ manages to get the rumor from $r$ in polylogarithmic time. Then the rumor  quickly  propagates through $V_{\zeta}=D_\zeta$, and from $L_\zeta$ to all \dbts $D_1, \dots, D_{m}$ and afterwards to $D_{\alpha}$.
	
	In the adversarial \rpull model, as long as the rumor does not start in $V_{\zeta}$, the rumor can quickly spread to $r$, a few of the $D_i$s and $D_\alpha$ but not to $V_{\zeta}$ because we let the adversary always prioritize a request at node $r$ from a node in one of the $D_i$s over a request from $r_\zeta$ to prevent that $r_{\zeta}$ will get the rumor. This is possible because we show that for polynomially many rounds there is always a request at $r$  from one of the $D_i$s to serve. Thus, to inform $D_{\zeta}$ all information must go through one of the edges $\set{r_i,l_{\zeta,i}}$, $i=1,\ldots,m$,  with $r_i$ informed. In less than a polynomial number of rounds few enough of the $r_i$s are informed and in each round only few requests from the leaf nodes $L_{\zeta}$ request from one of the $r_i$s at all making it unlikely that one of them requests from an informed $r_i$. Hence propagation through one of these edges is unlikely and it takes a long time for the rumor to spread over the entire graph.


  \paraclose{Random \rpull}
  We start proving that random \rpull manages to spread the rumor quickly in $G$. 
  \begin{myenumerate}[(1)]
    \item If there is an informed node in $D_\zeta$, by \Cref{claim:dbt}, \whp, all of $D_\zeta$ is informed in $\bigO(\log n)$ rounds. Assume this has happened. Since each root of a \dbt $D_i$ has degree $3$ in $G$, it requests the rumor from an informed leaf node in $L_\zeta$ \wcp -- since no other node in $L_\zeta$ is still uninformed and therefore able to create a conflict, \whp, in $\bigO(\log n)$ rounds, \emph{all} root nodes $r_1, \dots, r_m$ know the rumor.
    \item If there is an informed node in $D_i$ for some $i=1,\ldots,m$, due to \Cref{claim:dbt}, the whole \dbt $D_i$ is informed \whp{m} ($=$\whp) within $\bigO(\log m)=\bigO(\log n)$ rounds. Assume this has happened. Node $c_i \in C_\alpha$ has degree at most $4$ and therefore requests from its neighboring node $l_{i,m}$ \wcp, and since all nodes in $D_i$ are informed, it will also get the rumor.
    \item If there is an informed node in $D_\alpha$, by \Cref{claim:dbt}, \whp, all of $D_\alpha$ is informed in $\bigO(\log n)$ rounds. Assume this has happened. Almost half of all neighbors of $r$ lie in $L_\alpha$, and with same reasoning as above, $r$ gets the rumor \whp within $\bigO(\log n)$ rounds.
    \item Let $r$ be informed. All its neighbors in $L_\alpha$ have degree $n+1$ and therefore request with probability at most $1/n$ from $r$, i.e., in expectation no more than $1$ node from there requests the rumor from $r$ each turn. Each neighboring leaf node in some $L_i$ has degree $m+1$, i.e., requests the rumor from $r$ with probability at most $1/m$. Since $r$ has $m\log n$ such neighbors, in expectation no more than $\log n$ such neighbors request from $r$. With a Chernoff bound, \whp there are no more than $\bigO(\log n)$ requests at $r$. Since $r_\zeta$ has degree $3$, it therefore requests \wcp and gets the rumor with probability $\Omega(1/\log n)$. \Whp, the rumor is therefore propagated to $r_\zeta$ in $\bigO(\log^2 n)$ rounds.
  \end{myenumerate}
Altogether, wherever the source node is located, the above reasoning shows that, \whp, the rumor is propagated to all nodes within $\bigO(\log^2 n)$ rounds.

  \paraclose{Adversarial \rpull}
  Let $s \in V$ be the source node with the rumor. 
  If $s \in D_i$ for some $i \in [m]$, then, without loss of generality, we assume that all nodes in $D_i$, $D_\alpha$ and $r$ are already informed, initially. Otherwise we inform all nodes in $D_\alpha$ and $r$.
  For $i \in [m]$ we call any $D_i$ \emph{informed}, if it contains at least one informed node, otherwise \emph{uninformed}.

  The adversary has the following simple strategy. If $r_\zeta$ and at least one other node requests the rumor from $r$, then $r$ chooses to pass the rumor to any other node than $r_\zeta$. In every other aspect it follows an arbitrary strategy. 

For time $t$ we denote with $X_t$ the number of informed \dbts $D_i$, and we assume without loss of generality that the corresponding \dbts are $D_1, \dots, D_{X_t}$. Let $\eEvent_t$ be the event that in round $t$ no node in $D_\zeta$ has the 
rumor.
Conditioning on this event implies that, by the structure of our graph and our model, nodes from \dbts $D_i$ need to get the rumor from either $r$ or from $D_\alpha$, via connections $\set{c_i,l_{i,m}}$.

  Let $\xEvent_t$ be the event that $X_t < 4ct+2\log n$, $\aEvent_t$ the event that $r_\zeta$ gets the rumor in round $t$ and let $\cEvent_t$ be the event that a node from $L_\zeta$ gets the rumor from one of the roots $r_i$. 


  \begin{claim}
    $\Pr(\xEvent_t|\eEvent_t) \geq 1-1/n \geq e^{-1/n}$ for any $t \leq \mtwo/\cnenner$.
  \end{claim}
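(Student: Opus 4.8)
My plan is to control the two routes by which a still-uninformed $\dbt{\mtwo}$ $D_i$ can become informed while $\mathcal{E}_t$ holds, and to apply a Chernoff bound to the dominant route. First I would use monotonicity of informedness, which gives $\mathcal{E}_t\subseteq\mathcal{E}_{t-1}\subseteq\cdots$, so conditioning on $\mathcal{E}_t$ means $D_\zeta$ is uninformed throughout rounds $1,\dots,t$. Since the only edges between $V_1$ and $D_\zeta$ are $\set{r,r_\zeta}$ and the edges $\set{r_i,l_{\zeta,i}}$, and all their $D_\zeta$-endpoints stay uninformed, no rumor ever flows from $D_\zeta$ into $V_1$. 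Hence an uninformed $D_i$ can only learn the rumor either through one of the $\log n$ leaves $l_{i,1},\dots,l_{i,\log n}$ attached to $r$ (the $r$-route), or through the single edge $\set{c_i,l_{i,\mtwo}}$ to the permanently informed node $c_i\in D_\alpha$ (the $c_i$-route).

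The $r$-route is controlled deterministically: in any round, $r$ serves at most one incoming request, so at most one new $D_i$ is informed via $r$ per round, contributing at most $t$ to $X_t$ over the first $t$ rounds, irrespective of the adversary's choice at $r$. Together with $X_0\le 1$ (at most the source's own $\dbt{\mtwo}$ is informed initially), the whole problem reduces to bounding the number of $D_i$ informed through their $c_i$-edges.

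For the $c_i$-route, note that $c_i$'s only neighbor outside the (fully informed) $D_\alpha$ is $l_{i,\mtwo}$, so whenever $l_{i,\mtwo}$ requests $c_i$ it is served with no competition. As $l_{i,\mtwo}$ has degree $\mtwo+1$, in each round it selects $c_i$ with probability $1/(\mtwo+1)$, independently across the $m$ indices and across rounds. Thus the total number $W$ of successful $c_i$-pulls during rounds $1,\dots,t$ is stochastically dominated by $\mathrm{Bin}\bigl(mt,\tfrac{1}{\mtwo+1}\bigr)$, with mean $\mu=\tfrac{mt}{\mtwo+1}<ct$ (using $m=c\mtwo$). Combining the routes, $X_t\le 1+t+W$. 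Since $c\ge 1$, the target threshold satisfies $4ct+2\log n-1-t\ge 3ct+2\log n-1>3\mu+2\log n-1$, so it suffices to show $\Pr\bigl(W\ge 3\mu+2\log n\bigr)\le 1/n$. A standard Chernoff bound yields this in both regimes: for larger $t$ the factor-$3$ multiplicative slack over $\mu$ already pushes the upper tail below $1/n$, while for small $t$, where $\mu$ is tiny, the additive $2\log n$ term does; and the range $t\le\mtwo/\cnenner$ keeps $4ct+2\log n$ below $m$, so the bound is non-vacuous.

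The one delicate point, which I expect to be the main obstacle, is that the tail bound must hold after conditioning on $\mathcal{E}_t$, and $W$ is \emph{not} independent of $\mathcal{E}_t$: informing more of the $D_i$ (hence more roots $r_i$) can only make it easier for some $l_{\zeta,i}$ to pull from an informed $r_i$ and thereby violate $\mathcal{E}_t$. Crucially this correlation has the favorable sign, so I would argue that the conditioning can only help. Concretely, I would couple the true process with the ``$D_\zeta$-frozen'' process on $V_1$ (identical graph, but every $D_\zeta$-node clamped uninformed so degrees are preserved and $D_\zeta$ never serves), driven by identical requests; on $\mathcal{E}_t$ the two processes produce the same $V_1$-history, so $X_t$ equals its value in the frozen process, whose $c_i$-route increments are the genuinely independent $\mathrm{Bin}\bigl(mt,\tfrac{1}{\mtwo+1}\bigr)$ trials. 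Since $\set{X_t\ge B}$ is increasing and $\mathcal{E}_t$ is decreasing in these request indicators, an FKG/monotone-coupling argument gives $\Pr(X_t\ge B\mid\mathcal{E}_t)\le\Pr(X_t\ge B)$, reducing the claim to the unconditional Chernoff bound above. Making this monotonicity precise over the categorical (one-neighbor-per-node) choices is the part that needs the most care.
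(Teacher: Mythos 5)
Your proposal is correct and follows essentially the same route as the paper's proof: bound the $r$-route deterministically by one newly informed $D_i$ per round, dominate the $c_i$-route successes by a $\mathrm{Bin}\bigl(mt,\tfrac{1}{\mtwo+1}\bigr)$ variable (the paper's $X'_t \preceq \mathrm{Bin}(tc\mtwo,1/\mtwo)$), and finish with a Chernoff bound to get failure probability at most $1/n$. The only difference is that you explicitly flag and handle the conditioning on $\eEvent_t$ via a frozen-process coupling and a monotonicity (FKG-type) argument, a step the paper passes over silently in the inequality $\Pr(X_t \geq t+ct+\delta ct\mid\eEvent_t)\leq \Pr(X'_t\geq(1+\delta)ct)$; your observation that the correlation has the favorable sign is correct, so this is a more careful writeup of the same argument rather than a different one.
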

  \begin{quotation}
    In each round, $r$ can inform at most one node in a yet uninformed \dbt $D_i$.
    Also, any uninformed node $l_{i,\mtwo}$ connects to its neighbor $c_i$ in $D_\alpha$ only with probability $1/(\mtwo+1)<1/\mtwo$. With at most $m$ such uninformed nodes trying to get the rumor from $D_\alpha$ each round, the amount of nodes $l_{i,\mtwo}, i=1,\ldots,m$ informed through such an edge is upper bounded by a Binomial random variable $\text{Bin}(tc\mtwo,1/\mtwo)$. Let $X'_t$ be the random variable that counts the number of times when an uninformed \dbt $D_i$ gets informed through such an edge to $D_\alpha$ but not through a connection to $r$. 
Then, by Chernoff, for $\delta = (3-\frac{1}{c})+ 2\frac{\log n}{ct}> 1$,
    $$
    \Pr(X_t \geq t+ ct+\delta ct|\eEvent_t) \leq \Pr(X'_t \geq (1+\delta)ct) \leq \exp\left(- \frac{tc}{2}\delta\ln (1+\delta)\right) 
    \leq
    n^{-1}.
    $$
    Therefore, \whp, $X_t$ is smaller than $4ct+2\log n$ for $t\leq \mtwo/\cnenner$. \qed
  \end{quotation}

  \begin{claim}
    $\Pr(\aEvent_{t+1}|\eEvent_t \cap \xEvent_t) \leq 1/n \leq 1-e^{-1/n}$ for any $t \leq \mtwo/\cnenner$.
  \end{claim}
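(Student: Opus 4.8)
The plan is to reduce the event $\aEvent_{t+1}$ to the much simpler event that $r_\zeta$ is the \emph{unique} node requesting from $r$ in round $t+1$, and then to show that under the conditioning this almost never happens, because $r$ is swamped with requests from the leaves of the still-uninformed \dbts $D_i$.

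First I would determine how $r_\zeta$ can become informed at all. The node $r_\zeta$ has degree $3$: its two children inside $D_\zeta$ and the single edge to $r$ (which is informed throughout, by the set-up of the adversarial case). Conditioning on $\eEvent_t$ means that no node of $D_\zeta$ is informed at the start of round $t+1$, so the two children cannot supply the rumor; hence $\aEvent_{t+1}$ requires that $r_\zeta$ send its request to $r$ — which happens with probability $\tfrac13$, independently of every other node — and that this request actually be served. By the adversary's rule, whenever $r_\zeta$ and at least one other node request from $r$ the served node is \emph{not} $r_\zeta$, so $r_\zeta$ is served only when it is the sole requester at $r$. Since the requests in round $t+1$ are independent across nodes once the time-$t$ configuration is fixed, this yields, for every configuration consistent with $\eEvent_t\cap\xEvent_t$,
\[
 \Pr(\aEvent_{t+1}\mid\eEvent_t\cap\xEvent_t)\;\le\;\tfrac13\cdot\Pr\big(\text{no neighbour of }r\text{ other than }r_\zeta\text{ requests from }r\big).
\]

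Next I would lower bound how many neighbours of $r$ are guaranteed to pull from it. Under $\xEvent_t$ we have $X_t<4ct+2\log n$, and $t\le\mtwo/\cnenner=\mtwo/(5c)$ gives $4ct\le\tfrac45\mtwo$, so at least $m-X_t>(c-\tfrac45)\mtwo-2\log n$ of the \dbts $D_i$ are uninformed. Every node of an uninformed $D_i$ is uninformed, in particular its $\log n$ leaves $l_{i,1},\dots,l_{i,\log n}$ that are joined to $r$; these are certainly uninformed and their only external edge goes to $r$. This produces $N_u\ge\big((c-\tfrac45)\mtwo-2\log n\big)\log n$ uninformed leaves adjacent to $r$, each of degree at most $\mtwo+1$ and hence requesting from $r$ independently with probability at least $q=1/(\mtwo+1)$. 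If even one of them requests from $r$ then $r_\zeta$ is not the unique requester, so
\[
 \Pr\big(\text{no other neighbour of }r\text{ requests}\big)\;\le\;(1-q)^{N_u}\;\le\;e^{-qN_u}.
\]

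Finally, the estimate $qN_u\ge(c-\tfrac45-o(1))\log n$ (using $\mtwo=\sqrt n\to\infty$) makes the right-hand side at most $n^{-(c-4/5-o(1))}$, which for the fixed constant $c$ chosen large enough is polynomially smaller than $1/n$; multiplying by $\tfrac13$ gives $\Pr(\aEvent_{t+1}\mid\eEvent_t\cap\xEvent_t)\le 1/n$. Because the bound actually obtained is $o(1/n)$, it also lies below $1-e^{-1/n}$, which is the shape needed afterwards to multiply the complementary probabilities $\Pr(\neg\aEvent_{t+1})\ge e^{-1/n}$ over all $t\le\mtwo/(5c)$ rounds. I expect the main obstacle to be making the counting of the second step watertight: one must count only leaves of \emph{uninformed} $D_i$'s (so that they are surely uninformed and surely aim their external request at $r$), and one must verify that the window $t\le\mtwo/(5c)$ together with $\xEvent_t$ genuinely leaves $\OmegaOf{\mtwo}$ uninformed \dbts, so that $qN_u$ has leading constant exceeding $1$. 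Everything past that is a one-line independence computation.
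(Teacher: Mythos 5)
Your proof is correct and takes essentially the same route as the paper's: under $\xEvent_t$ with $t \leq \mtwo/\cnenner$, a constant fraction of the $D_i$s remain uninformed, so $\Theta(m\log n)$ uninformed leaves adjacent to $r$ each request from $r$ independently with probability $1/(\mtwo+1)$, whence with probability $1-n^{-\Omega(c)}$ at least one such request arrives and the adversary's rule then denies $r_\zeta$. The only (harmless) differences are that you bound $(1-q)^{N_u} \leq e^{-qN_u}$ directly where the paper invokes a Chernoff bound, you include the superfluous factor $\tfrac13$ for $r_\zeta$'s own request, and you note—more carefully than the paper's literal chain $1/n \leq 1-e^{-1/n}$, which is backwards as stated—that it is your stronger $o(1/n)$ bound that actually certifies the endpoint $1-e^{-1/n}$ needed in the subsequent induction.
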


  \begin{quotation}
    Every uninformed node $l_{i,j} \in L_i$, where $j\in[\log n]$ and $i \in [m]$, requests from $r$ with probability $1/(\mtwo+1)$. 
		Choosing $c$ large enough $\xEvent_t$ implies that at least $m/2$ of the $D_i$s are uninformed. Thus there are at least $(m/2)\cdot\log n$ uninformed leaf nodes with a connection to $r$ in uninformed \dbts $D_i$. At least $0.4c\log n$ such nodes request from $r$ in expectation. Choosing $c$ large enough, a simple Chernoff bound gives us that, \whp, at least one of these nodes requests from $r$. Consequently, \whp, $r$ does not give the rumor to $r_\zeta$ in round $t+1$. \qed
  \end{quotation}
\clearpage
  \begin{claim}
    $\Pr(\cEvent_{t+1}|\eEvent_t \cap \xEvent_t) \leq 1-e^{-\frac{1}{n}(2ct+\log n)}$ for any $t \leq \mtwo/\cnenner$.
  \end{claim}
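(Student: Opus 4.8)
The plan is to pin down the single avenue through which the rumor can reach $L_\zeta$ in one step and then treat the relevant pull requests as independent rare events. First I would exploit the conditioning. Under $\eEvent_t$, at the end of round $t$ no node of $D_\zeta$ knows the rumor, so there is nothing to pull from \emph{inside} $D_\zeta$; the only way some leaf $l_{\zeta,i}\in L_\zeta$ can become informed in round $t+1$ is by pulling across one of the external edges $\set{l_{\zeta,i},r_i}$ to an already-informed root $r_i$. Under $\xEvent_t$ at most $X_t < 4ct+2\log n$ of the \dbts $D_i$ carry the rumor, hence at most this many of the roots $r_i$ are informed; denote this (history-dependent) set of informed roots by $I$.

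Next I would estimate the probability that $\cEvent_{t+1}$ fails. For each $i\in I$ the unique node of $L_\zeta$ that neighbours $r_i$ is $l_{\zeta,i}$, which has degree $n+1$ and hence selects $r_i$ as its pull target with probability exactly $1/(n+1)$; these target choices are independent across the distinct leaves. Since targeting $r_i$ is \emph{necessary} for $l_{\zeta,i}$ to receive the rumor from $r_i$, the event $\overline{\cEvent_{t+1}}$ occurs whenever none of the leaves $\{l_{\zeta,i}\}_{i\in I}$ aims at its external neighbour, so independence gives
\[
  \PrOf{\overline{\cEvent_{t+1}}\mid \eEvent_t\cap\xEvent_t}\;\ge\;\Big(1-\tfrac{1}{n+1}\Big)^{|I|}\;\ge\;e^{-|I|/n},
\]
using $1-\tfrac1{n+1}\ge e^{-1/n}$. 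Inserting the bound on $|I|$ and taking complements yields $\PrOf{\cEvent_{t+1}\mid\eEvent_t\cap\xEvent_t}\le 1-e^{-\frac1n(2ct+\log n)}$ for every $t\le \mtwo/\cnenner$. Note that $\xEvent_t$ bounds $|I|$ only by $4ct+2\log n$, so the exponent literally comes out a constant factor larger; this factor is immaterial, since the sole downstream use is that $\sum_{t\le \mtwo/\cnenner}\PrOf{\cEvent_{t+1}}$ stay $\bigO(1/c)$, which holds for the free constant $c$ chosen large enough (or one records the sharper count of informed roots directly).

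The main obstacle I expect is not the one-line estimate but the bookkeeping of the conditioning. The set $I$ of informed roots is itself random and determined by the whole history up to round $t$, so I must argue that conditioning on $\eEvent_t\cap\xEvent_t$ leaves the round-$(t+1)$ coin flips of the leaves of $L_\zeta$ untouched (they are fresh, independent, uniform choices unaffected by the past) and that, on this event, the external pull really is the \emph{only} informing route — this is precisely where $\eEvent_t$ is used, to exclude internal spread within $D_\zeta$. Handling the adversary is easy here: since aiming at $r_i$ is necessary regardless of how $r_i$ breaks ties among its requesters, the upper bound on $\cEvent_{t+1}$ holds for every admissible adversarial choice, so no assumption on the adversary's behaviour at the roots $r_i$ is required.
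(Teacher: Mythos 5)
Your proof is correct and follows essentially the same route as the paper: under $\eEvent_t$ the only way into $D_\zeta$ is a leaf $l_{\zeta,i}$ pulling across its unique external edge to an informed root (at most $X_t$ of which exist by $\xEvent_t$), each such request having probability $1/(n+1)$, so independence gives $\Pr\left(\overline{\cEvent_{t+1}}\mid\eEvent_t\cap\xEvent_t\right)\geq\left(1-\frac{1}{n+1}\right)^{X_t}\geq e^{-X_t/n}$. Your remark about the exponent coming out as $4ct+2\log n$ rather than $2ct+\log n$ in fact flags a constant-factor looseness present in the paper's own display as well (since $\xEvent_t$ only guarantees $X_t<4ct+2\log n$), and you correctly observe that this factor is immaterial for the downstream induction on $\Pr(\eEvent_t)$.
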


  \begin{quotation}
    By our assumption of $\eEvent_t$, at the start of round $t+1$, no node in $D_\zeta$ has the rumor, so for $\cEvent_{t+1}$ to possibly happen, a node from $L_\zeta$ must request from one of the nodes $r_1, \dots, r_{X_t}$, which it does with probability $1/(n+1)<1/n$.
    The probability for $\cEvent_{t+1}$ to happen is therefore
    \begin{equation*}
      \Pr(\cEvent_{t+1}|\eEvent_t \cap \xEvent_t) 
      \leq
      1-\left(1-\frac{1}{n}\right)^{X_t}
      \leq 
      1-\exp\left(-\frac{1}{n}(2ct+\log n)\right). \qedhere
    \end{equation*}
  \end{quotation}

  We know that $\overline{\aEvent_{t+1} \cup \cEvent_{t+1}} \cap \eEvent_t \cap \xEvent_t \subseteq \eEvent_{t+1}$ since if under condition $\eEvent_t$ neither $\aEvent_{t+1}$ nor $\cEvent_{t+1}$ happens, then no node in $D_\zeta$ can get informed in round $t+1$.

  \begin{claim}
    $\Pr(\eEvent_t) \geq e^{-\frac{1}{n}(2ct^2+t\log n)}$ for any $t \leq \mtwo/\cnenner$.
  \end{claim}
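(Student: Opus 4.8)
The plan is to prove the bound by induction on $t$, combining the three preceding claims with the set containment $\overline{\aEvent_{t+1} \cup \cEvent_{t+1}} \cap \eEvent_t \cap \xEvent_t \subseteq \eEvent_{t+1}$ noted just above. For the base case $t=0$ the source does not lie in $D_\zeta$, so $\eEvent_0$ holds surely and $\Pr(\eEvent_0)=1=e^{-\frac1n(2c\cdot 0+0\cdot\log n)}$. For the step, I would first pass to the subset via the containment, $\Pr(\eEvent_{t+1}) \geq \Pr(\overline{\aEvent_{t+1}}\cap\overline{\cEvent_{t+1}}\cap\eEvent_t\cap\xEvent_t)$, and then split the right-hand side into the product $\Pr(\overline{\aEvent_{t+1}}\cap\overline{\cEvent_{t+1}}\mid\eEvent_t\cap\xEvent_t)\cdot\Pr(\xEvent_t\mid\eEvent_t)\cdot\Pr(\eEvent_t)$. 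The middle factor is at least $e^{-1/n}$ by the $\xEvent$-claim, and the last factor is handled by the induction hypothesis.

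The heart of the argument is the first factor, and here I would invoke a conditional independence of $\aEvent_{t+1}$ and $\cEvent_{t+1}$ given the configuration at the start of round $t+1$. The point is that $\aEvent_{t+1}$ is determined solely by whether $r_\zeta$ requests from $r$ and by $r$'s serving choice, while $\cEvent_{t+1}$ is determined solely by the requests of the leaves of $L_\zeta$ toward informed roots $r_i$ and by those roots' serving choices. Because $r_\zeta \notin L_\zeta$, $r$ is none of the $r_i$, the leaves of $L_\zeta$ are not adjacent to $r$, and $r_\zeta$ is adjacent to no $r_i$, these two events depend on disjoint collections of random requests and serving decisions and are therefore conditionally independent. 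Reading the $\aEvent$- and $\cEvent$-claims as the complementary lower bounds $\Pr(\overline{\aEvent_{t+1}}\mid\eEvent_t\cap\xEvent_t)\geq e^{-1/n}$ and $\Pr(\overline{\cEvent_{t+1}}\mid\eEvent_t\cap\xEvent_t)\geq e^{-\frac1n(2ct+\log n)}$ (applied configuration-by-configuration and then averaged), the first factor is at least $e^{-\frac1n(2ct+\log n+1)}$.

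Multiplying the three factors gives $\Pr(\eEvent_{t+1}) \geq e^{-\frac1n(2ct+\log n+2)}\,\Pr(\eEvent_t)$, and substituting the induction hypothesis $\Pr(\eEvent_t)\geq e^{-\frac1n(2ct^2+t\log n)}$ yields a negative exponent of magnitude $2ct^2+2ct+t\log n+\log n+2$. To close the induction it then suffices to verify $2ct^2+2ct+t\log n+\log n+2 \leq 2c(t+1)^2+(t+1)\log n$, which after cancellation is just $2\leq 2c(t+1)$, i.e. $c(t+1)\geq 1$; this holds because $c\geq 1$. Since every claim used is valid for $t\leq \mtwo/\cnenner$, the induction carries through over the whole stated range.

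The step I expect to be the main obstacle is precisely the multiplicative (rather than union-bound) combination of the two failure probabilities. A plain union bound over $\aEvent_{t+1}$ and $\cEvent_{t+1}$ is too lossy at small $t$: it would replace the product of the $e^{-(\cdot)}$ bounds by a quantity of the form $1-x-y$, which lies below $e^{-(x+y)}$ and breaks the induction at the boundary (e.g. $c=1,\,t=0$). Hence the conditional independence, together with the deliberate phrasing of all three claims as $e^{-(\cdot)}$ lower bounds so that they multiply cleanly, is essential. A secondary subtlety is that the claims condition on the compound event $\eEvent_t\cap\xEvent_t$ rather than a fixed configuration, so I would verify that each per-round bound holds configuration-by-configuration (each depends only on $X_t$ and the relevant degrees) before taking the conditional expectation that produces the product.
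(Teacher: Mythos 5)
Your proof is correct and follows essentially the same route as the paper's: induction on $t$, the containment $\overline{\aEvent_{t+1}\cup\cEvent_{t+1}}\cap\eEvent_t\cap\xEvent_t\subseteq\eEvent_{t+1}$, factorization into $\Pr(\eEvent_t)\Pr(\xEvent_t|\eEvent_t)\Pr(\overline{\aEvent_{t+1}}|\eEvent_t\cap\xEvent_t)\Pr(\overline{\cEvent_{t+1}}|\eEvent_t\cap\xEvent_t)$ using the conditional independence of $\aEvent_{t+1}$ and $\cEvent_{t+1}$, and the identical closing arithmetic $2\leq 2c(t+1)$. Your explicit justification of the conditional independence via disjoint sets of requesters and serving decisions is a detail the paper merely asserts, so if anything your write-up is slightly more careful, but the argument is the same.
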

  \begin{quotation}
    The proof follows by induction. In round $t=0$ clearly no node in $D_\zeta$ is informed, so the induction base holds. For the following, note that conditioned on $\eEvent_t$, events $\aEvent_{t+1}$ and $\cEvent_{t+1}$ (and therefore also their complements) are independent.
    \hide{
    \begin{eqnarray*}
      \Pr(\eEvent_{t+1}) 
      &\geq& \Pr(\overline{\aEvent_{t+1} \cup \cEvent_{t+1}} \cap \eEvent_t \cap \xEvent_t) 
      = \Pr(\eEvent_t \cap \xEvent_t) \Pr(\overline{\aEvent_{t+1}} \cap \overline{\cEvent_{t+1}}|\eEvent_t \cap \xEvent_t) 
      \\
      &\geq& 
       \Pr(\eEvent_t) \Pr(\xEvent_t|\eEvent_t) \Pr(\overline{\aEvent_{t+1}}|\eEvent_t \cap \xEvent_t) \Pr(\overline{\cEvent_{t+1}}|\eEvent_t \cap \xEvent_t)
      \\
      &\geq& 
      \exp\left(-\left(\frac{1}{n}(2ct^2+t\log n)+\frac{1}{n}+\frac{1}{n}+\frac{1}{n}(2ct+\log n)\right)\right)
      \\
      &\geq&
      \exp\left(-\frac{1}{n}\left(2c(t+1)^2+(t+1)\log n\right)\right). \qed
    \end{eqnarray*}
  }
  
  \begin{align*}
    \pushQED{\qed}
    \Pr(\eEvent_{t+1}) 
    &\geq \Pr(\overline{\aEvent_{t+1} \cup \cEvent_{t+1}} \cap \eEvent_t \cap \xEvent_t) 
    = \Pr(\eEvent_t \cap \xEvent_t) \Pr(\overline{\aEvent_{t+1}} \cap \overline{\cEvent_{t+1}}|\eEvent_t \cap \xEvent_t) 
    \\
    &\geq 
    \Pr(\eEvent_t) \Pr(\xEvent_t|\eEvent_t) \Pr(\overline{\aEvent_{t+1}}|\eEvent_t \cap \xEvent_t) \Pr(\overline{\cEvent_{t+1}}|\eEvent_t \cap \xEvent_t)
    \\
    &\geq 
    \exp\left(-\left(\frac{1}{n}(2ct^2+t\log n)+\frac{1}{n}+\frac{1}{n}+\frac{1}{n}(2ct+\log n)\right)\right)
    \\
    &\geq
    \exp\left(-\frac{1}{n}\left(2c(t+1)^2+(t+1)\log n\right)\right). \qedhere
    \popQED
  \end{align*}

  \end{quotation}
  This means, that after $t=\sqrt{n/c}$ rounds with probability at least $e^{-3}$ still not all $\Theta(n)$ nodes in $G$ are informed, concluding the proof of \Cref{lemma:separation}.
\qed
\end{proof}

\fi

\begin{theorem}
  \label{thm:separation}
    There is a graph $G=(V, E)$ of size $\Theta(n)$, such that for any $S_0 = \set{s} \subset V$: 
    \begin{myitemize}
        \item 
        In
        expectation, the run-time of
        adversarial \rpull is in $\Omega(\sqrt{n})$.
        \item 
        \Whp,
        the run-time of randomized
        \rpull is in $\bigO(\log^2 n)$.
    \end{myitemize}
\end{theorem}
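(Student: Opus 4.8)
The plan is to prove \Cref{thm:separation} by boosting \Cref{lemma:separation} along its two weaker axes: turning the constant-probability adversarial lower bound into a bound on the expectation, and extending it from the single source $r_\alpha$ to an arbitrary source $s$. The randomized half needs nothing new: the second item of \Cref{lemma:separation} already holds for every non-empty $S_0$, hence in particular for each singleton $\set{s}$, giving the $\bigO(\log^2 n)$ bound \whp.

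For the adversarial half I would first pass from probability to expectation. The proof of \Cref{lemma:separation} shows that, started from $r_\alpha$, after $t=\sqrt{n/c}$ rounds the graph is not fully informed with probability at least $e^{-3}$; since the completion time $\tau$ is nonnegative, $\EOf{\tau}\ge \sqrt{n/c}\cdot\PrOf{\tau\ge \sqrt{n/c}}\ge e^{-3}\sqrt{n/c}=\OmegaOf{\sqrt n}$. The same conclusion holds for \emph{every} source $s\in V_1=\set{r}\cup D_\alpha\cup D_1\cup\dots\cup D_m$: optimistically declaring $D_\alpha$, $r$, and the gadget $D_i$ containing $s$ to be informed at the start only speeds the process up and places us exactly in the initial configuration analyzed in \Cref{lemma:separation}, whereupon the events $\eEvent_t,\xEvent_t,\aEvent_t,\cEvent_t$ keep $D_\zeta$ uninformed for $\OmegaOf{\sqrt n}$ rounds with constant probability. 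What makes this work is that, while $D_\zeta$ is blocked, the gadgets $D_i$ can only be reached through the hub $r$ at a rate of $\bigO(1)$ per round, so a linear-sized reservoir of uninformed leaves $l_{i,j}$ keeps requesting at $r$ and keeps the low-degree gateway $r_\zeta$ starved.

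The one case this does not cover, and the crux of the whole statement, is $s\in D_\zeta$. Here the adversary is powerless: $D_\zeta$ floods in $\bigO(\log n)$ rounds (the number of newly informed clique nodes per round does not depend on the adversary's choices), and then every backdoor edge $\set{r_i,l_{\zeta,i}}$ leaks the rumor into $D_i$, because the low-degree root $r_i$ is the \emph{unique} uninformed node requesting from the already-informed leaf $l_{\zeta,i}$ and must be served. Consequently the entire reservoir $D_1,\dots,D_m$ floods quickly, $r$ becomes informed, and nothing remains to block $r_\zeta$. The difficulty is structural: randomized \rpull \emph{needs} these fast backdoors (it reaches the $D_i$'s precisely by flooding $D_\zeta$ first), yet they are exactly what an in-$D_\zeta$ source exploits.

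To close this case I would make the instance symmetric, attaching a second sink $D_{\zeta'}$ to the hub in the same manner as $D_\zeta$, so that from \emph{any} source the rumor must still cross $r$ into at least one sink whose low-degree gateway can be starved. The hard part—and where I expect essentially all the new work to lie—is to supply, for the gateway of each sink, a competition reservoir at $r$ that is \emph{not} drained when the source sits inside the other sink; with a naive mirror this fails, since flooding one sink informs the shared $D_i$'s and then all of $V_1$, freeing the other gateway. A correct construction must therefore route the competition for each gateway through a reservoir whose flooding is rate-limited from \emph{every} side, so that no single source can starve it in fewer than $\OmegaOf{\sqrt n}$ rounds, while still letting randomized \rpull reach that reservoir quickly through a gateway the adversary can block. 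Establishing this source-independent persistence of the reservoir is the technical heart of the theorem; the probability-to-expectation step and the $V_1$-source reduction are routine by comparison.
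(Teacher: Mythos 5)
Your diagnosis is accurate up to the decisive point: the probability-to-expectation step and the extension to sources in $V_1$ are indeed routine (the proof of \Cref{lemma:separation} already pre-informs $D_\alpha$, $r$, and the gadget containing $s$, so it covers every source outside $D_\zeta$), and the case $s\in D_\zeta$ is exactly where the lemma fails to give a lower bound. But your proposal does not close that case — you explicitly leave the "source-independent reservoir" construction open, and you correctly observe that the naive mirror with a second sink $D_{\zeta'}$ attached to the \emph{same} hub fails, because flooding one sink drains the shared gadgets $D_1,\dots,D_m$ and frees the other gateway. So as written this is a proof of the two easy reductions plus an identified, unresolved obstruction; it is not a proof of the theorem.

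The missing idea is much simpler than a rate-limited shared reservoir: do not share anything. The paper takes two disjoint duplicates $G'$, $G''$ of the entire graph from \Cref{lemma:separation} and joins them by the single edge $\set{r'_\alpha,r''_\alpha}$, exploiting precisely the property $\deg(r_\alpha)\leq 3$ that the lemma records. Wherever the source $s$ lies (say in $V'$, possibly inside the sink $D'_\zeta$), the other copy $G''$ can only be reached through $r''_\alpha$; at the moment $r''_\alpha$ becomes informed, $G''$ is in exactly the slow-start configuration $S_0=\set{r''_\alpha}$ of the lemma — its reservoir $D''_1,\dots,D''_m$ is untouched, since the copies interact only across the bridge — so the adversary stalls for $\OmegaOf{\sqrt{n}}$ rounds \wcp, hence in expectation. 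This also repairs a second, quieter gap in your plan: you assert the randomized half "needs nothing new," but on any \emph{modified} instance (which your approach necessarily builds) the $\bigOOf{\log^2 n}$ upper bound must be re-proved. With the duplication it is immediate: $G'$ floods in $\bigOOf{\log^2 n}$ by the lemma, the constant-degree $r''_\alpha$ pulls across the bridge in $\bigOOf{\log n}$ further rounds, and the lemma applies once more inside $G''$. Your shared-hub symmetric construction, even if it could be made to work, would force you to redo both halves of the lemma's analysis from scratch.
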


\begin{proof}
  Let $G'$ and $G''$ be duplicates of the graph $G$ from \Cref{lemma:separation}, $r'_\alpha$ and $r''_\alpha$ being the respective duplicates of $r_\alpha$. We set $G \coloneqq G' \cup G''$ and add the edge $\set{r'_\alpha,r''_\alpha}$. Without loss of generality let $s \in V'$. 

In the random version, the rumor propagates through all of $G'$ in $\bigO(\log^2 n)$ rounds. Due to its low degree, $r''_\alpha$ gets the rumor from $r'_\alpha$ within $\bigO(\log n)$ time after $G'$ is informed and again, in $\bigO(\log^2 n)$ rounds $G''$ is informed completely. 

In the adversarial version, $G''$ can only learn the rumor from $G'$ through edge $\set{r'_\alpha,r''_\alpha}$. But once $r''_\alpha$ knows the rumor, we can apply \Cref{lemma:separation} again to prove that now progress is stalled.
\end{proof}

\hide{
The idea of the construction is as follows. Each round, \whp, $\Theta(\log n)$ nodes are requesting $r$ and $b$ requests with probability $1/3$. In the adversarial model, the adversary can propagate the information to one of $r$'s children in $L$, which themselves only slowly propagate the information to their children in the set $L_i$. Due to the expander graph property, all nodes in a single $L_i$ are informed in $\bigO(\log n)$ rounds, but the number of fully informed sets $L_i$ grows linearly in time. Since $B_k$ is fully connected, the few nodes $b_{k,1}, \dots, b_{k,m}$ are likely to request other nodes in $B_k$ each round instead of their possibly informed neighbors in the sets $L_i$.\yannic{One needs to be more exact here.}
 In total, it takes $n^{\Theta(1)}$ to inform all nodes in the adversarial model.

In the random model, \whp, after $\bigO(\log^2 n)$ rounds $b$ manages to request successfully from $r$. Due to the binary tree structure information is quickly propagated to $B_k$, where information is again quickly propagated to all nodes. 
\yannic{I cannot see how the nodes in $B_k$ get the rumor quickly in randomized \rpull if the nodes in $B_{k-1}$ have the rumor. Nodes in $B_k$ form a clique so they will ask within themselves all the time? We cannot use the binary tree structure here, can we?}

 The nodes in the $L_i$ have constant degree and thus request soon from $B_k$ and once many of the nodes in $B_k$ are informed, such a request is successful with constant probability. Within the $L_i$ propagation is still quick, and propagation to $L$ happens quickly, too, giving a polylogarithmic running time in total, \whp.
}



\section{Comparison of \boldmath$\pull$ and \boldmath$\rpull$}
\label{sec:pullvsrpull}

In this section we compare the two algorithms \pull and random \rpull on general graphs, i.e., we analyze how many rounds of random \rpull are enough to cover the progress of one round of \pull. More precisely, we show that \whp the set of nodes informed after
$\bigO\big(\frac{\Delta}{\delta}\cdot\log n\big)$ rounds of random \rpull is a
superset of the set of nodes informed in a single \pull round. We manage to do so by coupling both algorithms. At the end of the section we head out to prove that this bound is tight. Whenever we talk about \rpull in this section we mean random \rpull.
\subsection{Dominance and Couplings}

We begin with two examples of insufficient definitions of domination between two rumor spreading algorithms.

Showing for two algorithms $\mathcal{A}$ and $\mathcal{A'}$ that $\Pr\big(u \in S^{\mathcal{A}}\big) \geq \Pr\big(u \in S^{\mathcal{A'}}\big)$ holds for all $u\in U$ is not enough to obtain a natural dominance definition of $\mathcal{A}$ over $\mathcal{A'}$, since due to dependencies for a set $M$ with $|M|>1$ it might still be true that $\Pr\big(M \subseteq S^{\mathcal{A}}\big) < \Pr\big(M \subseteq S^{\mathcal{A'}}\big)$.

Showing that $\Pr\big(M \subseteq S^{\mathcal{A}}\big) \geq \Pr\big(M \subseteq S^{\mathcal{A'}}\big)$ (*) holds for all $M\subseteq U$ is not enough either. Assume the following example: Let $U=\set{a,b,c}$ be the set of uninformed nodes. Assume that under $\mathcal{A}$ the probability that the set of newly informed nodes equals $\{a,b,c\}$, $\set{a}$, $\set{b}$ or $\set{c}$ is $1/8+\eps$ each and the probability that it equals one of the sets $\set{a,b}$, $\set{a,c}$, $\set{b,c}$ or $\emptyset$ is $1/8-\eps$ each. Under $\mathcal{A'}$ we inform any of those sets with probability $1/8$. A direct computation for all $M\subseteq \{a,b,c\}$, e.g., for $M=\{a\}$, $\Pr(\{a\}\subseteq S^{\mathcal{A'}})=1/2$ and 
\[\Pr(\{a\}\subseteq S^{\mathcal{A}})=\Pr(S^{\mathcal{A}}=\{a\})+\Pr(S^{\mathcal{A}}=\{a,b\})+\Pr(S^{\mathcal{A}}=\{a,c\})+\Pr(S^{\mathcal{A}}=\{a,b,c\})=1/2,\]
  shows that inequality (*) is fulfilled for any $M\subseteq U$, but the probability of the event \emph{``at least $2$ nodes are informed''} is by $2\eps$ smaller for $\mathcal{A}$ than for $\mathcal{A'}$.
Hence, to cover most possibly arising cases, we use the definition of the so called (first order) stochastic dominance.

\subsubsection*{Stochastic Dominance and Coupling}
Let $\left(\SSS, \preceq_{\SSS}\right)$ be a finite distributed lattice and let $X_1$ and $X_2$ be random variables with distributions $\Pr_1$ and $\Pr_2$ which take values in $\SSS$. A function $f:\SSS\rightarrow \RR$ is called \emph{increasing} if $A\preceq_{\SSS} B$ implies $f(A)\leq f(B)$.
\hide{The upper closure $\cl\left(\theW\right)$ of a set $\theW\subseteq \SSS$ is defined as 
\begin{align}
\label{closure}
\cl\left(\theW\right)=\{W \in \SSS \mid \exists X\in \theW, X \preceq_{\SSS} W\}.
\end{align}
}
\begin{definition}[Stochastic Dominance]
  \label{def:stochasticDomination}
  We say that $X_1$ \emph{stochastically dominates} $X_2$ if 
  \[E(f(X_1))\geq E(f(X_2))\] holds for \emph{every} increasing function $f:\SSS\rightarrow \RR$, where $E(\cdot)$ denotes the expected value.
\end{definition}
In this paper we will set $\SSS=2^{U}$ to be the power set of $U$, where $U\subseteq V$ is the set of uninformed nodes, $\preceq_{\SSS}$ equals the subset relation on $U$ and $X_1$ and $X_2$ will be the respective random variables describing which nodes get informed in \pull and \rpull. 

Alternative to the definition, one can show that one process stochastically dominates a second process by defining a monotone coupling between the processes (cmp. Theorem \ref{thm:Strassen}).

\begin{definition}[(Monotone) Coupling]
  A coupling of two random processes $X_1$ and $X_2$, taking values in $\SSS$ with distributions $\Pr_1$ and $\Pr_2$, is a joint distribution $\hat{\Pr}$ of a random process $(\hat{X_1},\hat{X_2})$ taking values in $\SSS\times\SSS$, such that its margins stochastically equal the distributions of $X_1$ and $X_2$ respectively, i.e., 
  \label{def:monotonCoupling}
      \begin{align*}
			\sum_{B\in\SSS}\hat{\Pr}\left((\hat{X_1},\hat{X_2}) =(A,B)\right)& =\Pr_1(X_1=A)~\forall
      A\in \SSS\text{ and}\\
     \sum_{A\in\SSS}\hat{\Pr}\left((\hat{X_1},\hat{X_2}) =(A,B)\right) & =\Pr_2(X_2=B)~\forall
      B\in\SSS.
			\end{align*}
  A coupling is called monotone (written $X_1\leq X_2$) if additionally the following holds:
  \begin{align}
    \forall A,B\in \SSS\text{ with }\hat{\Pr}\left((\hat{X_1},\hat{X_2})=(A,B)\right)>0\text{ it follows that }A\preceq_{\SSS} B. \label{eqn:monotoneCoupling}
  \end{align}
  A coupling is called monotone \whp (written $X_1\leq_{\text{\whp}} X_2$) if for some $c>1$ it satisfies
  \begin{align}
    \sum\limits_{A\not\preceq B}\hat{\Pr}\left((\hat{X_1},\hat{X_2})=(A,B)\right)\leq\frac{1}{n^c}. 
    \label{eqn:whpmonotoneCoupling}
  \end{align}
\end{definition}
Colloquially speaking, having a monotone coupling between two rumor spreading processes means that 
one process is at least as effective as the other one in every possible aspect.
More precisely, condition (\ref{eqn:monotoneCoupling}) says that if, in the joint distribution, there is a positive probability that process 1 informs exactly the nodes in $A$ and process 2 informs the nodes in $B$, then process 2 will (at least) inform all nodes which are informed by process 1. 
\mbox{Condition (\ref{eqn:whpmonotoneCoupling})} says that condition (\ref{eqn:monotoneCoupling}) holds with high probability.

The following theorem, Strassen's Theorem \cite{strassen:article, strassen:book}, shows an equivalence between stochastic dominance and the notion of monotone couplings.
\begin{theorem}[Strassen] 
\label{thm:Strassen}
The following are equivalent:
\begin{myenumerate}
\item $X_1$ \emph{stochastically dominates} $X_2$,
\item There exists a monotone coupling between $X_1$ and $X_2$ such that $X_1 \leq X_2$,
\item $\Pr\left(X_1\in F\right)\geq \Pr\left(X_2\in F\right)$ holds for every monotone set $F\subseteq \SSS$. \footnote{A set $F\subseteq \SSS$ is called monotone if $A\in F$ and $A\preceq_{\SSS}B$ implies $B\in F$.}
\end{myenumerate}
\end{theorem}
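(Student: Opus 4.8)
The plan is to prove the three statements equivalent by closing the cycle $(1)\Rightarrow(3)\Rightarrow(2)\Rightarrow(1)$, dispatching the two ``soft'' implications by hand and reserving a single combinatorial max-flow argument for the genuinely nontrivial step, the construction of the coupling. Throughout I would use that $\SSS$ is finite and write $\mu_i(F)=\Pr(X_i\in F)$ as shorthand.

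For $(1)\Rightarrow(3)$: if $F\subseteq\SSS$ is monotone then its indicator $\mathbf{1}_F$ is increasing, because $A\in F$ together with $A\preceq_\SSS B$ forces $B\in F$, so $\mathbf{1}_F(A)\le\mathbf{1}_F(B)$. Plugging $f=\mathbf{1}_F$ into \Cref{def:stochasticDomination} turns $E(f(X_1))\ge E(f(X_2))$ directly into the probability inequality of $(3)$. For $(2)\Rightarrow(1)$: a monotone coupling realizes $\hat X_1$ and $\hat X_2$ on a common space with $\hat X_1\preceq_\SSS\hat X_2$ almost surely, which is exactly condition~\eqref{eqn:monotoneCoupling}; for any increasing $f$ we then have $f(\hat X_1)\le f(\hat X_2)$ pointwise, and taking expectations of both sides yields the stochastic inequality in the orientation demanded by $(1)$. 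Both steps are one-liners; the only point requiring care is to keep the direction of $\preceq_\SSS$ in \Cref{def:monotonCoupling} aligned with the direction of the inequalities in $(1)$ and $(3)$, so that ``dominates'' and ``$\leq$'' refer to the same end of the lattice.

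The crux is $(3)\Rightarrow(2)$: manufacturing a joint law with the prescribed margins that is supported only on comparable pairs. I would phrase this as a feasibility question for a transportation problem and settle it with max-flow/min-cut. Build a network with a source $s$, a sink $t$, a left vertex for every $A\in\SSS$ joined to $s$ by an edge of capacity $\Pr(X_1=A)$, a right vertex for every $B\in\SSS$ joined to $t$ by an edge of capacity $\Pr(X_2=B)$, and an uncapacitated edge $A\to B$ for each admissible (comparable) pair. Because all source and all sink capacities sum to $1$, any $s$--$t$ flow of value $1$ must saturate every terminal edge, and the flow it places on the middle edges is then precisely a coupling $\hat\Pr$ with the correct margins supported on comparable pairs, i.e.\ a monotone coupling. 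So it suffices to show that the maximum flow equals $1$.

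The main obstacle is the min-cut computation, and this is where hypothesis $(3)$ enters. Since the middle edges are uncapacitated, no finite cut may separate a left vertex $A$ on the source side from a comparable right vertex on the sink side; hence a finite cut is described by the set $T\subseteq\SSS$ of source edges it retains, and it is forced to sever every terminal edge lying in the up-set $\cl(T)=\{B:\exists A\in T,\ A\preceq_\SSS B\}$ generated by $T$. The capacity of this cut works out to $(1-\mu_1(T))+\mu_2(\cl(T))$, so that ``every cut has capacity at least $1$'' reduces exactly to $\mu_2(\cl(T))\ge\mu_1(T)$ for every $T$. Because $\cl(T)$ is a monotone set containing $T$, this is precisely what condition $(3)$ supplies once it is read in the orientation matching \Cref{def:monotonCoupling}, together with monotonicity of a measure under set inclusion ($\mu_1(T)\le\mu_1(\cl(T))$). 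Verifying that the minimum cut is governed exactly by up-sets, and that ranging $(3)$ over \emph{all} monotone $F$ is enough, is the delicate part; an equivalent and perhaps cleaner route would invoke the defect form of Hall's marriage theorem on the bipartite comparability graph, but the flow formulation makes the bookkeeping of the total mass~$1$ most transparent.
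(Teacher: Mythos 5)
Your proposal is correct, but there is nothing in the paper to compare it against: the paper states this result as Strassen's theorem and delegates the proof entirely to the citations \cite{strassen:article, strassen:book}, offering no argument of its own. Your cycle $(1)\Rightarrow(3)\Rightarrow(2)\Rightarrow(1)$, with the two soft implications handled by indicators of monotone sets and pointwise comparison under the coupling, and with the max-flow/min-cut construction carrying the real weight in $(3)\Rightarrow(2)$, is the standard elementary proof in the finite setting -- which is all the paper ever uses, since here $\SSS=2^U$ with $\preceq_{\SSS}$ the subset order. Your cut analysis is sound: the uncapacitated comparability edges force any finite cut to be described by the set $T$ of retained source edges, its capacity is $(1-\mu_1(T))+\mu_2(\cl(T))$, and the hypothesis on up-sets applied to $\cl(T)\supseteq T$ yields $\mu_2(\cl(T))\geq\mu_1(T)$, so the max flow equals $1$ and the flow on the middle edges is the desired monotone coupling with exact margins (every terminal edge being saturated). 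What your route buys over the cited one is self-containedness at the cost of generality: Strassen's original theorem concerns measures on Polish spaces with a closed partial order and needs correspondingly heavier machinery, whereas your argument (or the defect-Hall variant you mention, which is the same argument in different clothing) is purely combinatorial and suffices for every application in this paper. One point you flag in passing deserves to be made loudly, because it is in fact an inconsistency in the theorem as printed: by \Cref{def:monotonCoupling}, ``$X_1\leq X_2$'' places $\hat{X_1}$ \emph{below} $\hat{X_2}$ on the lattice, which via your own $(2)\Rightarrow(1)$ step would make $X_2$, not $X_1$, the dominating variable; so item (2) should read $X_2\leq X_1$ (equivalently, the inequalities in (1) and (3) should be reversed). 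Your decision to fix one orientation throughout silently repairs this, and every inequality you write is consistent in your chosen orientation; alternatively, one can keep the paper's literal (1)/(3) and instead direct the comparability edges by $B\preceq_{\SSS}A$, in which case the cut inequality $\mu_2(D)\geq\mu_1(T)$ for the down-set $D$ generated by $T$ follows from (3) applied to the complementary up-set $D^c$, since $\mu_1(D^c)\geq\mu_2(D^c)$ gives $\mu_2(D)\geq\mu_1(D)\geq\mu_1(T)$.
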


We want to show that $\bigO(\ddlogn)$ rounds of random \rpull stochastically dominate one round of \pull. This, however, is not possible as one can easily construct a graph in which some node $u$ is informed with probability $1$ in one round of \pull, but with probability less than $1$ in $\bigO(\ddlogn)$ rounds of \rpull.\footnote{Figure \ref{fig:onesteptight} in Appendix \ref{app:pictures} can be easily used to verify this.} Hence  a monotone coupling does not exist either. We therefore introduce the notion of highly probable stochastical dominance in analogy to the equivalencies from Strassen's Theorem.

\begin{definition}
  $X_1$ \emph{stochastically dominates} $X_2$ \emph{with high probability}, if there exists a coupling between $X_1$ and $X_2$ that is monotone with high probability.
\end{definition}

\subsection{W.h.p. Monotone Coupling between \boldmath \pull and \rpull}

\begin{theorem}
  \label{thm:rvpull}
  \Whp, for any set of informed nodes $S \subseteq V$, $T=\bigO\big(\frac{\Delta}{\delta} \log n\big)$ rounds of random $\rpull$ stochastically dominate a single round of $\pull$.
\end{theorem}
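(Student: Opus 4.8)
The plan is to invoke the \whp\ analogue of Strassen's Theorem (\Cref{thm:Strassen}): to prove that $T$ rounds of \rpull stochastically dominate one round of \pull it suffices to exhibit a coupling of the two processes that is monotone \whp, i.e.\ a joint law $\hat\Pr$ of $\big(S_1^{\pull},S_T^{\rpull}\big)$ under which $S_1^{\pull}\subseteq S_T^{\rpull}$ holds with probability at least $1-n^{-c}$. Since both processes start from the same informed set $S$ and informed nodes never become uninformed, we have $S_T^{\rpull}\supseteq S$, so it is enough to guarantee that the set $A:=S_1^{\pull}\setminus S$ of nodes \emph{newly} informed by the single \pull round is contained in $S_T^{\rpull}$. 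I would build the coupling on shared request randomness by identifying each uninformed node's single \pull target with its request in the first \rpull round. Under this identification $A=\{u\in U: u\text{'s round-}1\text{ request lands on a node of }S\}$, and every $u\in A$ comes with a designated informed neighbour $\sigma(u)\in N_S(u)$ that it chose; the task becomes to show that \rpull serves every node of $A$ within $T$ rounds, \whp.

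Next I would establish the capacity bound that explains the factor $\frac\Delta\delta\log n$. For a fixed informed node $v$, the number of uninformed neighbours requesting it in a round has expectation $\sum_{u\in N_U(v)}1/d_u\le d_v/\delta\le \Delta/\delta$, so by a Chernoff bound the load at $v$ is $\bigOOf{\frac\Delta\delta\log n}$ \whp, and a union bound keeps this true for all $v$ simultaneously. Because an \rpull node serves at most one requester per round, over $T=\ThetaOf{\frac\Delta\delta\log n}$ rounds node $v$ can serve $\ThetaOf{\frac\Delta\delta\log n}$ distinct neighbours -- comfortably more than the $\bigOOf{\frac\Delta\delta\log n}$ nodes that can be assigned to it. This is exactly where the bound is seen to be tight round-by-round: a single informed node with $\ThetaOf{\Delta/\delta}$ uninformed neighbours informs all of them in one \pull step but can clear only one per \rpull round, and the additional $\log n$ factor is the slack needed to turn expected loads into \whp\ statements.

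The technical heart, and the step I expect to be hardest, is to convert this capacity surplus into coverage of the \emph{specific} random set $A$. Two facts show that a node-by-node argument cannot work: exact domination is outright false (a node all of whose neighbours are informed is informed with probability $1$ by \pull but never with certainty by finitely many \rpull rounds, cf.\ the example referenced just after \Cref{thm:Strassen}), and for a high-degree node with few informed neighbours -- which \pull informs with probability $d_S(u)/d_u$ but which \rpull informs within $T$ rounds only with probability $\tilde{\bigO}(d_S(u)/d_u)$, merely a polylogarithmic factor larger -- a union bound over the up to $\ThetaOf{n}$ such nodes diverges. Hence \rpull's intrinsic randomness cannot be relied upon to hit the individual members of $A$; instead the coupling must make \rpull's serving cover $A$ at the level of \emph{sets}. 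Concretely I would reduce the goal to showing that the (dependent) random set served by \rpull stochastically dominates a \emph{product} measure whose marginals already dominate \pull's per-node informing probabilities $d_S(u)/d_u$; since \pull's newly-informed set is itself a product measure, transitivity of stochastic dominance then yields $A\subseteq S_T^{\rpull}$ \whp. The per-node marginals needed here follow from the serving bound $\EOf{1/(1+\text{load})}\ge 1/(1+\EOf{\text{load}})=\OmegaOf{\delta/\Delta}$ (Jensen together with the load bound above), iterated over the $T$ rounds.

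The main obstacle is precisely this last set-level domination. The one-served-request-per-round rule induces negative dependencies among the nodes a given server informs, and a node with several informed neighbours competes at several servers across the $T$ rounds, so the served set is neither a product measure nor obviously positively associated; in general a strongly negatively correlated set can fail to dominate even a product measure of smaller marginals. What rescues the argument is the capacity surplus of the second step: the $\log n$ overhead in $T$ forces each server's served set to be much larger than the set \pull assigns to it, and this size gap is what makes the required stochastic domination hold. Making the dependency bookkeeping rigorous -- either by verifying Strassen's up-set condition for the served set directly, or by constructing the coupling round-by-round while maintaining the invariant that the \rpull-informed set already contains the portion of $A$ cleared so far, all while preserving the true \rpull marginals -- is the delicate and lengthy part, and it is also the reason the statement can only be obtained \whp\ rather than as an exact coupling.
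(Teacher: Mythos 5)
You diagnose the obstacles correctly---exact dominance is false, per-node union bounds diverge for weakly connected nodes, and the served set is negatively dependent, so plain marginal bounds cannot give set-level dominance---but your proposal stops exactly where the proof has to begin, deferring ``the delicate and lengthy part'' without supplying a mechanism, and the two concrete routes you do sketch fail or are incomplete. First, the shared-randomness coupling identifying \pull's target choices with \rpull's first-round requests cannot meet the \whp\ requirement: for a weakly connected node $u$ with $d_S(u)/d(u)=o(1/\log n)$, even conditioned on $u\in A$ the probability that \rpull serves $u$ within $T$ rounds is only $O\big(\delta/\Delta\big)+O\big(\log n\cdot d_S(u)/d(u)\big)=o(1)$, so with many such nodes the event $A\subseteq S_T^{\rpull}$ fails with constant probability; you recognize this and retreat to ``set-level'' dominance, but then the identification buys nothing and you are back to needing an abstract argument. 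Second, for that abstract argument you propose dominance over a product measure with marginals obtained via Jensen---but unconditional marginals are precisely what you yourself concede is insufficient under negative dependence, and ``capacity surplus'' is an intuition, not a checkable criterion for Strassen's up-set condition on the actual, contention-coupled \rpull\ process.

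The paper closes this gap with two devices absent from your plan. It interposes an auxiliary process \vpull\ in which an informed node $v$ with $r_v$ requesters sends a token only with probability $r_v/T'$, to a uniformly random requester: this normalizes each requester's per-round success probability to exactly $1/T'$ \emph{independently of contention}, removing the negative dependence at its source. Together with the good/bad-execution bookkeeping (thresholds $r_v,X_v\leq K$; a bad execution falls back to a global \pull\ round, and strongly connected nodes always get a \pull\ round in round $T+1$), this yields the conditional bound of Lemma \ref{lemma:uInformedCondition}, $\Pr\big(u\in S_{T+1}^{\vpull}\mid I\big)\geq d_S(u)/d(u)$ \emph{uniformly over all conditioning events $I$ about other nodes}---a much stronger statement than a marginal bound, and the one actually needed. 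That conditional bound feeds the quotient rule (Claim \ref{claim:quotientRule}) and Holley's inequality (Theorem \ref{thm:holley}), giving \emph{exact} stochastic dominance of $(T+1)$ rounds of \vpull\ over one round of \pull\ (Lemma \ref{lemma:vpullDominatesPull}); the entire \whp\ loss is then confined to a genuine shared-randomness coupling showing $T$ rounds of \rpull\ dominate $(T+1)$ rounds of \vpull\ \whp\ (Lemma \ref{lemma:rpullDominatesVpull}), which is where your (correct) capacity computations---per-round load $O\big(\frac{\Delta}{\delta}\log n\big)$ by Chernoff plus union bound, strongly connected nodes informed \whp\ within $T$ rounds---reappear. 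Without the normalization trick and the Holley step, your outline has no route to the set-level dominance it postulates, so the central claim remains unproven.
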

\begin{corollary}
If in a graph $G$ with initially informed nodes $S \subseteq V$ the \pull algorithm informs all nodes in $\TTT$ rounds with probability $p$, 
then the random \rpull algorithm informs all nodes in time $O\big(\TTT\cdot\frac{\Delta}{\delta}\cdot \log n\big)$ with probability $(1-o(1))p$.
\end{corollary}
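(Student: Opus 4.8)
The plan is to invoke Strassen's Theorem (Theorem~\ref{thm:Strassen}) and reduce the claim to constructing a coupling of the two processes that is monotone \whp in the sense of Definition~\ref{def:monotonCoupling}. Write $U = V\setminus S$ for the uninformed nodes, let $X_2\subseteq U$ be the set of nodes newly informed in a single \pull round, and let $X_1\subseteq U$ be the set of nodes newly informed after $T=\bigOOf{\ddlogn}$ rounds of \rpull. Since both processes start from $S$, it suffices to build a joint distribution of $(X_1,X_2)$ whose marginals are correct and for which $\PrOf{X_2\not\subseteq X_1}\le n^{-c}$. I would keep in mind throughout that $X_2$ is the easy object: the \pull choices are independent across nodes, so the events $\{u\in X_2\}$ are independent with $\PrOf{u\in X_2}=d_S(u)/d_u$, whereas $X_1$ is governed by the highly dependent \rpull dynamics.

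The quantitative heart of the argument is a per-node estimate showing that \rpull informs each node within $T$ rounds at least about as reliably as one \pull round does, i.e.\ $\PrOf{u\in X_1}\ge d_S(u)/d_u - n^{-c}$ for every $u\in U$. To prove this I would fix an informed node $v$ and bound its \emph{load}, the number of uninformed neighbours requesting from it in a round. Each uninformed neighbour $u'$ requests $v$ with probability $1/d_{u'}\le 1/\delta$, so the expected load is $\sum_{u'\in N_U(v)}1/d_{u'}\le d_v/\delta\le\Delta/\delta$; a Chernoff bound together with a union bound over all informed nodes shows that \whp no informed node ever carries load more than $\bigOOf{\ddlogn}$. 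Since an informed node serves exactly one request per round, in a round in which $u$ requests $v$ it is served with probability at least $\Omega(\delta/\Delta)$; summing the per-round success probabilities over the $T=\bigOOf{\ddlogn}$ rounds (using that the informed set only grows, so $u$'s chances never decrease) yields that a node with $d_S(u)$ informed neighbours is reached except with probability $n^{-c}$. The two regimes combine exactly right: when $v$ has many low-degree uninformed neighbours the load is large (up to $\Delta/\delta$) and $T$ is large enough to serve them one per round, while when $v$ has few neighbours the load is small and contention is negligible. The extra $\log n$ factor is precisely the slack needed to win the union bound over the $n$ nodes and to absorb the ``last few'' slowly-requesting neighbours of each $v$.

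Given the per-node bound, I would assemble the coupling by first generating the full \rpull execution, thereby fixing $X_1$, and then drawing $X_2$ so that $u\in X_2\Rightarrow u\in X_1$ whenever possible, using fresh per-node randomness to restore the correct marginal $\PrOf{u\in X_2}=d_S(u)/d_u$. A node with $\PrOf{u\in X_1}\ge d_S(u)/d_u$ can be fit entirely inside $X_1$, and the remaining ``deficit'' nodes contribute at most $\sum_u\big(d_S(u)/d_u-\PrOf{u\in X_1}\big)\le n\cdot n^{-c}$ to $\PrOf{X_2\not\subseteq X_1}$, which is the desired \whp bound. This is also where the restriction to \emph{\whp} domination becomes unavoidable: a node all of whose neighbours lie in $S$ is informed by \pull with probability $1$ but, because of contention, only with probability $1-n^{-\Omega(1)}$ by \rpull, so no exact monotone coupling can exist.

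The step I expect to be the main obstacle is making the final coupling respect \emph{both} marginals simultaneously. Conditioning each node's $X_2$-membership on the shared outcome $X_1$ threatens to destroy the independence structure of $X_2$, so the naive per-node construction sketched above does not immediately produce a process distributed exactly as a single \pull round. Resolving this cleanly is what forces an auxiliary ``virtual'' intermediate process that decouples the contention at the informed nodes from the request choices, after which one chains two couplings (\pull to the intermediate process, and the intermediate process to \rpull). Tuning the parameters of this intermediate process so that both the marginal of a single \pull round and the marginal of $T$ \rpull rounds are reproduced, while the containment $X_2\subseteq X_1$ survives \whp, is the delicate part of the proof.
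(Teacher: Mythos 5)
You correctly identify the overall architecture (Strassen's Theorem plus a \whp monotone coupling, a contention estimate at informed nodes, and even the need for an intermediate ``virtual'' process), but your proof has a genuine gap exactly at the point you flag in your last paragraph, and what you offer does not repair it. The coupling in your third paragraph does not have the correct \pull marginal: drawing each node's $X_2$-membership conditionally on the shared outcome $X_1$ makes the events $\{u\in X_2\}$ inherit the correlations of $\{u\in X_1\}$ (for instance, two nodes competing for a common informed neighbour are negatively correlated under \rpull), so $X_2$ is not the product measure of a \pull round, and your deficit bound $\sum_u\bigl(d_S(u)/d_u-\Pr(u\in X_1)\bigr)\leq n^{1-c}$ only controls the containment event for a coupling whose marginals are already correct. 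Worse, the per-node (indeed, even all-subsets) marginal estimates that form the quantitative core of your sketch are provably insufficient in principle: as shown at the start of \Cref{sec:pullvsrpull}, there are processes with $\Pr\bigl(M\subseteq X_1\bigr)\geq\Pr\bigl(M\subseteq X_2\bigr)$ for \emph{every} $M$ which still fail to stochastically dominate. So the ``delicate part'' you defer is not a finishing touch; it is the theorem.

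What the paper actually supplies, and your proposal lacks, are two concrete ingredients. First, a \emph{conditional} per-node bound: \Cref{lemma:uInformedCondition} shows that the intermediate process \vpull informs each $u$ with probability at least $d_S(u)/d(u)$ conditioned on \emph{arbitrary} information about which other nodes get informed. This is made possible by the specific design of \vpull: an informed node $v$ sends a token with probability $r_v/T'$ to a uniformly random requester, which normalizes each request's success probability to exactly $1/T'$ regardless of contention; executions in which some $r_v$ or $X_v$ exceeds $K=\Theta\bigl(\ddelta+\log n\bigr)$, as well as strongly connected nodes ($d_S(u)/d(u)>1/2$), fall back on a genuine \pull round in round $T+1$, and \Cref{lemma:rpullDominatesVpull} shows this fallback is \whp harmless because strongly connected nodes are informed by \rpull within $T$ rounds anyway. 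Second, a mechanism converting the conditional bound into \emph{exact} stochastic dominance of \roundvpull over \roundpull: the Holley inequality (\Cref{thm:holley}), entered through the quotient rule (\Cref{claim:quotientRule}), which is precisely where the independence of \pull and the conditional form of the bound are exploited; chaining this with the \whp monotone coupling of \rpull and \vpull yields \Cref{thm:rvpull}. Finally, note that the corollary you were asked to prove also requires iterating \Cref{thm:rvpull} over the $\TTT$ \pull rounds with a union bound over the $\TTT$ simulation blocks (failure probability $\TTT\cdot n^{-c}=o(1)$, whence the factor $(1-o(1))p$), a step your write-up does not address at all.
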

By $\pushrpull$ we denote the combination of $\rpull$ with a simultaneous execution of the classic $\push$ protocol.
The restriction of a single node to answer only a limited number of requests does not limit the progress of the $\push$ algorithm when disseminating a rumor. Hence we deduce the following corollary.
\begin{corollary}
\Whp, for any set of informed nodes $S \subseteq V$, $T=\bigO\big(\frac{\Delta}{\delta} \log n\big)$ rounds of $\pushrpull$ stochastically dominate a single round of $\pushpull$.
\end{corollary}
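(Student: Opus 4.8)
The plan is to reduce this corollary to \Cref{thm:rvpull} by decomposing one round of $\pushpull$ into its push- and pull-contributions and treating them separately on a common probability space. Fix a starting informed set $S$ and let $U=V\setminus S$. In a single $\pushpull$ round an uninformed node $u$ becomes informed exactly if either some informed neighbour pushes the rumour to $u$, or $u$ pulls it from an informed neighbour; write $A\subseteq U$ for the set newly informed through pushing and $B\subseteq U$ for the set newly informed through pulling, so the round's newly informed set is $A\cup B$. Two structural facts drive the argument. First, as already noted, the in-degree restriction does not affect $\push$: receiving the rumour from one of possibly many pushing neighbours is not a ``served request'', so the $\push$ component of one $\pushrpull$ round informs exactly the same set $A$ (under the same push choices) as the $\push$ component of $\pushpull$. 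Second, the pull part is governed by the start-of-round state $S$ alone and is precisely the object compared against $\rpull$ in \Cref{thm:rvpull}.

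I would build the coupling in three coordinated pieces, all over one probability space, keeping the push-randomness and the pull/$\rpull$-randomness independent so that the joint law of $(A,B)$ matches that of a genuine $\pushpull$ round. (i) Couple the push choices of the single $\pushpull$ round with those of the first round of $\pushrpull$ so that both inform the same set $A$; since push actions only add informed nodes, $A$ lies in the informed set $I_1$ of $\pushrpull$, hence in $I_T$. (ii) Run on the side a stand-alone $\rpull$ process $Q_0=S,Q_1,\dots,Q_T$ and invoke \Cref{thm:rvpull} to couple it with the single $\pull$ round so that, \whp, $B\subseteq Q_T\setminus S$. (iii) Couple this stand-alone $\rpull$ process with the $\rpull$ component embedded inside $\pushrpull$ so that $Q_t\subseteq I_t$ for every $t$. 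Chaining (ii) and (iii) then gives $B\subseteq I_T$, and together with (i) the desired containment $A\cup B\subseteq I_T$, \whp.

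The work in piece (iii) is a monotonicity statement for $\rpull$: enlarging the currently informed set can only help. I would prove it by a round-by-round coupling maintaining $Q_t\subseteq I_t$. At each round I couple the uniform target choices of all nodes uninformed in both processes; since $Q_t\subseteq I_t$, the embedding process has fewer uninformed nodes, so at every informed node $w$ its requester set is a \emph{subset} of the requester set at $w$ in the $Q$-process. A uniform winner over the larger ($Q$-)requester set can then be coupled to a uniform winner over the smaller ($I$-)requester set so that, whenever the $Q$-winner lies in the $I$-requester set, the two winners coincide. Consequently any node informed in $Q_{t+1}$ is informed in $I_{t+1}$ as well — either it wins there too, or it was already informed — while the additional informed nodes of $\pushrpull$ (from its $\push$ steps and its own $\rpull$ progress) only enlarge $I_{t+1}$; the induction closes. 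The only failure probability introduced anywhere is the $n^{-c}$ term inherited from \Cref{thm:rvpull} in piece (ii), so the composed coupling is monotone \whp, which by the definition of \whp stochastic dominance (the \whp analogue of \Cref{thm:Strassen}) yields the corollary.

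The step I expect to be most delicate is keeping the three couplings mutually consistent on one probability space without corrupting the marginals: the embedding in (iii) must use the very $\rpull$ coins that \Cref{thm:rvpull} manipulates in (ii), yet the informed set inside $\pushrpull$ grows by an amount that depends on the (independent) push-randomness, so one must check that the subset-of-requesters invariant — and hence the winner coupling — survives this extra growth, and that the push/pull independence needed for the $\pushpull$ marginal is never broken by the shared $\rpull$ coins. Once the bookkeeping of these independences is set up carefully, all of the probabilistic content is inherited from \Cref{thm:rvpull}.
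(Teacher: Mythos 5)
Your proposal is correct and follows essentially the same route as the paper, which deduces the corollary directly from Theorem \ref{thm:rvpull} together with the single observation that the in-degree restriction does not limit the \push component. The paper states this in one sentence and omits all coupling details; your pieces (i)--(iii) — in particular the explicit monotonicity coupling showing that the \rpull component embedded in \pushrpull dominates a stand-alone \rpull process from $S$ (the subset-of-requesters invariant with the uniform-winner coupling) — are precisely the bookkeeping the paper leaves implicit, and they check out, with the only failure probability inherited from Theorem \ref{thm:rvpull} as you say.
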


To reduce dependencies between nodes which request from the same neighbor we introduce a new algorithm \vpull\ (virtual pull), which we let run for $T+1$ rounds and which, in any of those rounds, is strictly inferior to \rpull\ -- except for some rare cases that, \whp, do not arise. Note that \vpull\ is only introduced as a tool to analyze the algorithm \rpull; hence difficulties/impossibilities that arise in an actual implementation of \vpull\ are not relevant. 
The proof of Theorem \ref{thm:rvpull} is then split into two parts: 
\begin{enumerate}
  \item \mbox{Lemma \ref{lemma:rpullDominatesVpull}}: \Whp, $T$ rounds of \rpull\ stoch. dominate $(T+1)$ rounds of \vpull,
  \item Lemma \ref{lemma:vpullDominatesPull}: $(T+1)$ rounds of \vpull\ stochastically dominate one round of \pull.
\end{enumerate}
Then Theorem \ref{thm:rvpull} follows from the transitivity of the stochastical dominance relation.


By $\roundrpull$ we denote the (randomized) process \rpull\ which runs for $T$ rounds, by $\roundvpull$ we denote the process \vpull\ which runs for $T+1$ rounds and by $\roundpull$ we denote the process \pull\ which runs for one round only. The random variables $\Srpull$, $\Svpull$ and $\Spull$ denote the respective sets of nodes that are informed after the corresponding number of rounds. The processes $\roundrpull$, $\roundvpull$ and $\roundpull$ are not completely characterized by the random variables $\Srpull$, $S_{T+1}^{\vpull}$ and $S_1^{\pull}$ -- one has to include information about all requests and messages, that are sent by all nodes, to fully describe the random processes.
Nevertheless, to show the desired result, it is sufficient to find a monotone coupling where condition (\ref{eqn:monotoneCoupling}) and (\ref{eqn:whpmonotoneCoupling}), respectively, are fulfilled with regard to the subset relation of the set valued random variables $\Srpull$, $\Svpull$ and $\Spull$.

\subsubsection*{Definition of $\boldvpull$}
\label{subsection:badExecution}
An execution of \vpull consists of two phases. In the first phase nodes send tokens instead of the actual rumor and \whp nodes who have received a token in the first phase are informed at the end of the second phase. In an execution of \vpull we let $X_v(t)$ be the number of tokens which node $v$ has sent up to round $t$. In a specific round $t$ denote with $R_v(t)$ the set of nodes requesting from some informed node $v \in S$ and with $r_v(t)=|R_v(t)|$ its cardinality. $R_v$, $r_v$ and $X_v$ are random variables which describe certain properties of an execution of \vpull, where large values of $X_v$ or $r_v$ indicate the unlikely case in which the (strict) monotonicity of the coupling might break.

Let us also define \emph{weakly connected} nodes $u \in U$ as nodes for which $d_S(u)/d(u) \leq 1/2$
and \emph{strongly connected} otherwise. Let
\begin{equation*}
  K  =\Theta\left(\frac{\Delta}{\delta}+\log n\right), \quad T' =\bigO\left(\ddlogn\right) \quad \text{ and } \quad T=\Theta(T'), \, T \gg T'
\end{equation*}

\begin{definition}[Good, Bad Execution]
A $T$-round execution of \vpull is called a \emph{bad execution} if  for some $v\in V$ or $1\leq t\leq T$ it holds that $X_v(t)> K$ or $r_v(t)> K$, otherwise it is called a \emph{good execution}. 
\end{definition}
\hide{

}
First, we describe the algorithm informally. An execution of \vpull is split into two phases -- the first phase consists of $T$ rounds and the second phase of one round. In the first phase an uninformed node requests the rumor uniformly at random from one of its neighbors and an informed node $v$ decides with probability $\frac{r_v}{T'}$ whether to send out a token -- in which case it selects, uniformly at random, one of its incoming requests as destination for the token.
Nodes that get a token in those $T$ rounds, stop requesting from neighbors, but are still unable to forward any information to neighbors in consecutive rounds. In round $T+1$ the limit to the number of requests that can be served by an informed node is stripped away. Then, in case of a bad execution all actions from the first $T$ rounds are discarded and all uninformed nodes perform one round of \pull. In case of a good execution all uninformed strongly connected nodes perform one round of \pull and afterwards all nodes holding a token are being informed. If we assume that tokens are as valuable as the information itself, in each of the first $T$ rounds, \vpull differs from \rpull only in the fact that the selected incoming connection is established with probability $\frac{r_v}{T'}$ whereas it is established deterministically in \rpull. For an uninformed node $u\in U$, that chooses to request a neighbor $v \in S$, this normalizes the probability to get a token to $1/T'$, independent of the amount of other requesting nodes. 
Except for round $T+1$ this algorithm is clearly dominated by $\rpull$. 

A formal definition is given by the following pseudocode where the parameters  $K$ and $T'$ are defined as above. Note that the variables $X_v$, $R_v$, $r_v$, $BE_v$ and $BE$ can either be understood as random variables describing an execution of the \vpull algorithm or they can be updated directly in the algorithm as done below.
Except for \mbox{line \ref{vpull:oracleRequest}} which uses global knowledge \vpull can be seen as a distributed algorithm.
\begin{algorithm}[!htb] 
  \caption{One $(T+1)$-round execution of \vpull}
  \label{algo:vpull}
  \small
  
  \textbf{Input:} $K$ -- threshold for bad execution; $T'$ -- parameter to normalize probabilities\\
  \textbf{States:} \Sstate; \Ustate\\ 
  \textbf{Oracle knowledge:} $d_S(v)$ for every node $v$; $\badexec := \bigvee_{v \in V} \badexec_v$\\
  \begin{tabular}{@{}lll}
    \textbf{Variables: } &$R_v$ &  set of nodes requesting from $v$ in the corresp. round ($r_v \coloneqq |R_v|$)\\
    &$\badexec_v$ & boolean indicator for bad execution caused at node $v$\\
    &$\waiting$ & indicates whether a node will be informed after $T$ rounds
  \end{tabular}

  \begin{algorithmic}[1]
    \State $\badexec_v \gets \false$; $\waiting \gets \false$
    \For{$T$ rounds}
      
      \Switch{$\state$}
        \Case{\Ustate}
          \If{$\waiting = \false$}
            \State send request for rumor uniformly at random
            \If{$\msg = \ack$}
              \State $\waiting \gets \true$
            \EndIf
          \EndIf
        \EndCase
        \Case{\Sstate}
          \If{$r_v > K$ or $X_v > K$}
            \NoAlignComment{bad execution has been detected locally}
            \State $\badexec_v \gets \true$
          \Else
            \WithProb{$r_v/T'$} 
              \label{line:sendack}
              \State send $\ack$ to uniformly at random chosen node in $R_v \neq \emptyset$
              \State $X_v \gets X_v + 1$
            \EndWithProb
          \EndIf
        \EndCase
      \EndSwitch
    \EndFor
    \FullLineComment{Round $T+1$:}
    \State request $(\badexec, d_S(v))$ from global oracle \label{vpull:oracleRequest}
    \If{$\badexec = \true$}
      \NoAlignComment{bad execution has been detected globally}
      \State execute one round of \pull
      \NoAlignComment{i.e., informed nodes inform \emph{all} requesting neighbors}
    \Else
      \If{$d_S(v)/d(v)> 1/2$} \NoAlignComment{node is strongly connected}
        \State execute one round of \pull
      \ElsIf{$\waiting = \true$}
        \State $\state \gets \Sstate$\NoAlignComment{node learns rumor}
      \EndIf
    \EndIf
  \end{algorithmic}
\end{algorithm}


\subsubsection*{W.h.p. Monotone Coupling between \boldrpull\ and \boldvpull}
\label{section:couplingRpullVpull}
We generate first a coupling between \rpull\ and \vpull. In more layman terms imagine a (random) binary string $\sigma$ that contains all the information to generate either process in such a way that the informed nodes $S_{T+1}^\vpull$ are a subset of $S_T^\rpull$ for almost all strings $\sigma$; actually the probability that $\sigma$ is chosen in a way that $S_{T+1}^\vpull$ is not a subset of $S_{T}^\pull$ is less than $n^{-c}$. 

The coupling works in the following way. For each round $t \in \set{1, \dots, T+1}$ and each node $u$ we generate some random values $s_u(t)$, $s'_u(t)$ in $[0,1]$. If $u$ is uninformed (in either algorithm) and has not received a token at the beginning of round $t$ then $s_u(t)$ is used to determine which neighbor $u$ contacts, otherwise (if $u$ is informed) $s_u(t)$ is used to select to which requesting node (if any) a token or the information, respectively, is handed over. In \vpull, $s'_u(t)$ is hereby used to determine whether $v$ does send out any message at all, confer line \ref{line:sendack} from \Cref{algo:vpull}. Clearly, a node $u$ that is provided with a token in \vpull in any round $t \leq T$ is then also informed in \rpull. For round $T+1$ in \vpull the values $s_u(T+1)$ are used to simulate one round of \pull for any node that is required to do so, as stated in the \vpull\ algorithm.

We claim that, \whp, 
$s_u(T+1)$ is not used in the execution of \vpull for any node $u$ that does not get informed in $\roundrpull$, which thus implies that, \whp, $S_{T+1}^\vpull \subseteq S_T^\rpull$.

\begin{lemma}
  \label{lemma:rpullDominatesVpull}
  \roundrpull\ stochastically dominates  \roundvpull\ with high probability.
\end{lemma}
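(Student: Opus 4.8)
The plan is to construct the coupling between \rpull\ and \vpull\ explicitly via a shared source of randomness (a random string $\sigma$), and then show that, conditioned on the execution being \emph{good}, the coupling is monotone in the strict sense; the probability of a bad execution is the only source of error, and it is bounded by $n^{-c}$. Concretely, for each round $t\in\set{1,\dots,T+1}$ and each node $u$ I draw i.i.d.\ uniform values $s_u(t),s'_u(t)\in[0,1]$ and use these simultaneously to drive both processes: when $u$ is uninformed and token-free, $s_u(t)$ selects the neighbor it requests from (the same neighbor in both processes, since both request uniformly at random); when $u$ is informed, $s_u(t)$ selects which incoming request (if any) is served, and in \vpull\ the extra value $s'_u(t)$ is compared against the threshold $r_u/T'$ from \mbox{line \ref{line:sendack}} to decide whether $u$ sends a token at all. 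This is what makes \vpull\ \emph{dominated} by \rpull\ round-by-round: whenever \vpull\ actually sends a token, \rpull\ (with the same $s_u(t)$) serves the same chosen requester, so the recipient also becomes informed in \rpull; but \rpull\ additionally serves requests on every round, not just with probability $r_u/T'$.

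The core claim to establish is that, \whp, for every node $u$ and every round $t\le T$, the event ``$u$ receives a token in \vpull\ at round $t$'' implies ``$u$ is informed in \rpull\ by round $t$.'' I would argue this by induction on $t$: if up to round $t-1$ every \vpull\ token-holder is \rpull-informed, then at round $t$ an uninformed token-free node $u$ contacts the same neighbor $v$ in both processes (same $s_u(t)$); if $v$ sends $u$ a token in \vpull, then $v$ is informed in \vpull, hence by induction informed in \rpull, and since $u$ was selected by $v$'s value $s_v(t)$ among $R_v(t)$, the same selection occurs in \rpull\ (the request sets coincide as long as the informed/token sets are nested), so $u$ becomes \rpull-informed. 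The subtlety is that $R_v(t)$ in the two processes need not be identical a priori; one must check that the nesting $S^{\vpull}\subseteq S^{\rpull}$ keeps the requesting sets consistent, since an uninformed-in-\rpull\ node is also token-free-in-\vpull\ and thus requests identically. The one remaining gap is round $T+1$: here I must show that, \whp, $s_u(T+1)$ is only \emph{used} (to simulate a \pull\ round) by nodes that are already \rpull-informed, so the final \pull\ step cannot inform a node outside $S_T^\rpull$. This is exactly where \emph{good execution} is needed.

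To close the argument I would invoke the good/bad dichotomy. In a good execution no node ever sees $r_v(t)>K$ or $X_v(t)>K$, so $\badexec$ is \false, and the round-$(T+1)$ global \pull\ branch is never triggered. It then remains to handle the ``strongly connected'' branch: a strongly connected uninformed node runs one \pull\ round at $T+1$, and I must show such a node is already \rpull-informed \whp\ after $T$ rounds — this is where the choice $K=\Theta(\Delta/\delta+\log n)$ and $T\gg T'$ matter, guaranteeing that a node with $d_S(u)/d(u)>1/2$ gets informed in \rpull\ within the $T$ rounds with probability $1-n^{-c}$. Finally I would bound $\PrOf{\text{bad execution}}\le n^{-c}$ by a union bound over all $n$ nodes and $T=\poly{\log n}\cdot\Delta/\delta$ rounds, using Chernoff bounds on $r_v(t)$ (an informed node receives at most $\Delta/\delta$ requests in expectation, so exceeding $K$ is exponentially unlikely) and on $X_v(t)$ (the total tokens sent by $v$ over $T$ rounds has mean $\le T/T'=O(1)$ per informed round, concentrating below $K$). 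Summing these failure probabilities yields the $n^{-c}$ bound in condition \eqref{eqn:whpmonotoneCoupling}, which by definition gives $\roundrpull\le_{\text{\whp}}\roundvpull$, i.e.\ \roundrpull\ stochastically dominates \roundvpull\ \whp.

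\textbf{Main obstacle.} The hardest part will be verifying that the request sets $R_v(t)$ stay consistent across the two coupled processes throughout the induction, since a node's token/informed status can differ between \vpull\ and \rpull\ only in the ``good'' direction, and one must rule out that this asymmetry causes a node to request a \emph{different} neighbor (or be selected differently) in the two processes. Carefully arguing that the nesting $S_t^\vpull\subseteq S_t^\rpull$ (together with token-holders being \rpull-informed) is preserved under the shared randomness — and that the bad-execution escape hatch at round $T+1$ is never needed for \rpull\ to catch up — is the delicate technical core; the concentration bounds bounding $\PrOf{\badexec}$ are comparatively routine.
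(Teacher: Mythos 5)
Your proposal follows essentially the same route as the paper's proof: the identical shared-randomness coupling via $s_u(t)$ and $s'_u(t)$, the reduction of round $T+1$ to showing that (in a good execution) every strongly connected node is already \rpull-informed after $T$ rounds, and the same Chernoff-plus-union-bound arguments showing that $r_v$ and $X_v$ stay below $K$ \whp. The one subtlety you flag --- that $R_v(t)$ may be a strict superset in \vpull\ of the requesting set in \rpull, so the selection step of the coupling needs care --- is likewise left implicit in the paper, and your minor misattributions (the strongly-connected bound rests on a Markov argument with $T=\bigO\big(\ddlogn\big)$ rather than on the choice of $K$ and $T\gg T'$, and the mean of $X_v$ is $\bigO(K'T/T')$ rather than $\bigO(T/T')$) do not affect the structure of the argument.
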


\begin{proof}
  Under the assumption that tokens are as valuable as the information itself we constructed a monotone coupling of $S_T^\vpull$ and $S_T^\rpull$. Now, it is sufficient to prove that in round $T+1$ of \vpull, \whp, no node is informed, that has not been informed in the $T$ rounds of \rpull:  If neither ever any value $r_v$ nor any $X_v$ exceeded $K$, then only strongly connected nodes simulate one round of \pull in round $T+1$ of the \vpull\ algorithm. We claim that each strongly connected node has been informed in the first $T$ rounds of \rpull. 


A strongly connected node $u \in U$ requests from an informed node $v \in S$ with probability at least $1/2$. In any given round due to Markov inequality with probability at least $1/2$ no more than $2\Delta/\delta$ nodes $u' \in U$  connect to $v$. The probability for $u$ to get informed under \rpull is thus at least $\frac{\delta}{8\Delta}$. Choosing $T=\bigO\big(\ddlogn\big)$ big enough and a union bound gives us that, \whp, all strongly connected nodes are informed in process $\roundrpull$.

  To conclude, we prove that \whp\ neither $r(v)$ nor $X(v)$ exceed $K$ for any node $v$ during an execution of \vpull. Let $1 < \kappa < c_{T,T'}$ be constants, $K':=\ddelta+\log n$ and $K=c_{T,T'} K'$. 
	

 \paraclose{\Whp, $\mathbf{r_v \leq K}$ in $\boldroundvpullT$ for all $\mathbf{v}$}For a fixed informed node $v$, in expectation, no more than $\ddelta$ nodes can request from $v$. Using a Chernoff bound 
 for a single round and a single node, $\Pr\big(r_v \geq \ddelta + \kappa \log n\big) \leq n^{-\Theta(\kappa)}$ holds. With a union bound over all nodes and all rounds and $\kappa$ large enough we obtain that, \whp, $r_v$ never exceeds $\kappa K'$ and therefore neither $K$. A union bound over all nodes concludes the proof.
	
 \paraclose{\Whp, $\mathbf{X_v\leq K}$ in \boldroundvpullT\ for all $\mathbf{v}$}For a fixed $v$, note that, \whp, in a single round no more than $\kappa K'$ nodes request from $v$, and therefore, $X_v$ is increased at most with probability $\kappa K'/T'$ in any round. Over $T$ rounds, in expectation, no more than $\kappa K' \frac{T}{T'}$ increments of $X(v)$ happen, and again a Chernoff bound gives us that $X_v$ does not exceed $2\kappa K' \frac{T}{T'}$ with high probability. Choosing $c_{T,T'} = 2\kappa \frac{T}{T'}$ and a union bound over all nodes concludes the proof.
%
\end{proof}

\subsubsection*{Stochastic Dominance between \boldvpull\ and \boldpull}
\label{section:couplingVpullPull}

In a single round of \pull\  a node $u\in U$ is informed with probability $\frac{d_S(u)}{d(u)}$, independently from which other nodes are informed. For $T+1$ rounds of \vpull\ we can show that a node is informed at least with the same probability and independently from 
which other nodes get informed, as claimed in the next lemma.
Afterwards, we prove that Lemma \ref{lemma:uInformedCondition} is sufficient to deduce the stochastic dominance of \roundvpull\ over \roundpull.
For $u\in U$ and random process $X$, let $\CCC_u^{X}$ be the set of all conditions of the type $v\in X$ or $v\notin X$ where $v\neq u$.
\begin{lemma}
\label{lemma:uInformedCondition}
In \roundvpull\ a node $u\in U$ is informed at least with probability $\frac{d_S(u)}{d(u)}$, independently from which other nodes are informed, i.e., for all sets of conditions $I\subseteq \CCC_u^{\vpull}$  and $J\subseteq \CCC_u^{\pull}$ with $\Pr(I), \Pr(J)>0$ the following holds
\begin{align}
 \Pr\left(\left.u\in S_{T+1}^{\vpull} \right| I\right)\geq \frac{d_S(u)}{d(u)}=\Pr\left(u\in S_{1}^{\pull}\right)= \Pr\left(\left.u\in S_{1}^{\pull} \right| J\right). \label{eqn:singleNode}
\end{align}
\end{lemma}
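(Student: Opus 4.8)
The plan is to prove the inequality by splitting on whether $u$ is strongly or weakly connected, and in each case to exhibit a source of randomness attached to $u$ alone whose success probability is already $d_S(u)/d(u)$ and which is (essentially) insensitive to the behaviour of the remaining nodes. The strongly connected case is the easy one. Whether or not the execution turns out to be bad, every strongly connected node performs a full round of \pull in round $T+1$ (in a bad execution \emph{all} uninformed nodes do so, in a good execution the strongly connected ones do). This \pull round is played against the set $S$ of initially informed nodes, because during the first $T$ rounds tokens never turn a node informed, so $S$ does not grow. In a single \pull round there is no contention — an informed node answers \emph{all} its requesters — so $u$ becomes informed exactly when its round-$(T+1)$ request, governed solely by the private choice $s_u(T+1)$, lands on a neighbour in $S$; this happens with probability exactly $d_S(u)/d(u)$. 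Since $s_u(T+1)$ is independent of the statuses of all other nodes (which are fixed by rounds $1,\dots,T$ and the private randomness of those nodes), the bound holds for strongly connected $u$ and every condition set $I$.

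For a weakly connected node $u$ I would rely on the token mechanism, and set $\beta := d_S(u)/d(u)\le 1/2$. Two facts drive the argument: first, $S$ is frozen during rounds $1,\dots,T$, so $d_S(u)$ is constant; second, by the normalisation in line~\ref{line:sendack}, whenever $u$ requests an informed neighbour $v$ the probability that $u$ obtains $v$'s token is exactly $\tfrac{r_v}{T'}\cdot\tfrac1{r_v}=\tfrac1{T'}$, \emph{independent of $r_v$ and of the identities of the other requesters}. Hence in every round in which $u$ has not yet received a token it receives one with probability exactly $\beta/T'$, and summing the geometric series over the $T$ rounds (using $1-x\le e^{-x}$) gives the unconditional token probability
\begin{equation*}
  1-\Bigl(1-\tfrac{\beta}{T'}\Bigr)^{T}\ \ge\ 1-e^{-\beta\,T/T'}\ \ge\ \beta ,
\end{equation*}
where the last step uses that $T/T'$ is a sufficiently large constant ($T\gg T'$) together with $\tfrac{\ln(1/(1-\beta))}{\beta}\le 2\ln 2$ for $\beta\le 1/2$.

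To upgrade this to the \emph{conditional} statement I would express $u$'s token acquisition through randomness attached only to $u$ and its informed neighbours. Concretely, at each informed $v$ and round $t$ draw a single uniform $\theta_v(t)\in[0,1)$, assign $u$ the fixed slot $[0,\tfrac1{T'})$ and let the remaining requesters of $v$ occupy the consecutive slots $[\tfrac{j}{T'},\tfrac{j+1}{T'})$; having $v$ serve the requester whose slot contains $\theta_v(t)$ reproduces the distribution of \vpull exactly (each requester is served with probability $\tfrac1{T'}$, uniformly at random), while making the event ``$u$ gets $v$'s token'' equal to $\{u\in R_v\}\cap\{\theta_v(t)<\tfrac1{T'}\}$, a function of $u$'s own request and of $\theta_v(t)$ alone. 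Combined with the private choices $s_u(t)$, this presents $u$'s whole token history as a local object.

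The main obstacle is that this is \emph{not} full independence: since $v$ emits at most one token per round, the slots are disjoint, so ``$u$ gets $v$'s token'' is mutually exclusive with ``some other requester gets $v$'s token.'' Conditioning on an event such as ``$w$ is informed,'' for a node $w$ sharing an informed neighbour with $u$, is therefore negatively correlated with $u$'s success and could in principle push the conditional probability below $\beta$. To control this I would (a) peel off bad executions, on which $u$ instead performs an independent round of \pull in round $T+1$ and is informed with probability exactly $\beta$ irrespective of $I$; and (b) on good executions exploit the caps $r_v\le K$ and $X_v\le K$ to bound the number of rounds in which any fixed neighbour can divert a token away from $u$, so that the cumulative correlation loss is absorbed by the slack between $1-e^{-\beta T/T'}$ and $\beta$ created by the large ratio $T/T'$. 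Proving this quantitative bound on the negative correlation, uniformly over \emph{all} admissible condition sets $I$, is the crux of the proof and the step I expect to be the most delicate.
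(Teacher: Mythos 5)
Your overall architecture coincides with the paper's: split on strongly versus weakly connected $u$, observe that bad executions and strongly connected nodes get a fresh round of \pull in round $T+1$ (informed with probability exactly $d_S(u)/d(u)$ from private randomness, so any conditioning $I$ is harmless), and for weakly connected $u$ exploit the normalization $\frac{r_v}{T'}\cdot\frac{1}{r_v}=\frac{1}{T'}$ together with the good-execution caps. Your unconditional computation $1-(1-\beta/T')^T\geq 1-e^{-\beta T/T'}\geq\beta$ is correct. However, there is a genuine gap, and you name it yourself: the uniform-in-$I$ control of the negative correlation created by each informed node emitting at most one token per round is exactly the content of the lemma, and your proposal stops at the point where it would have to be proved. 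As stated, the slot-coupling plus ``the loss is absorbed by the slack'' is a plan, not an argument; in particular the cap you invoke, $r_v\leq K$, is not the relevant one (that cap is used in the paper's Lemma \ref{lemma:rpullDominatesVpull} to compare \vpull with \rpull), and no quantitative bound on the diverted rounds is derived.

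The paper closes precisely this gap without any round-by-round correlation analysis, by worst-casing over ``busy'' patterns. Call $v\in N_S(u)$ busy in round $t$ if it sends its token to a node other than $u$, and let $y_t$ be the number of busy neighbors; all influence of any admissible $I$ on $u$'s fate is channeled through the sequence $(y_t)$, since in round $t$ node $u$ still succeeds with probability at least $(s-y_t)/(d(u)\,T')$ where $s=d_S(u)$ (conditions on nodes sharing no neighbor with $N_S(u)\cup\{u\}$ can be ignored). The good-execution cap $X_v\leq K$ gives the aggregate constraint $\sum_{t=1}^T y_t\leq sK$, and then, with $c=T/T'$,
\[
\Pr\left(u\notin S_{T+1}^{\vpull}\mid I\wedge\GoodExecution\right)\leq\prod_{t=1}^T\left(1-\frac{s-y_t}{d(u)\,T'}\right)\leq\left(1-\frac{s\left(1-\frac{K}{T}\right)}{d(u)\,T'}\right)^{T}\leq e^{-c\left(1-\frac{K}{T}\right)\frac{s}{d(u)}}\leq 1-\frac{s}{d(u)},
\]
where the second inequality is the extremal choice $y_t=sK/T$ under the constraint, and the last uses $c(1-K/T)\geq 2$ and $e^{-2x}\leq 1-x$ on $[0,1/2]$ --- exactly where weak connectivity $s/d(u)\leq 1/2$ enters. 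Since the bound holds uniformly over every busy sequence compatible with a good execution, it dominates every conditional law, which is what your slot construction was trying (and failing) to deliver; averaging with the exact probability $\beta$ on bad executions finishes the proof. So the missing idea is this deterministic bookkeeping reduction: replace ``bound the correlation'' by ``bound the failure probability uniformly over all $(y_t)$ with $\sum_t y_t\leq sK$.''
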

\begin{proof}
If $u\in U$ is strongly connected, the result holds because \vpull\ executes one round of \pull\ for $u$ in either way.
In a bad execution, \vpull\ executes one round of \pull\ for any uninformed node and the claim holds trivially. 
Thus assume that $u$ is weakly connected and we are in a good execution. 
Let $s=d_S(u)$ and $N_S(u)=\{v_1,\ldots,v_s\}$ be the neighbors of $u$ in $S$. We call a node $v \in N_S(u)$ \emph{busy} w.r.t. $u$ in round $t$ if it informs some node other than $u$. Let $y_t$ 
be the number of busy nodes in round $t$ w.r.t. $u$. 
In a good execution (which we denote by $\GoodExecution$), any node in $N_S(u)$ can inform at most $K$ nodes and hence there is the following constraint on the sum of all $y_t$'s
\begin{align}
\label{cond:yt}
\sum_{t=1}^Ty_t\leq s\cdot K.
\end{align}
 We can ignore conditions in $I$ corresponding to nodes which do not have a common neighbor with $N_S(u)\cup \{u\}$ because $u$ can only get the rumor directly through $S$. The only negative effect on the probability that $u$ gets informed by the conditions in $I$ can be captured by the number of busy nodes w.r.t. $u$. However, since the number of nodes which are informed per node in a good execution is small compared with $T$, there are sufficiently many rounds with sufficiently many non-busy nodes to inform $u$.
More precisely, if $u$ requests from a non-busy node it is informed at least with probability $\frac{1}{T'}$. Thus, the probability that $u$, conditioned on $I \wedge \GoodExecution$ with $\Pr(I \wedge \GoodExecution)>0$, is not informed is smaller or equal to (with $c=T/T'$)
\begin{align*}
&  \prod_{t=1}^T\left(1-\frac{s-y_t}{d(u)\cdot T'}\right)
 \leq \left(1-\frac{s\left(1-\frac{K}{T}\right)}{d(u)\cdot T'}\right)^T
\leq e^{-c\left(1-\frac{K}{T}\right)\frac{s}{d(u)}}
 \leq 1-\frac{d_S(u)}{d(u)}.
\end{align*}
The first inequality holds because under constraint (\ref{cond:yt}) the expression on the left hand side  is maximized for $y_t=\frac{s\cdot K}{T}$. The last inequality holds due to $\frac{s}{d(u)}\leq 1/2$, $c(1-K/T) \geq 2$ and the fact that $e^{-2x} \leq 1-x$ for any $x \in [0,1/2]$.
\end{proof}

\hide{
\begin{corollary}
\label{cor:singleInformedIndependently}
Let $I=I_1\dcup\ldots \dcup I_t$ be disjoint conditions, where $I_i$ is a conjunction of conditions in $\CCC_u^{\vpull}$ and $\tilde{I}=\tilde{I}_1\dcup\ldots\dcup \tilde{I}_t$ the corresponding conditions in $\CCC_u^{\pull}$. Let $\Pr(I_i), \Pr(\tilde{I}_i)>0$ for $i=1,\ldots,t$. Then 
\begin{align}
\Pr\left(\left.u\in S_{T+1}^{\vpull} \right| I\right)\geq \Pr\left(\left.u\in S_{1}^{\pull} \right| \tilde{I}\right).
\end{align}
\yannic{Should somehow reformulate this lemma and try to remove the definition of $\CCC_u^{\vpull}$. It makes things too complicated... }
\end{corollary}
\begin{proof}
Via Lemma \ref{lemma:uInformedCondition}  we know that for each $1\leq i\leq t$ we have 
\begin{align}
\Pr\left(\left.u\in S_{T+1}^{\vpull} \right| I_i\right)& \geq \Pr\left(\left.u\in S_{1}^{\pull} \right| \tilde{I}_i\right)\nonumber
\intertext{With the definition of conditional probability we can deduce}
\Pr\left((u\in S_{T+1}^{\vpull}) \cap  I_i\right)& \geq \Pr\left(\left.u\in S_{1}^{\pull} \right| \tilde{I}_i\right)\cdot \Pr\left(I_i\right). \label{eqn:multcond}
\end{align}
Now we have the following:
\begin{align*}
\Pr\left(\left.u\in \Svpull \right| I \right) & = \frac{\Pr\left((u\in \Svpull)\cap (I_1\dcup\ldots\dcup I_t)\right)}{\Pr(I_1\dcup\ldots\dcup I_t)} \\
& =\frac{\sum_{i=1}^t\Pr\left((u\in \Svpull)\cap I_i\right)}{\sum_{i=1}^t\Pr(I_i})\\
& \stackrel{(\ref{eqn:multcond})}{\geq} \frac{\sum_{i=1}^t\Pr\left(\left.(u\in \Spull)\right| \tilde{I}_i\right)\cdot \Pr\left(I_i\right)}{\sum_{i=1}^t\Pr(I_i)}
\intertext{Because nodes are informed independently for \pull we obtain}
&=\frac{\sum_{i=1}^t\Pr\left(u\in \Spull\right)\cdot \Pr\left(I_i\right)}{\sum_{i=1}^t\Pr(I_i)}\\
&= \frac{\Pr\left(u\in \Spull\right)\sum_{i=1}^t\Pr\left(I_i\right)}{\sum_{i=1}^t\Pr(I_i)}\\
& =\Pr\left((u\in \Spull)\right)=\Pr\left(\left.(u\in \Spull)\right| \tilde{I}\right).
\end{align*}
\end{proof}
}
The following result is due to Holley \cite{Holley74} and provides a sufficient criterion for stochastic dominance if the measures are chosen accordingly, e.g., as in the proof of Lemma \ref{lemma:vpullDominatesPull}.
\begin{theorem}[Holley Inequality, \cite{Holley74}]
\label{thm:holley}
Let $(\SSS,<)$ be a distributive lattice and let $\mu_1, \mu_2$ be measures on this lattice. The Holley criterion is satisfied if
\begin{align}
\mu_1(A\cap B)\mu_2(A\cup B)\geq \mu_1(A)\mu_2(B) \text{ holds for all } A,B\in \SSS. \label{holleyCriterion}
\end{align}
If the Holley criterion is satisfied for $\mu_1$ and $\mu_2$ then 
\begin{align}
\sum\limits_{A\in \SSS}\mu_1(A)f(A)\geq \sum\limits_{A\in \SSS}\mu_2(A)f(A) \text{ holds for all increasing functions } f:\SSS\rightarrow \RR. 
\end{align}
\end{theorem}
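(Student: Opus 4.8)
The plan is to reduce the statement to the existence of a monotone coupling and then to construct such a coupling explicitly, with criterion (\ref{holleyCriterion}) entering only at the local, single-coordinate level. Observe first that the asserted conclusion — that $\sum_A \mu_1(A) f(A) \ge \sum_A \mu_2(A) f(A)$ for every increasing $f:\SSS\rightarrow\RR$ — is exactly the statement that the $\mu_1$-distributed variable $X_1$ stochastically dominates the $\mu_2$-distributed variable $X_2$ in the sense of \Cref{def:stochasticDomination}. We may take $\mu_1,\mu_2$ to be probability measures, as in the intended application, since criterion (\ref{holleyCriterion}) is invariant under independent rescaling of the two measures. By Strassen's Theorem \ref{thm:Strassen} it then suffices to exhibit a monotone coupling $\hat{\Pr}$ of $X_1$ and $X_2$ in the sense of \Cref{def:monotonCoupling}: a joint law on $\SSS\times\SSS$ with the prescribed margins that is supported entirely on the ordered pairs $\{(A,B) : B\preceq_{\SSS} A\}$.

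To build it, I would realize the coupling as the stationary distribution of an irreducible, reversible continuous-time Markov chain on the ordered state space $\Omega = \{(A,B)\in\SSS\times\SSS : B \preceq_{\SSS} A\}$. In our setting $\SSS = 2^{U}$ is already Boolean, so the chain can use single-coordinate updates indexed by $U$ (the general distributive case reduces to this by Birkhoff's representation theorem): for each $i\in U$ the chain proposes to insert or delete $i$ in $A$, in $B$, or in both. The rates are chosen so that, viewed alone, the $A$-coordinate runs a Glauber dynamics with stationary law $\mu_1$ and the $B$-coordinate runs a Glauber dynamics with stationary law $\mu_2$; the two coordinates are then coupled so that any move which would create an element of $B$ lying outside $A$ is suppressed in favour of a synchronized move (``delete from both'' or ``insert into both''). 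These synchronized moves are precisely what forbids the chain from ever leaving $\Omega$, so $B\preceq_{\SSS}A$ is preserved along every trajectory.

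The heart of the argument — and the step I expect to be the main obstacle — is verifying that the joint rates can be chosen nonnegative while still reproducing the correct single-coordinate stationary conditional probabilities for \emph{both} marginals simultaneously. This amounts to a flow-conservation (detailed-balance) identity on each ordered pair, and it is exactly here that criterion (\ref{holleyCriterion}) is used: the meet/join inequality relating $\mu_1(A\cap B)\mu_2(A\cup B)$ to $\mu_1(A)\mu_2(B)$ is the pointwise condition guaranteeing that the order-violating transitions can be assigned rate zero without disturbing either marginal, so the residual order-preserving rates remain nonnegative and still yield $\mu_1$ and $\mu_2$. Once consistency and nonnegativity of the rates are established, irreducibility on the connected poset $\Omega$ gives a unique stationary law $\hat{\Pr}$; by construction it has margins $\mu_1,\mu_2$ and is supported on $\{B\preceq_{\SSS}A\}$, hence is the desired monotone coupling, and Strassen's Theorem \ref{thm:Strassen} delivers the claimed inequality.

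As a fallback I would keep in reserve the purely combinatorial route: deduce (\ref{holleyCriterion}) $\Rightarrow$ dominance from the Ahlswede--Daykin ``four functions'' theorem, applied to $\mu_1$, $\mu_2$ and the functions built from their restrictions under meet and join. This avoids setting up any dynamics, but it merely relocates the same difficulty, since the four-functions inequality is itself proved by induction on $|U|$ whose inductive step again turns on a pointwise application of the hypothesis; I would prefer the coupling proof precisely because it reuses the Strassen machinery already available in the paper.
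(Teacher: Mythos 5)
The paper offers no proof of this theorem at all---it is imported from \cite{Holley74} and used as a black box---so the only fair comparison is with the classical argument, and your plan reproduces it almost verbatim: Holley's own proof (also the one in Grimmett's book) couples two single-site Glauber dynamics on the ordered pairs, suppresses order-violating moves in favour of synchronized ones, and extracts the monotone coupling from an invariant law, after which \Cref{thm:Strassen} finishes. The route is sound, and your Ahlswede--Daykin fallback is also standard. But the one load-bearing step is precisely the one you defer (``the step I expect to be the main obstacle''), and a blind proof must contain it. For the record it is short. On $\Omega=\{(A,B): B\subseteq A\}$ take, for each $x\in U$: if $x\notin A$, insert $x$ into both sets at rate $1$; if $x\in A\setminus B$, delete $x$ from $A$ at rate $\mu_1(A_x)/\mu_1(A^x)$ and insert $x$ into $B$ at rate $1$; if $x\in B$, delete $x$ from both at rate $\mu_1(A_x)/\mu_1(A^x)$ and from $B$ alone at rate $\mu_2(B_x)/\mu_2(B^x)-\mu_1(A_x)/\mu_1(A^x)$. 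Each marginal is then an autonomous Glauber chain for $\mu_1$, resp.\ $\mu_2$, every move preserves $B\subseteq A$, and the single sign condition is $\mu_1(A^x)\mu_2(B_x)\geq \mu_1(A_x)\mu_2(B^x)$ for $B\subseteq A$---the quotient inequality (the paper's \Cref{claim:quotientRule} in disguise), obtained from the Holley criterion applied to the pair $(A_x, B^x)$, since then $A_x\cup B^x=A^x$ and $A_x\cap B^x=B_x$.

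Three points your sketch uses silently would surface if you carried this out. First, an orientation problem: the computation above needs the criterion in the form $\mu_1(A\cup B)\,\mu_2(A\cap B)\geq \mu_1(A)\,\mu_2(B)$, whereas the printed inequality (\ref{holleyCriterion}) has $\mu_1$ and $\mu_2$ interchanged; taken literally it is the condition for $\mu_2$ to dominate $\mu_1$ (already on $U=\{x\}$ it reduces to $\mu_2(\{x\})\geq\mu_1(\{x\})$, contradicting the stated conclusion), so it is a typo in the theorem statement---note that the paper's own \Cref{claim:holley} in fact derives the correctly oriented form $\mu_1(A\cup B)\mu_2(A\cap B)\geq \mu_2(A)\mu_1(B)$. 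Your proof would have to use that transposed form; as sketched, your sign check would fail. Second, hypotheses: the rates require $\mu_1,\mu_2$ strictly positive (the paper arranges this in the proof of \Cref{lemma:vpullDominatesPull} by shrinking $U$), and the conclusion forces equal total masses (test $f\equiv\pm 1$), so your normalization to probability measures is necessary, not cosmetic; moreover your Birkhoff reduction is not free, since extending a measure by zero from a sublattice to $2^U$ destroys strict positivity, so one must work on the sublattice or perturb and pass to a limit. Third, a small repair that simplifies your argument: irreducibility and reversibility of the \emph{joint} chain need not hold (the $B$-only deletion rate can vanish) and are not needed---since $\Omega$ is finite and closed, some invariant law supported on $\Omega$ exists, and its margins are invariant for the two marginal chains, which \emph{are} irreducible by strict positivity, hence equal $\mu_1$ and $\mu_2$; that invariant law is already the desired monotone coupling.
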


\begin{lemma}
  \label{lemma:vpullDominatesPull}
  \roundvpull\ stochastically dominates \roundpull. 
\end{lemma}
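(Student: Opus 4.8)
The plan is to invoke Holley's inequality (Theorem~\ref{thm:holley}) with $\mu_1$ the law of $\Svpull$ and $\mu_2$ the law of $\Spull$ on the lattice $\SSS=2^U$ ordered by inclusion; verifying the Holley criterion in the orientation that makes $\mu_1$ the dominating measure then gives exactly that $\roundvpull$ stochastically dominates $\roundpull$. The substantive work is already contained in Lemma~\ref{lemma:uInformedCondition}, which supplies a per-node lower bound on the inclusion probability that is uniform over \emph{every} conditioning on the other nodes; this is precisely the input a Holley-type argument consumes. Writing $p_u:=d_S(u)/d(u)$, one first records that $\Spull$ informs each node independently with probability $p_u$, so $\mu_2$ is the product measure $\mu_2(B)=\prod_{u\in B}p_u\prod_{u\notin B}(1-p_u)$. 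After discarding the trivial coordinates (nodes with $p_u=0$ are never in $\Spull$ and may be deleted from $U$; nodes with $p_u=1$ are always in $\Spull$, hence, since Lemma~\ref{lemma:uInformedCondition} forces their \vpull-inclusion probability to be at least $1$, always in $\Svpull$ too), we may assume $0<p_u<1$, so that $\mu_2$ is strictly positive and log-modular.

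Using the product form of $\mu_2$, the identity $\mu_2(B)=\mu_2(A\cap B)\cdot\prod_{u\in B\setminus A}\frac{p_u}{1-p_u}$ reduces the Holley criterion for a pair $A,B$ to the single inequality
\[
\mu_1(A\cup B)\ \geq\ \mu_1(A)\prod_{u\in B\setminus A}\frac{p_u}{1-p_u}.
\]
I would prove this by enumerating $B\setminus A=\{u_1,\dots,u_k\}$ and inducting along the chain $A=A_0\subset A_1\subset\cdots\subset A_k=A\cup B$ with $A_j=A_{j-1}\cup\{u_j\}$. Since $A_j$ and $A_{j-1}$ differ only in the membership of $u_j$, letting $I_j\in\CCC_{u_j}^{\vpull}$ be the condition that pins every node other than $u_j$ to its status in $A_{j-1}$, we have $\mu_1(A_j)=\Pr(u_j\in\Svpull\mid I_j)\cdot\Pr(I_j)$ and $\mu_1(A_{j-1})=\Pr(u_j\notin\Svpull\mid I_j)\cdot\Pr(I_j)$. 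Lemma~\ref{lemma:uInformedCondition} gives $\Pr(u_j\in\Svpull\mid I_j)\geq p_{u_j}$, which rearranges (using $\mu_1(A_j)+\mu_1(A_{j-1})=\Pr(I_j)$) to the single-step bound $\mu_1(A_j)(1-p_{u_j})\geq \mu_1(A_{j-1})\,p_{u_j}$, i.e. $\mu_1(A_j)\geq \mu_1(A_{j-1})\tfrac{p_{u_j}}{1-p_{u_j}}$; chaining these $k$ steps yields the displayed inequality, and multiplying back by $\mu_2(A\cap B)>0$ recovers the criterion.

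Formulating each step as the product-free inequality $\mu_1(A_j)(1-p_{u_j})\geq \mu_1(A_{j-1})p_{u_j}$ is deliberate: it remains valid verbatim (both sides are $0$) when $\Pr(I_j)=0$, so no positivity assumption on $\mu_1$ is needed and the induction never divides by a possibly-vanishing mass. The main obstacle is not a new estimate but getting the scaffolding exactly right: aligning the orientation of the Holley criterion so that $\mu_1=\Svpull$ comes out on top, exploiting the product structure of $\mu_2$ to collapse the criterion to a one-node statement, and — most importantly — applying Lemma~\ref{lemma:uInformedCondition} with $I$ taken to be a \emph{complete} configuration of all remaining nodes. It is exactly the uniformity of that lemma over all such conditions that legitimizes the step-by-step bound; a merely marginal statement $\Pr(u\in\Svpull)\geq p_u$ would not suffice.
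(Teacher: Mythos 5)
Your proposal is correct and takes essentially the same route as the paper: both invoke Holley's inequality (Theorem~\ref{thm:holley}) with the single-site conditional bound of Lemma~\ref{lemma:uInformedCondition} as the only probabilistic input, restrict $U$ to nodes with $0<d_S(u)<d(u)$ to dispose of trivial coordinates, and verify the criterion by telescoping single-node steps along a chain up to $A\cup B$. Your reduction via the product form of the \pull measure to the one-measure inequality $\mu_1(A\cup B)\geq\mu_1(A)\prod_{u\in B\setminus A}\frac{p_u}{1-p_u}$ is just a repackaging of the paper's Quotient Rule (Claim~\ref{claim:quotientRule}), whose \pull-side ratio is likewise evaluated as the constant $\frac{p_x}{1-p_x}$ using the independence of \pull.
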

\begin{proof}
For the proof let $U$ be those uninformed nodes $u$ with  $0<d_S(u)<d(u)$ and consider the distributive lattice $(\SSS,\preceq_{\SSS})=(2^U,\subseteq)$. Every uninformed node $u$ which is not contained in this redefined $U$ has either no connection to $S$ at all, i.e., it is not informed in either process, or $d_S(u)=d(u)$ holds, i.e., it is informed with probability one in either process because also $\vpull$ executes one round of \pull for it. Hence it is sufficient to show stochastic domination of \Svpull\ over \Spull\ restricted to this redefined set $U$.
This choice of $U$ provides $0<\Pr(\Spull=A), \Pr(\Svpull=A)<1$ for all $A\in \SSS$ and we define the \emph{strictly positive} measures $\mu_1(F) :=\Pr\left(\Svpull\in F\right)$ and $\mu_2(F) :=\Pr\left(\Spull\in F\right)$ for $F\subseteq 2^U$.
 For $A$ in $\SSS=2^U$, $x\in U$ define $A^x:=A\cup \{x\}$ and $A_x:=A\backslash \{x\}$. The proofs of the follow claim  is based on Lemma \ref{lemma:uInformedCondition}.

\begin{restatable}[Quotient Rule]{claim}{quotientrule}
  \label{claim:quotientRule}
  \begin{align}
    \frac{\mu_1(A^{x})}{\mu_1(A_x)}\geq \frac{\mu_2(B^{x})}{\mu_2(B_x)}\text{ holds for all }A,B\in \SSS. \label{quotientRule}
  \end{align}
\end{restatable}
\begin{quotation}
At first note that for any $C \in \SSS$ it holds that
\begin{equation}
  \label{eq:claim:pull:constant}
  \Pr(\Spullz = C^x| \Spullz\in(\set{C^x}\cup \set{C_x}))=\Pr(x \in \Spullz) = \frac{d_S(x)}{d(x)}.
\end{equation}
This is true because in \pull\ $x$ is informed independently of what else is happening and due to the fact that we already condition on $\Spullz$ being either $C^x$ or $C_x$, hence the probability of $\Spullz$ being $C^x$ depends solely on $x$ being informed.
Second, recall that $\mu_1$ and $\mu_2$ are strictly positive measures, so $\mu_1(C),\mu_2(C)>0$ for every $C \in \SSS$, even $C=\emptyset,U$.

Let $A,B\in \SSS$ and $x\in U$. Then Lemma \ref{lemma:uInformedCondition} implies
\renewcommand{\qedsymbol}{\loweredqedbox}
\begin{align*}
  \pushQED{\qed}
  &\Rightarrow & \Pr(\Svpullz = A^x | \Svpullz\in (\set{A^x}\cup \set{A_x}))& \geq \Pr(\Spullz = A^x| \Spullz\in(\set{A^x}\cup \set{A_x}))\\
  & & \stackrel{(\ref{eq:claim:pull:constant})}{=}
  \Pr(\Spullz= B^x| \Spullz\in(\set{B^x}\cup \set{B_x}))\\
  & \Rightarrow  &\frac{\mu_1(A^x)}{\mu_1(A^x)+\mu_1(A_x)} &\geq \frac{\mu_2(B^x)}{\mu_2(B^x)+\mu_2(B_x)}\\
  & \Rightarrow  &\frac{\mu_1(A^x)+\mu_1(A_x)}{\mu_1(A^x)} &\leq \frac{\mu_2(B^x)+\mu_2(B_x)}{\mu_2(B^x)}\\
  & \Rightarrow  &\frac{\mu_1(A_x)}{\mu_1(A^x)} &\leq \frac{\mu_2(B_x)}{\mu_2(B^x)}\\
  & \Rightarrow  &\frac{\mu_1(A^x)}{\mu_1(A_x)}& \geq \frac{\mu_2(B^x)}{\mu_2(B_x)}.
  \qedhere
  \popQED
\end{align*}
\end{quotation}

\begin{restatable}{claim}{claimholley}
  \label{claim:holley}
  The quotient rule (\ref{claim:quotientRule}) implies that the Holley criterion is satisfied for $\mu_1$ and $\mu_2$.\footnote{The proof is adapted from \cite[chapter 2, page 24]{Grimmett06}.}
\end{restatable}
\begin{quotation}

Let $A,B\in \SSS$ and $C:=A\backslash B=\{c_1,\ldots,c_r\}$. The Holley criterion is trivially fulfilled if $A \subseteq B$; hence assume otherwise, which implies $r \geq 1$.
Write $C_s:=\{c_1,\ldots,c_s\}$ for $1 \leq s\leq r$. By a telescoping argument we obtain the following.
\renewcommand{\qedsymbol}{\loweredqedbox}
\begin{align*}
  \pushQED{\qed}
\frac{\mu_1(A\cup B)}{\mu_1(B)}& =\frac{\mu_1(B\cup C)}{\mu_1(B\cup C_{r-1})}\cdot \frac{\mu_1(B\cup C_{r-1})}{\mu_1(B\cup C_{r-2})}\cdot \cdots \cdot \frac{\mu_1(B\cup C_1)}{\mu_1(B)}
\intertext{Applying (\ref{quotientRule}) to each fraction we obtain}
& \geq \frac{\mu_2((A\cap B)\cup C)}{\mu_2((A\cap B)\cup C_{r-1})}\cdot \frac{\mu_2((A\cap B)\cup C_{r-1})}{\mu_2((A\cap B)\cup C_{r-2})}\cdot \cdots \cdot \frac{\mu_2((A\cap B)\cup C_1)}{\mu_2((A\cap B))}\\
& = \frac{\mu_2(A)}{\mu_2(A\cap B)}.\qedhere
\popQED
\end{align*}
\end{quotation}
The proof of Lemma \ref{lemma:vpullDominatesPull} then follows with  Claim \ref{claim:holley}, Theorem \ref{thm:holley} and the definition of the expected value of an increasing function $f:\SSS\rightarrow \RR$.
\hide{
\begin{enumerate}[a)]
\item \textbf{Single set closure property:}\label{cond:singleClosure} For all $M\subseteq U$ we have: 
\begin{align*}
\Pr\left(M\subseteq \Svpullz\right)=\Pr\left(\Svpullz\in \cl(\{M\})\right)\geq \Pr\left(\Spullz\in \cl(\{M\})\right)=\Pr\left(M\subseteq \Spullz\right).
\end{align*}
\begin{proof}
Let $M=\{x_1,\ldots,x_m\}$ and conclude
\begin{align*}
\Pr\left(M\subseteq \Svpullz\right) & \stackrel{\hphantom{(\ref{eqn:singleNode})}}{=} \prod_{i=1}^m \Pr\left(\left.x_i \in \Svpullz\right| x_1,\ldots,x_{i-1}\in \Svpullz\right) \\
& \stackrel{(\ref{eqn:singleNode})}{\geq} \prod_{i=1}^m \Pr\left(\left.x_i \in \Spullz\right| x_1,\ldots,x_{i-1}\in \Spullz\right) \\
& \stackrel{\hphantom{(\ref{eqn:singleNode})}}{=} \Pr\left(M\subseteq \Svpullz\right).
\end{align*}
\end{proof}
\item \label{quotietentens}
\textbf{Extended quotient rule:} Let $A\subseteq 2^U$ be any set of sets and $x\in U$ such that $x\notin A'$ for all $A'\in A$\yannic{Can we relinquish the condition $x\notin A'$ for some of the $A'$? That would help.... I assume we cannot}. Let $B:=A\dcup'\{x\}=\{A' \cup \{x\} \mid A'\in A\}$.

Then the following holds:
\begin{align*}
\frac{\Pr\left(\Svpullz\in B\right)}{\Pr\left(\Svpullz\in A\right)} \geq \frac{\Pr\left(\Spullz\in B\right)}{\Pr\left(\Spullz\in A\right)}. 
\end{align*}
We can also (probably) formulate a version with $x\in U, F,H\subseteq 2^U$ and $x\notin A$ for all $A\in F$ or $A\in H$ ...
\begin{align}
\frac{\Pr\left(\Svpullz\in F^x\right)}{\Pr\left(\Svpullz\in F\right)} & \geq \frac{\Pr\left(\Spullz\in H^x\right)}{\Pr\left(\Spullz\in H\right)}\\
 F^x & =\{A\cup \{x\}| A\in F\}\\
 H^x & =\{A\cup \{x\}| A\in H\}
\end{align}

\item \textbf{Simple quotient rule:} Let $M\subseteq U$ and $x\notin M$. Then the following holds
\begin{align*}
\frac{\Pr\left(\Svpullz=M\cup\{x\}\right)}{\Pr\left(\Svpullz=M\right)}\geq \frac{\Pr\left(\Spullz=M\cup\{x\}\right)}{\Pr\left(\Spullz=M\right)}.
\end{align*}

\begin{proof}
 The result follows with the extended quotient rule with $A=\{M\}$.
\end{proof}
\item \label{cond:proptop} \textbf{Propagation to the top:} Let $M\subseteq U$. If \mbox{$\Pr\left(\Svpullz=M\right)\geq \Pr\left(\Spullz=M\right)$} then we have 
\begin{align*}
\Pr\left(\Svpullz=M'\right)\geq\Pr\left(\Spullz=M'\right) \text{ for all }M\subseteq M'.
\end{align*}
\begin{proof}
The result follows with the simple quotient rule.
\end{proof}
\item \label{cond:propbottom} \textbf{Prop. to the bottom:} Let $M\subseteq U$. If \mbox{$\Pr\left(\Svpullz=M\right)< \Pr\left(\Spullz=M\right)$} then we have 
\begin{align*}
\Pr\left(\Svpullz=M'\right)<\Pr\left(\Spullz=M'\right) \text{ for all }M'\subseteq M.
\end{align*}
\begin{proof}
The result follows with the simple quotient rule.
\end{proof}
\yannic{We do not need the propagation to the bottom anymore}
\item \textbf{Extended Propagation to the top:}
For all $F,H\subseteq 2^U$ with $\mu(F)\geq \mu'(H)$ we have
$\mu(F^M)\geq \mu'(H^M)$ for all $M\subseteq U$ s.t. $M\cap A=\emptyset$ for all $A\in F,H$.
\item \textbf{Extended Propagation to the bottom:}
For all $F,H\subseteq 2^U$ with $\mu(F)<\mu'(H)$ we have
$\mu(F^{-M})\geq \mu'(H^{-M})$ for all $M\subseteq U$ s.t. $M\cap A=\emptyset$ for all $A\in F,H$.
\end{enumerate}
}
\end{proof}

\begin{proof}[Proof of Theorem \ref{thm:rvpull}]
  The proof is a direct combination of Lemma \ref{lemma:rpullDominatesVpull} and Lemma \ref{lemma:vpullDominatesPull}.
\end{proof}


\subsection{The Round-by-Round Analysis is Tight}

\begin{lemma}
  \label{lemma:onesteptight}
  The time bound $T=\bigO(\ddlogn)$ from Theorem \ref{thm:rvpull} is tight.
\end{lemma}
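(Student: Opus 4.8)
The plan is to reduce tightness to a statement about a \emph{single} target node $u$ in a carefully built instance: I claim there is a node $u$ with $\PrOf{u\in S_1^{\pull}}=1$ for which $T=o\big(\ddlogn\big)$ rounds of \rpull fail to inform $u$ with probability strictly larger than $n^{-c}$ (the $c>1$ from the definition of \whp monotone coupling). This already defeats any \whp monotone coupling for such $T$: the event $\{u\in S\}$ is increasing (upward closed), and in any coupling $\{u\in S_1^{\pull},\,u\notin S_T^{\rpull}\}$ is contained in the bad event, so by \Cref{thm:Strassen} and the definition of \whp stochastic dominance one must have $\PrOf{u\in S_T^{\rpull}}\geq \PrOf{u\in S_1^{\pull}}-n^{-c}$. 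With $\PrOf{u\in S_1^{\pull}}=1$ this is contradicted by $\PrOf{u\notin S_T^{\rpull}}>n^{-c}$, so no coupling as in \Cref{thm:rvpull} can exist for $T=o\big(\ddlogn\big)$.

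For the construction I fix $\delta=\omega(\log n)$ and $\Delta\geq\delta$ with $\delta\cdot\Delta=\bigOOf{n}$ (e.g.\ $\delta=\log^2 n$, $\Delta=n/\log^2 n$, so that $\frac{\Delta}{\delta}$ is a genuine slowdown). I attach $u$ to $\delta$ \emph{informed} hubs $v_1,\dots,v_\delta$; since then all neighbours of $u$ are informed, a single \pull round informs $u$ with probability $1$. Each hub $v_j$ receives its own \emph{private} pool of $\Delta-1$ uninformed competitors (so $\deg(v_j)=\Delta$), and each competitor is padded with edges to inert uninformed nodes until its degree is exactly $\delta$, forcing it to request its hub with probability $1/\delta$. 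The pools are disjoint, so each hub informs at most one competitor per round.

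The analysis then has two ingredients. First I would show the congestion persists: each hub drains at most one competitor per round, so after $T=o\big(\ddlogn\big)\ll\Delta$ rounds every pool still holds $(1-o(1))(\Delta-1)$ uninformed competitors; conditioned on this, the number of competitors requesting a fixed hub in a round is $\mathrm{Bin}\big((1-o(1))(\Delta-1),1/\delta\big)$ with mean $\ThetaOf{\Delta/\delta}$, and a Chernoff bound makes it $\OmegaOf{\Delta/\delta}$ except with probability $e^{-\OmegaOf{\Delta/\delta}}\ll \delta/\Delta$. Hence, whichever hub $u$ contacts, it is served with probability at most $\bigOOf{\delta/\Delta}$ in that round. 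A union bound over the $T$ rounds and the $\delta$ pools controls the rare events, so $u$ fails in all $T$ rounds with probability at least $(1-\bigOOf{\delta/\Delta})^T-n^{-\omega(1)}$. For $T=\eps\cdot\ddlogn$ this is at least $n^{-\bigOOf{\eps}}-n^{-\omega(1)}$, which exceeds $n^{-c}$ once $\eps$ is a small enough constant, finishing the reduction.

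The hard part is reconciling the two requirements that pull in opposite directions. I need $\PrOf{u\in S_1^{\pull}}=1$ (a merely constant \pull-probability would only force $\bigOOf{\Delta/\delta}$ \rpull rounds, killing the $\log n$ factor, since \rpull reaches any constant probability in $\bigOOf{\Delta/\delta}$ rounds), and this forces \emph{all} of $u$'s neighbours to be informed, i.e.\ $\delta$ informed hubs. At the same time every such neighbour must stay too congested to serve $u$ for the full $\OmegaOf{\ddlogn}$ rounds. A single shared pool or a complete-bipartite gadget is drained in $\bigOOf{\Delta/\delta}$ rounds (the $\delta$ hubs then remove competitors at rate $\delta$ per round), and enlarging a pool to last longer forces some hub degree above $\Delta$; giving each hub a \emph{private} pool of size $\ThetaOf{\Delta}$ together with $\delta=\omega(\log n)$ is exactly what keeps the congestion alive without violating the degree bound. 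The one genuinely delicate point is that I need an \emph{upper} bound on $u$'s per-round success, so the requester count must be large with probability $1-\smallOOf{\delta/\Delta}$ rather than merely in expectation — and this is precisely where $\delta=\omega(\log n)$ is used, as it makes the Chernoff failure $e^{-\ThetaOf{\Delta/\delta}}$ negligible compared with $\delta/\Delta$.
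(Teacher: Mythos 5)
Your overall strategy is the same as the paper's: build a graph in which a target node has \emph{all} its neighbours informed (so one round of \pull succeeds with probability $1$), while every such neighbour is kept congested for $\OmegaOf{\ddlogn}$ rounds by a large, slowly-draining reservoir of low-degree uninformed requesters, so that the target's per-round success probability under random \rpull is $\bigOOf{\delta/\Delta}$ and its failure probability after $T=\eps\cdot\ddlogn$ rounds is $n^{-\bigOOf{\eps}}>n^{-c}$. Your reduction to defeating the \whp monotone coupling via the increasing event $\set{u\in S}$ is sound and matches the paper's (implicit) reading of tightness, and your congestion analysis — drain rate at most one competitor per hub per round, Binomial requester count with mean $\ThetaOf{\Delta/\delta}$, Chernoff failure $e^{-\OmegaOf{\Delta/\delta}}\ll\delta/\Delta$ — parallels the paper's computation, which uses $k=n^{1/5}$, a shared hub set $B$ in a complete bipartite graph with the targets $A$, and $k^2$ cliques of size $k$ attached to each hub.

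There is, however, one concrete gap in your construction: the ``inert uninformed nodes'' used to pad each competitor to degree exactly $\delta$ are left unspecified, and no choice of them works as stated. If each inert node is a private low-degree pendant, the graph's actual minimum degree drops to $1$, so the true $\Delta/\delta$ of your instance is $\Delta$; your argument then only yields a lower bound of $\Omega\big(\frac{\Delta}{\log^2 n}\log n\big)$ rounds against an upper bound of $\bigOOf{\Delta\log n}$ from \Cref{thm:rvpull}, and tightness is not established for that graph. If instead the inert nodes are shared among competitors so as to have degree at least $\delta$, they cease to be inert: once a competitor is informed, the rumor can percolate competitor $\to$ inert node $\to$ competitor, the informed set inside a pool can grow multiplicatively (every informed node serves one request per round), and the pool drains in far fewer than $\Delta$ rounds, destroying exactly the congestion your upper bound on $u$'s per-round success requires. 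The paper resolves this tension with its clique gadget: each competitor is the unique connector of a private clique of size $\delta$, so every node has degree $\ThetaOf{\delta}$ or $\ThetaOf{\Delta}$, yet clique-internal spreading can never create new requesters at a hub, keeping the drain rate at one per hub per round. Substituting that gadget for your padding repairs the proof; the remaining differences (private hubs per target versus the paper's shared bipartite hubs, and your parameters $\delta=\log^2 n$, $\Delta=n/\log^2 n$ versus the paper's $\delta\approx n^{1/5}$, $\Delta=2n^{2/5}$) are immaterial.
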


\begin{proof}
  In order for $T$ rounds of \rpull to quasi-dominate $1$ round of \pull, any node $v$ must get informed in $T$ rounds of \rpull with at least the same probability as within one round of \pull (or \whp, if the latter probability equals one). We construct a graph $G$ for which at least $T=\Omega(\ddlogn)$ rounds of \rpull are necessary to guarantee this. A picture depicting the graph can be found in Appendix \ref{app:pictures}.


We partition set $V$ into $V=A \dcup B \dcup T_{1,1} \dcup \dots \dcup T_{k^2,k^2}$, where $A=\set{a_1, \dots, a_{k^2}}$, $B=\set{b_1,\dots,b_{k^2}}$ and $k=n^{1/5}$. $A$ and $B$ form a complete bipartite graph with edges running between $A$ and $B$. For each $j \in [k^2]$, node $b_i$ is connected to \emph{one} node $t_{i,j} \in T_{i,j}$. Each $T_{i,j}$ forms a complete graph of size $k$.
In this graph, $\delta=k-1$ (acquired in $T_{i,j}$) and $\Delta=2k^2$ (nodes in $B$), and therefore $\Delta/\delta \in \Theta(k)$.
The total size of the graph is $|V| = n + o(n)$. Initially, we let $S_0 = B$.

In this graph, within one round of \pull, all nodes of $A$ are informed with probability $1$. 
Now, consider the same graph after $m \leq k^2/2$ rounds of \rpull and let us assume that some node $a \in A$ is still uninformed. It requests in this round from some node $b_i$. Let $X_i$ be the number of requests at $b_i$. Within $m$ rounds, each node $b_i$ managed to inform at most $m$ of its neighbors from $\NB_i:=\set{t_{i,1}, \dots, t_{i,m}}$. Since $m\leq k^2/2$, at least half of all nodes in $\NB_i$ are still uninformed and thus, since they have degree $k$, 
$\E[X_i]\geq k/2$.
Applying Chernoff, we get that \whp, 
$X_i \geq k/4$.
In this scenario for $a$ the probability to be chosen over one of its competitors is at most $4/k$, regardless of $m$, and therefore, 
$\Pr(a \in S^\rpull_m) \leq 1-\big(\frac{4}{k}\big)^m$.
For this to exceed $1-1/n$, $m$ has to be in $\Theta\big(\ddlogn\big)$.
\end{proof}


\section{Conclusions}

Lemma \ref{lemma:onesteptight} and Theorem \ref{thm:rvpull} show that to simulate one round of \pull, $\Theta\big(\ddlogn\big)$ rounds of \rpull are required. However, in case one wants stochastical dominance (\whp) over $\TTT > 1$ rounds of \pull, the lower bound proof of Lemma \ref{lemma:onesteptight} does not hold anymore. We believe that for $\TTT = \Omega(\log n)$, on any graph $G$ and any set of initially informed nodes $S \subseteq V$, $\bigO\big(\TTT\big(\ddelta + \log n\big)\big)$ 
or maybe even $\bigO\big(\TTT\big(\ddelta\big)+ \log n\big)$
rounds of \rpull suffice to stochastically dominate $\TTT$ rounds of
\pull. That proving this assumption might be a challenging task is underlined by a similar conjecture in \cite{AcanMehrabianWorlmald14}, in which the authors do a coupling of synchronous and asynchronous \pushpull. They obtain a similar multiplicative $\bigO(\log n)$ factor and also conjecture 
that it can be improved to an additive $\bigO(\log n)$ term. 

A possible alternative restriction of the \pushpull protocol could
be given by the following algorithm. In each round, every node requests from an outgoing
neighbor chosen uniformly at random. At each node, one of the incoming
requests is chosen (e.g., uniformly at random) and a connection to the
requesting node is established. Finally, over all established links
between an informed and an uniformed node, the uninformed node learns
the rumor. Note that unlike in the restricted \pushpull variant
described in our paper, here, also two informed nodes or two
uninformed nodes could be paired. Such a \pushpull variant can be
analyzed in an analogous way to our analysis of the \rpull protocol
and it can be shown that $\bigO\big(\frac{\Delta}{\delta} \log n\big)$
rounds of this algorithm stochastically dominate a single round of the
regular \pushpull protocol.


\bibliographystyle{habbrv}
\bibliography{references}


\appendixpage
\appendix
\section{Pictures}
\label{app:pictures}

\subsection{Picture for Lemma \ref{lemma:separation}, Figure \ref{fig:separation}}
\label{section:pictureForSeparation}
\begin{figure}[htb]
  \centering
  \includegraphics[width=1\textwidth]{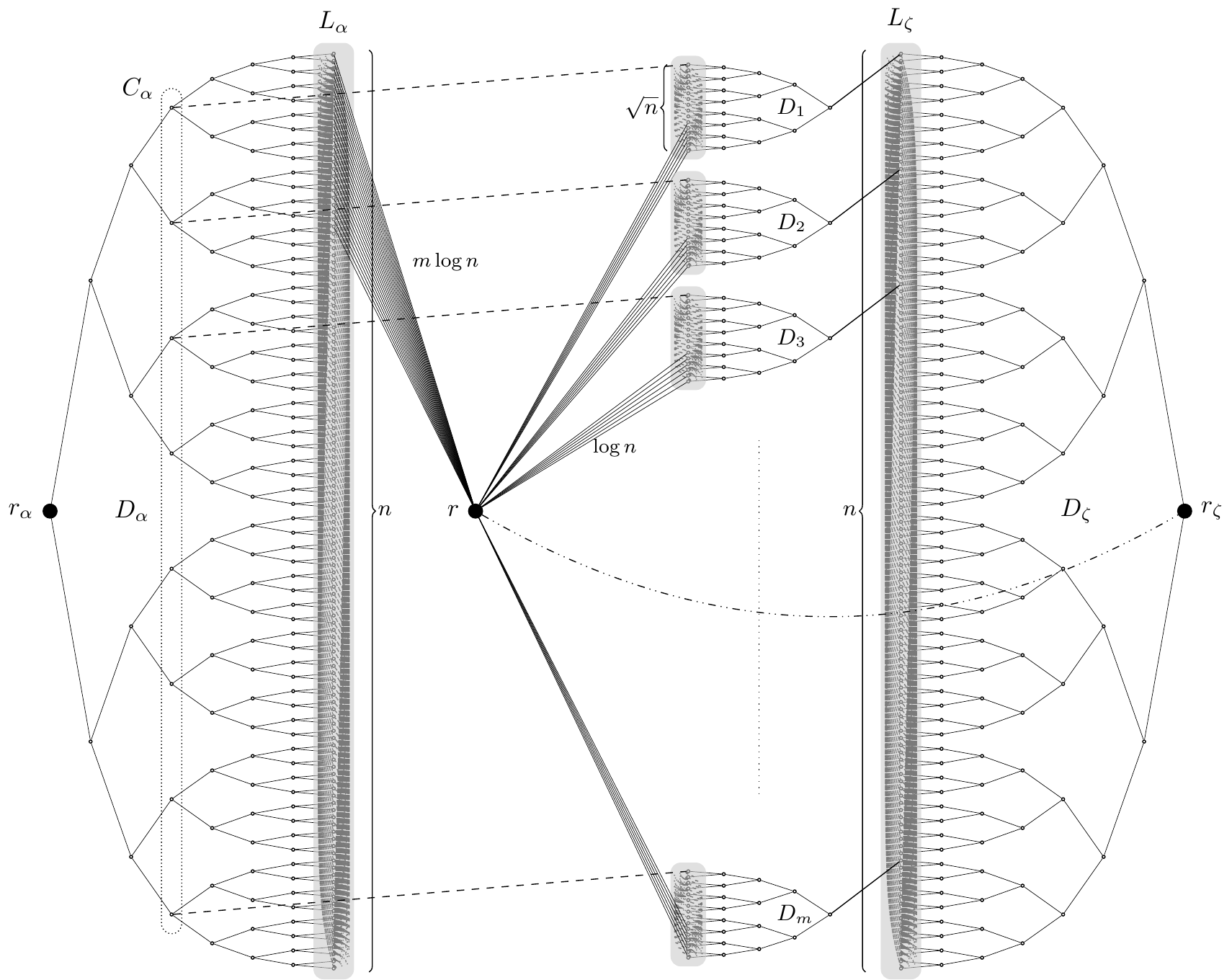}
  \caption{Picture for Lemma \ref{lemma:separation}, proving that random and adversarial \rpull have exponentially different running times on general graphs. Grey areas indicate fully connected parts of the graph.}
  \label{fig:separation}
\end{figure}
  	In random \rpull the node $r_{\zeta}$ learns the rumor within $\bigO(\log^2n)$ rounds and can spread the information through the graph in polylogarithmic time. 
		
		In adversarial \rpull the adversary prevents $r_{\zeta}$ from learning the rumor by always disseminating the rumor to one of the requesting nodes of in $D_1,\ldots,D_m$ in every round. We can show that the number of informed $D_i$s grows slowly and hence such requests exist \whp as long as no node in $D_\zeta$ is informed. Also, with only few $D_i$s informed, due to their high degrees, leaf nodes in $L_\zeta$ are unlikely to request from a $D_i$ containing the rumor, and hence the progress of rumor propagation is stalled.

\subsection{Picture for Lemma \ref{lemma:onesteptight}, Figure \ref{fig:onesteptight}}
\begin{figure}[htb]
  \centering
  \includegraphics[width=0.85\textwidth]{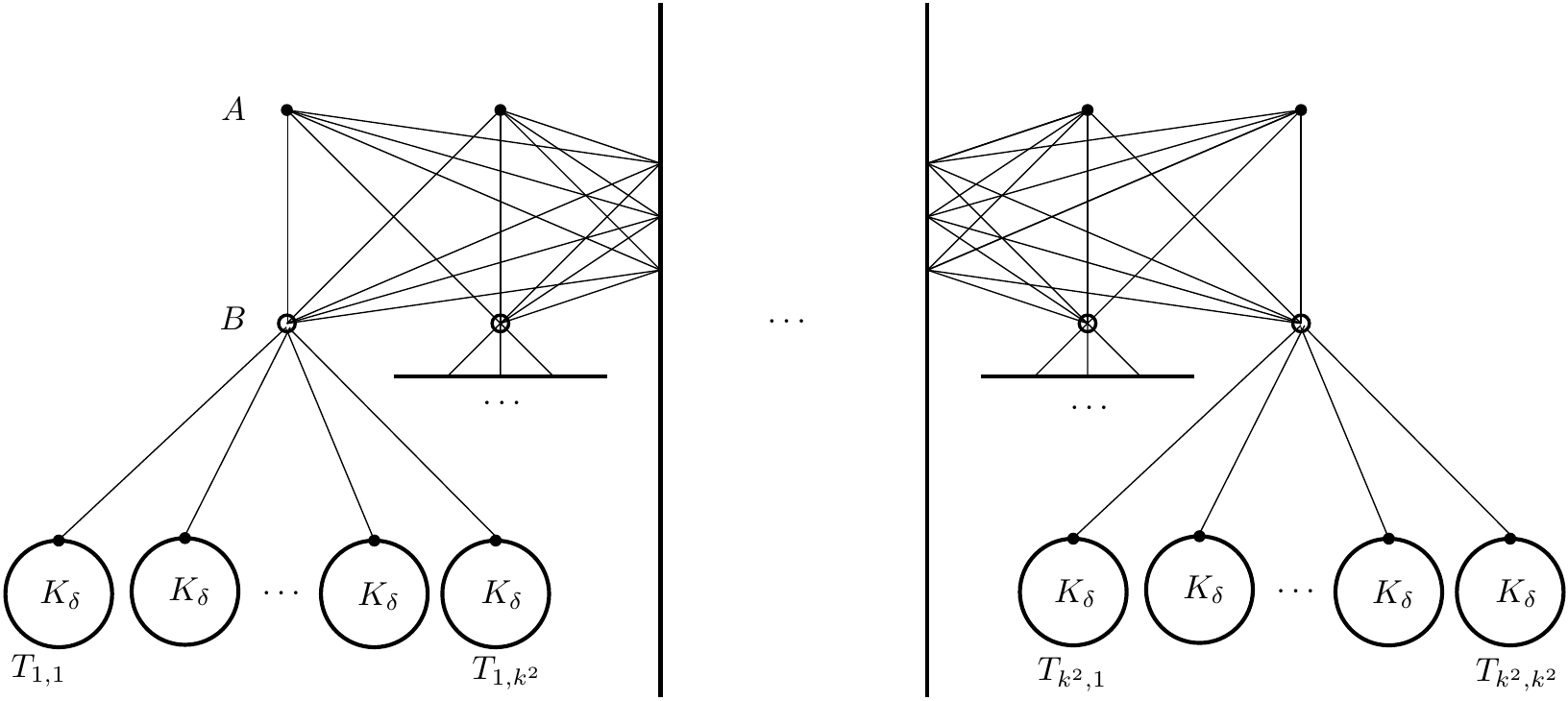}
  \caption{Picture for Lemma \ref{lemma:onesteptight}, proving that $\Omega(\ddlogn)$ rounds of \rpull are necessary to simulate one round of \pull.}
  \label{fig:onesteptight}
\end{figure}
  All not filled circular nodes in the bipartite graph in the top ($B$) have the rumor at the start of the execution. In one round of \pull all filled nodes in the bipartite graph ($A$) learn the rumor with probability one. In random \rpull $\Omega(\frac{\Delta}{\delta}\log n)$ rounds are necessary to inform these nodes, because 
for each of the nodes in $A$ the probability to be informed in one round of \rpull is in $\Theta(1/k)$. This is due to the high number of requests from nodes in $T_{i,j}$ each round to the informed nodes in $B$.


\section{Chernoff Statement for Lemma \ref{lemma:treefull}}
\label{appendix:chernoff}
\begin{lemma}
\label{thm:geometricchernoff}
 Let $X_1,X_2, \dots, X_n$ be independent geometric random variables with $X_i\sim Geo(p_i)$ for \mbox{$i=1,\ldots,n$} and \mbox{$0<p_1 \leq p_2 \leq ... < 1$}. Let $X=\sum_{i=1}^n X_i$.
  Then
\begin{align}
\Pr\left(X>3(\mu+t)\right)\leq e^{-\frac{p_1}{2}\mu-p_1t}.
\end{align}
\end{lemma}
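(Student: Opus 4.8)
The plan is to run the standard exponential-moment (Chernoff) method, where the only real work lies in bounding the moment generating function of a geometric variable sharply enough. Throughout, $\mu=\E[X]=\sum_{i=1}^{n}1/p_i$ is the mean. For a geometric $X_i\sim Geo(p_i)$ supported on $\{1,2,\dots\}$ one has $M_i(s):=\E[e^{sX_i}]=\frac{p_ie^{s}}{1-(1-p_i)e^{s}}$, which is finite exactly when $e^{s}<1/(1-p_i)$; since $p_1\le p_i$ for all $i$, the binding constraint is $e^{s}<1/(1-p_1)$. By independence and Markov's inequality, for every admissible $s>0$ and every threshold $a$,
\[
\Pr(X>a)\le e^{-sa}\prod_{i=1}^{n}M_i(s).
\]
I would apply this with $a=3(\mu+t)$ and a carefully chosen $s$.

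The key step is an MGF estimate that keeps $\mu$ visible while paying the rate $p_1$. Writing $z:=e^{s}-1\in(0,p_1)$ we have $M_i(s)=\frac{1+z}{1-(1-p_i)z/p_i}$, and a short computation gives $(1+z)(1-z/p_i)=\bigl(1-(1-p_i)z/p_i\bigr)-z^2/p_i\le 1-(1-p_i)z/p_i$, so dividing yields the clean bound $M_i(s)\le \tfrac{1}{1-z/p_i}$ (valid since $z<p_i$). Taking the product and then logarithms, I would invoke convexity of $x\mapsto-\ln(1-x)$ on $[0,z/p_1]$, so that $-\ln(1-x)\le \tfrac{-\ln(1-z/p_1)}{z/p_1}\,x$ applied at $x=z/p_i$ gives
\[
\ln\prod_{i=1}^{n}M_i(s)\;\le\;\bigl(-\ln(1-z/p_1)\bigr)\,p_1\sum_{i=1}^{n}\tfrac1{p_i}\;=\;-\,p_1\mu\ln\!\bigl(1-z/p_1\bigr).
\]

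I would then fix $z=p_1/2$, i.e.\ $s=\ln(1+p_1/2)$, which is admissible and makes the product bound equal to $p_1\mu\ln 2$. Plugging everything into the Chernoff bound leaves the exponent $-3s(\mu+t)+p_1\mu\ln 2$, and it remains to verify two elementary one-variable inequalities on $p_1\in(0,1)$: first $\ln(1+p_1/2)\ge p_1/3$ (so that the $t$-part gives $-3s\le-p_1$), and second $3\ln(1+p_1/2)\ge(\ln 2+\tfrac12)p_1$ (so that the $\mu$-part gives $-3s+p_1\ln 2\le-\tfrac{p_1}{2}$). Both follow by showing that the corresponding difference function vanishes at $0$ and stays nonnegative on $(0,1)$ via a sign analysis of its derivative. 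Combining them yields exponent $\le-\tfrac{p_1}{2}\mu-p_1t$, which is the claimed bound.

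The main obstacle is sharpness: the factor $3$ in the threshold together with the rate $\tfrac{p_1}{2}$ leaves only a narrow margin, and the two elementary inequalities above are essentially tight as $p_1\to1$. In particular the cruder estimate $-\ln(1-z/p_i)\le \tfrac{z/p_i}{1-z/p_1}$ (which produces a coefficient $1$ in place of $\ln 2\approx0.69$) is \emph{not} good enough, so using the convexity/chord bound on the whole product is exactly what makes the constants fit. I would also have to keep the admissibility condition $z<p_1$ in force throughout, and to check that lower-bounding $s=\ln(1+p_1/2)$ only costs a constant factor (not an unbounded one) as $p_1\to1$, which is what prevents the $p_1\mu\ln 2$ term from overwhelming $3s\mu$ in that regime.
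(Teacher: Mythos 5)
Your proof is correct. I verified the pieces: the domination $M_i(s)\le (1-z/p_i)^{-1}$ follows from $(1+z)(1-z/p_i)=1-(1-p_i)z/p_i-z^2/p_i$; the chord bound for the convex function $-\ln(1-x)$ legitimately extracts $-p_1\mu\ln(1-z/p_1)$; the choice $z=p_1/2$ is admissible since $(1+p_1/2)(1-p_i)<1$; and both elementary inequalities hold on $(0,1)$ — $\ln(1+p/2)\ge p/3$ because $\frac{1}{2+p}\ge\frac13$ for $p\le 1$, and $3\ln(1+p/2)\ge(\ln 2+\frac12)p$ because the difference vanishes at $0$, is unimodal, and is still positive at $p=1$ (margin $3\ln\frac32-\ln 2-\frac12\approx 0.023$).

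This is the same Chernoff skeleton as the paper, but with a genuinely different key estimate and tilt. The paper fixes $\gamma=-\ln(1-p_1/3)$, i.e.\ $z=\frac{p_1}{3-p_1}$, computes the geometric MGF \emph{exactly} as $\E[e^{\gamma X_i}]=1+\frac{p_1/3}{p_i-p_1/3}$, and bounds it additively by $1+\frac{p_1}{2p_i}$ — an inequality that reduces, after cross-multiplying, to $p_1\le p_i$, so no calculus is needed anywhere: $1+x\le e^x$ gives $\prod_i\E[e^{\gamma X_i}]\le e^{\frac{p_1}{2}\mu}$ and $1-x\le e^{-x}$ turns the prefactor $(1-p_1/3)^{3(\mu+t)}$ into $e^{-p_1(\mu+t)}$, which finishes the proof in one line. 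You instead dominate each geometric MGF by an exponential one and pull out the factor $p_1\mu$ via convexity, accepting a weaker product constant ($\ln 2$ in place of $\frac12$) that you buy back with the larger tilt $s=\ln(1+p_1/2)$; the price is the two one-variable inequalities that must be checked by a derivative sign analysis, and as you note the second one is tight near $p_1=1$. What each approach buys: the paper's parameterization is tuned so that all verifications are algebraically trivial, while your exponential-domination-plus-chord argument makes the structure more transparent and would transfer unchanged to any independent family whose MGFs are dominated by exponential MGFs with rates $p_i\ge p_1$.
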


\begin{proof}
Let $\gamma=-\ln(1-\frac{p_1}{3})>0$. Because of $p_1\leq p_i$ this implies $e^{\gamma}\cdot(1-p_i)<1$ for all $i=1,\ldots,n$. We need this condition at $(*)$ in the proof of the following claim.
\begin{claim}\label{claimChernoff}
$\E[e^{\gamma X_i}]\leq 1+\frac{p_1}{2p_i}.$
\end{claim}
\begin{quotation}
With a straight forward calculation one obtains
\renewcommand{\qedsymbol}{\loweredqedbox}
\begin{align*}
  \E\left[e^{\gamma X_i}\right] & = \sum\limits_{k=1}^{\infty}\Pr\left(X_i=k\right)e^{\gamma}
  =p_ie^{\gamma}\sum\limits_{k=1}^{\infty}\left((1-p_i)e^{\gamma}\right)^{k-1}\\
  &\stackrel{(*)}{=} \frac{p_ie^{\gamma}}{1-(1-p_i)e^{\gamma}}
  =\frac{p_i}{e^{-\gamma}-1+p_i}\\
  & = 1+ \frac{1-e^{-\gamma}}{e^{-\gamma}+p_i-1}
  = 1+\frac{1-(1-\frac{p_1}{3})}{1-\frac{p_1}{3}+p_i-1}\\
  & = 1+ \frac{\frac{p_1}{3}}{p_i-\frac{p_1}{3}}
  \leq 1+ \frac{p_1}{2p_i}.
  \qedhere
\end{align*}
\end{quotation}
Since $x\mapsto e^{\gamma x}$ is an increasing function for $\gamma>0$ we obtain
\begin{align*}
  \Pr\left(X>3(\mu+t)\right) & =\Pr\left(e^{\gamma X}>e^{\gamma 3(\mu+t)}\right) & \text{(Markov)}\\
  & \leq e^{-3\gamma(\mu+t)}\prod\limits_{i=1}^n\E\left[e^{\gamma X_i}\right] & \text{(claim 1)}\\
  & \leq e^{-3\gamma(\mu+t)}\prod\limits_{i=1}^n\left(1+\frac{p_1}{2p_i}\right) & (1+x\leq e^{x}, x\in\RR)\\
  & \leq \left(1-\frac{p_1}{3}\right)^{3(\mu+t)}e^{\sum_{i=1}^n\frac{p_1}{2p_i}} & \left(\mu=\sum_{i=1}^n\frac{1}{p_i}\right)\\
  & = \left(1-\frac{p_1}{3}\right)^{3(\mu+t)}(e^{\frac{p_1}{2}\mu}) & (1-x\leq e^{-x}, x\in\RR)\\
  & \leq e^{p_1(\mu+t)+\frac{p_1}{2}\mu}= e^{-\frac{p_1}{2}\mu-p_1t},
\end{align*}
which proves the actual result.\qed
\end{proof}





\end{document}